\newcommand{\x}{{\bf x}}
\newcommand{\y}{{\bf y}}
\newcommand{\e}{{\bf e}}
\newcommand{\0}{{\bf 0}}
\newcommand{\be}{{\bf b}}
\newcommand{\wel}{{\text{\sc sw}}}
\newcommand{\opt}{{\text{\sc opt}}}
\newcommand{\argopt}{{\text{\sc argopt}}}
\newtheorem{theorem}{Theorem}[section]
\newtheorem{corollary}[theorem]{Corollary}
\newtheorem{lemma}[theorem]{Lemma}
\newtheorem{claim}[theorem]{Claim}
\newtheorem*{remark*}{Remark}
\newtheorem*{notation*}{Notation}
\newtheorem*{observation*}{Observation}
\newtheorem*{theorem*}{Theorem}
\newtheorem*{definition*}{Definition}
\newtheorem*{axiom*}{Axiom}
\newtheorem*{claim*}{Claim}
\newtheorem*{lemma*}{Lemma}
\title{Two-Buyer Sequential Multiunit Auctions\\ with No~Overbidding}
\author{Mete \c{S}eref Ahunbay\thanks{Dept. of Mathematics and Statistics, McGill University: {\tt mete.ahunbay@mail.mcgill.ca}}
\and Brendan Lucier\thanks{Microsoft Research New England: {\tt brlucier@microsoft.com}}  
\and Adrian Vetta\thanks{Dept. of Mathematics \& Statistics and School of Computer Science, McGill University: {\tt adrian.vetta@mcgill.ca}}}
\begin{document}

\maketitle

\begin{abstract}
We study equilibria in two-buyer sequential second-price (or first-price) auctions for identical goods.  
Buyers have weakly decreasing incremental values, and we make a behavioural no-overbidding assumption: the buyers 
do not bid above their incremental values. 
Structurally, we show equilibria are intrinsically linked to a greedy bidding strategy. We then prove three
results.
First, any equilibrium consists of three phases: a competitive phase, a competition reduction phase 
and a monopsony phase. In particular, there is a time after which one buyer exhibits
monopsonistic behaviours.
Second, the declining price anomaly holds: prices weakly decrease over time at any 
equilibrium in the no-overbidding game, a fact previously known for equilibria with overbidding. 
Third, the 
price of anarchy of the sequential auction is exactly $1 - 1/e$.
\end{abstract}

\section{Introduction}\label{sec:intro}

In a two-buyer multiunit sequential auction a collection of $T$ identical items are sold one after another.
This is done using a single-item second-price (or first-price) auction in each time period.
Due to their temporal nature, equilibria in sequential auctions are extremely complex and somewhat 
misunderstood objects~\cite{GS01,PST12,PNV19}. This paper aims to provide a framework
 in which to understand two-buyer sequential auctions. Specifically, we study equilibria in the auction setting 
where both {\em duopsonists} have non-decreasing incremental valuation functions under the natural assumption 
of no-overbidding. Our main technical contribution is an in-depth 
analysis of the relationship between equilibrium bidding strategies and a greedy
behavioural strategy. This similitude allows us to provide a characterization
of equilibria with no-overbidding and to prove three results.

One, any equilibrium in a two-buyer sequential auction with no-overbidding induces three phases:
a competitive phase, a competition reduction phase and a monopsony phase.
In particular, there is a time after which one of the two duopsonists will behave as a {\em monoposonist}.
Here monopsonistic behaviour refers to the type of strategies expected from a buyer with the ability to clinch the 
entire market. Intriguingly, we show that this fact does not hold
for equilibria where overbidding is permitted.

Two, the {\em declining price anomaly} holds for two-buyer sequential auctions with no-overbidding; 
the price weakly decreases over time for any equilibrium in the auction. This result shows that the
seminal result of Gale and Stegeman~\cite{GS01},
showing the declining price anomaly holds for equilibria in two-buyer sequential auctions with overbidding permitted,
carries over to equilibria in auctions with no-overbidding.  Notably, this declining price anomaly can fail to hold for three 
or more buyers, even with no-overbidding~\cite{PNV19}.

Three, the {\em price of anarchy} in two-buyer sequential auctions with no-over\-bidding is exactly $1-\frac{1}{e}\simeq 0.632$.
We remark that the same bound has been claimed in \cite{BBB08,BBB09} for equilibria where overbidding is allowed but, 
unfortunately, there is a flaw in their arguments (see Section~\ref{sec:PoA-overbidding}).

\subsection{Related Work}

The complete information model of two-buyer sequential auctions studied in this paper was introduced by Gale and Stegeman~\cite{GS01}.
This was extended to multi-buyer sequential auctions by Paes Leme et al.~\cite{PST12} (see also~\cite{PNV19}).
Rodriguez~\cite{Rod09} studied equilibria in the special case of identical items and identical buyers with endowments.

Ashenfelter~\cite{Ash89} observed that the price of identical lots fell over time at a sequential
auction for wine. This tendency for a decreasing price trajectory is known as the {\em declining price anomaly}~\cite{MV93}.
Many attempts have been made to explain this anomaly and there is now also a plethora of 
empirical evidence showing its existence in practice; see~\cite{Ashta06,SGL17,PNV19} and the references within for more details. 
On the theoretical side, given complete information, Gale and Stegeman~\cite{GS01} proved that a weakly decreasing price trajectory is guaranteed
in a two-buyer sequential auction for identical items. Prebet et al.~\cite{PNV19} recently proved that declining prices are {\em not}
assured in sequential multiunit auctions with three or more buyers, but gave experimental evidence to show that counter-examples to the
anomaly appear extremely rare.

In the computer science community research has focussed on the welfare of equilibria in sequential auctions.
Bae et al. \cite{BBB08,BBB09} study the {\em price of anarchy} in two-buyer sequential auctions for identical items.
There has also been a series of works bounding the price of anarchy in multi-buyer sequential auctions 
for non-identical goods; see, for example, \cite{PST12,ST13,FLS13}. 

Sequential auctions with incomplete information have also been studied extensively since the
classical work of Milgrom and Weber \cite{MW82,Web83}. We remark that to study the temporal aspects 
of the auction independent of informational aspects it is natural to consider the case of complete information. 
Indeed, our work is motivated by the fact that, even in the basic setting of 
complete information, the simplest case of two-buyers is not well understood.


\subsection{Overview of the Paper}

Section~\ref{sec:examples} presents the model of two-buyer sequential auctions with complete information.
It also includes a collection of examples that illustrate some of 
the difficulties that arise in understanding sequential auctions and provide the reader with a light introduction to 
some of the technical concepts that will play a role in the subsequent analyses of equilibria.
They will also motivate the importance and relevance of no-overbidding. 
This section concludes by incorporating tie-breaking rules in winner determination. 
Section~\ref{sec:greedy} provides a measure for the {\em power of a duopsonist} and presents a natural greedy 
bidding strategy that a buyer with duopsony power may apply.
Section~\ref{sec:technical} studies how prices and duopsony power evolve over time when the buyers apply the greedy bidding strategy. 

The relevance of greedy bidding strategies is exhibited in Section~\ref{sec:equilibria} where we explain 
their close relationship with equilibria bidding strategies. This relationship allows us to provide a characterization of 
equilibria with no-overbidding. Key features of equilibria follow from these structural results.
First, any equilibria induces three distinct phases with a time after which some buyer behaves as a 
monopsonist. Second, prices weakly decrease over time for any equilibrium.
Finally, in Section~\ref{sec:PoA} we prove the price of anarchy is exactly $1-\frac{1}{e}$.
\section{Two-Buyer Sequential Auctions}\label{sec:examples}

In this section we introduce two-buyer sequential auctions and illustrate their strategic aspects via a set 
of simple examples. There are $T$ items to be sold, one per time period by a second-price auction.\footnote{We present our results for 
second-price auctions. Given an appropriate formulation of the bidding space to ensure the
existence of an equilibrium~\cite{PST12} these results also extend to the case of first-price auctions.}
 Buyer~$i\in \{1,2\}$ has a {\em value} $V_i(k)$ for obtaining exactly $k$ items
and {\em incremental value} $v_i(k)=V_i(k)-V_i(k-1)$ for gaining a $k$th item.
We assume $V_i(\cdot)$ is normalised at zero and concave.

\

\noindent{\sc Example 1:}
Consider a two-buyer auction with two items, where the incremental valuations are $(v_1(1), v_1(2))=(10,9)$ and $(v_2(1), v_2(2))=(8,5)$.
The outcome that maximizes social welfare is for buyer~$1$ to receive both copies of the item. However, at equilibrium, 
buyer~$2$ wins the first item at a price of $6$, and buyer~$1$
wins the second item at a price of $5$. To see this, imagine that buyer~$1$ wins in the first period.
Then in the second period she will have to pay $8$ to beat buyer~$2$. Given this, buyer~$2$
will be willing to pay up to $8$ to win in the first round. Thus, buyer~$1$ will win both permits for $8$ each
and obtain a {\em utility} (profit) of $(10+9)-2\cdot 8= 3$. Suppose instead that buyer~$2$ wins in the first round.
Now in the second period, buyer~$1$ will only need to pay $5$ to beat buyer~$2$, yielding a
profit of $10-5=5$. So, by bidding $6$ in the first period, buyer~$1$ can guarantee herself a profit of $5$. 
Given this bid, buyer~$2$ will maximize his own utility by winning the first permit for $6$.
Note that this outcome, the only rational solution, proffers suboptimal welfare.

\subsection{An Extensive-Form Game}
We compactly model this sequential auction as an extensive-form game with complete information
using a directed graph. The node set is given $\mathbb{H} = \{ (x_1, x_2) \in \mathbb{Z}_+ | x_1 + x_2 \leq T \}$.
Each node has a label $\x = (x_1, x_2)$ 
denoting how many items each buyer has currently won.
There is a {\em source node}, $\0=(0,0)$, corresponding to the initial round of the auction, 
and {\em terminal nodes} $(x_1,x_2)$, where $x_1+x_2=T$. If $\x$ is a terminal node, we write $\x \in \mathbb{H}_0$, otherwise 
we say that $\x$ is a \emph{decision node} and write $\x \in \mathbb{H}_+$. We also denote by $t(\y) = T - y_1 - y_2$ the number 
of items remaining to be sold at node $\y$; when the decision node is actually denoted $\x$, we simply write $t$ for $t(\x)$.

We also extend our notation for incremental valuations.
Specifically, we denote the incremental value of buyer~$i$ of a $k$th additional item {\em given} endowment~$\x$ (i.e. from decision node~$\x$) 
as $v_i(k|\x) = V_i(\x_i+k) - V_i(\x_i+k-1)$, for $k \in \mathbb{N}$. For valuations at the source node, we drop the explicit notation of the decision node: 
for example, $v_i(k)=v_i(k| {\bf 0})$.

We find an equilibrium by calculating the {\em forward utility} of each buyer at every node.
The forward utility is the profit a buyer will earn from that period in the auction onwards.
There is no future profit at the end of the auction, so the forward utility of each buyer is zero
at each terminal node. The forward utilities at decision nodes are then obtained by backwards induction on $t$:
each decision node $\x$ has a left child $\x+\e_1$ and a right child $\x+\e_2$, respectively corresponding to
buyer $1$ and $2$ winning an item.
For the case of second-price auctions, it is a weakly dominant strategy for each buyer to bid
its {\em marginal value for winning}. This bid value is the incremental value plus the forward utility of winning
minus the forward utility of losing. Thus, at the node $\x$, the bids of each buyer are
\begin{align*}
b_1(\x) &= v_1(1|\x)+u_1(\x+\e_1) - u_1(\x+\e_2),\\
b_2(\x) &= v_2(1|\x)+u_2(\x+\e_2) - u_2(\x+\e_1).
\end{align*}

\noindent If $b_1(\x) \ge b_2(\x)$ then buyer~$1$ will win and the forward utilities at $\x$ are then
\begin{align*}
u_1(\x) &= v_1(1|\x)+u_1(\x+\e_1) - b_2(\x+\e_2)\\
&= \left( v_1(1|\x)- v_2(1|\x)\right) + u_1(\x+\e_1) - u_2(\x+\e_2) +u_2(\x+\e_1), \\
u_2(\x) &= u_2(\x+\e_1).
\end{align*}
The forward utilities are defined symmetrically  if $b_1(\x) \le b_2(\x)$ and buyer~$2$ wins. 
Given the forward utilities at every node, the iterative elimination of
weakly dominated strategies then produces a unique equilibrium~\cite{GS01,BBB08}.

The auction of Example 1 is illustrated in Figure~\ref{fig:2-buyer-second-price}.
The first row in each node contains its label $\x=(x_1, x_2)$ and also the number of items, $t=T-x_1-x_2$, 
remaining to be sold. The second row shows the forward utility of each buyer.
Arcs are labelled by the bid value; here arcs for buyer~$1$ point left and arcs for buyer~$2$ point right.
Solid arcs represent winning bids and dotted arcs represent losing bids. The equilibrium path, in bold,
verifies our previous argument: buyer~$2$ wins the first item at price $6$ and buyer~$1$
wins the second item at price $5$.

\begin{figure}[h]
\centering
 \begin{minipage}{6cm}
  \begin{tikzpicture}[scale=0.55]
  \node [ellipse,draw, fill=red!20, align=center, scale=0.5] (v1) at (4,4) { {\bf (0,0)--2} \\ {\bf 5} : {\bf 2} };
\node [ellipse,draw, fill=red!20, align=center, scale=0.5](v2) at (2,2) { {\bf(1,0)--1} \\ {\bf 1} : {\bf 0} };
\node [ellipse,draw, fill=red!20, align=center, scale=0.5](v3) at (6,2) { {\bf(0,1)--1} \\ {\bf 5} : {\bf 0} };
\node[ellipse,draw, fill=yellow!20, align=center, scale=0.5](v4) at (0,0) { {\bf (2,0)--0} \\ {\bf 0} : {\bf 0}};
\node[ellipse,draw, fill=yellow!20, align=center, scale=0.5](v5) at (4,0) { {\bf (1,1)--0} \\  {\bf 0} : {\bf 0} };
\node[ellipse,draw, fill=yellow!20, align=center, scale=0.5](v6) at (8,0) { {\bf (0,2)--0} \\ {\bf 0} : {\bf 0} };
\draw [->, dotted] (v1) -- (v2) node[very near start,left, scale = .66]{$6\ $};
\draw [->, very thick] (v1) -- (v3) node[very near start,right, scale = .66]{$\ \ 8$};
\draw [->] (v2) -- (v4) node[very near start,left, scale = .66]{$9\ $} ;
\draw [->, dotted] (v2) -- (v5) node[very near start,right, scale = .66]{$\ 8$};
\draw [->, very thick] (v3) -- (v5) node[very near start,left, scale = .66]{$10\ $};
\draw [->, dotted] (v3) -- (v6) node[very near start,right, scale = .66]{$\ \ 5$};
\end{tikzpicture}
 \end{minipage}
\caption{The extensive form for Example 1. The set of histories has the structure of a rooted tree. Iteratively solving for an equilibrium gives
the auction tree, with bidding strategies and forward utilities shown.}
\label{fig:2-buyer-second-price}
\end{figure}
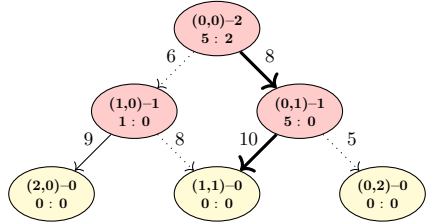

Consequently, in a two-buyer sequential auction, each subgame corresponds to a standard {\em second-price auction}.
We remark that for sequential auctions with three or more buyers each subtree in the extensive-form game 
corresponds to an {\em auction with interdependent valuations} 
(or an {\em auction with externalities}) \cite{Fun96,JM96}.
As a result, equilibria in such multi-buyer sequential auctions are even more complex than for two buyers;
see \cite{PST12,PNV19} for details.

\subsection{No-Overbidding}
Unfortunately, equilibria in sequential auctions can have undesirable and unrealistic properties. In particular they 
may exhibit severe {\em overbidding}.

\

\noindent{\sc Example 2:}
Consider a sequential auction with $T$ items for sale and incremental valuations shown in Figure~\ref{fig:large-overbidding}, 
where $0 < \epsilon \ll 1/T^2$. 
The key observation here is that if buyer~$1$ wins her first item before the final period she will then win in every 
subsequent period for a price $1-\epsilon$.
On the other hand, if buyer~$1$ wins her first item in the final period then the price will be $0$. This is because 
buyer~$2$ must then have won the first $T-1$ items and so has no value for winning in the final round.

\begin{figure}[h!]
\centering
\begin{tabular}{|c|c|c|c|c|c|}
\hline
Incremental Values  & $v_i(1)$ & $v_i(2)$  & $\dots$ & $v_i(T-1)$ & $v_i(T)$ \\ [0.5ex] 
\hline
{\sc Buyer 1}& $1$ & $1$& $\dots$  &$1$ & $1$ \\
\hline
{\sc Buyer 2}& $1-\epsilon$ &  $1-\epsilon$ & $\dots$& $1-\epsilon$ & 0\\
\hline
\end{tabular}
\caption{Incremental values of each buyer which induce ``severe'' overbidding.}
\label{fig:large-overbidding}
\end{figure}

These observations imply that buyer~$1$ will bid $b_1(t)=(T-t)\cdot \epsilon$ in period $t$.
At equilibrium, buyer~$2$ will beat these bids in the first $T-1$ periods and make a profit of
$(1-\epsilon)\cdot (T-1)-\frac12 (T-1)T\cdot \epsilon =\Omega(T)$. But if buyer~$2$ loses the first item, he will win no items 
at all in the auction and thus his forward utility from losing is zero.
Consequently, his marginal value for winning the first period is $\Omega(T)$ and so he will bid $b_2(1)=\Omega(T)\gg 1-\epsilon$.
Thus, at the equilibrium, buyer~$2$ will massively overbid in nearly every round.

\

Overbidding in a sequential auction is very risky and depends crucially on perfect information, so it is rare in practice.
To understand some of these risks consider again Example~$2$.  Equilibria are very sensitive to the valuation functions 
and any {\em uncertainty} concerning the payoff valuations could lead to major changes in the outcome. For instance, suppose
buyer~$1$ is mistaken in her belief regarding the $T$th incremental value of buyer~$2$. Then she will be unwilling to 
let buyer~$2$ win the earlier items at a low price. Consequently, if buyer~$2$ bids $b_2(1)=\Omega(T)$ then he will make a loss,
and continuing to follow the equilibrium strategy will result in a huge loss. This is important even with complete
information because, for computational or behavioral reasons, a buyer cannot necessarily assume with certainty that the other buyer will follow the equilibrium 
prescription; for example, the computation of equilibria in extensive-form games is hard.  Likewise in competitive settings with externalities, where the 
a buyer may have an interest in limiting the profitability of its competitor, overbidding is an unappealing option.
We address this wedge between theory and practice by imposing a non-overbidding assumption, and indeed
such assumptions are common in the theoretical literature \cite{ST13,CKS16}. We leave the analysis of models that explicitly 
capture the risks described above as a direction for future research.

For our sequential auctions, given its valuation function, each buyer will naturally constrain its bid by its incremental value.
So we will assume this {\em no-(incremental) overbidding} property:
\begin{equation}\label{con:no-overbidding}
b_i(\x) \le v_i(1|\x)
\end{equation}
In particular, at each stage a buyer will bid the {\em minimum} of its incremental value and its marginal value for winning.

We note that the no-overbidding property is especially well-suited to our setting of valuations that exhibit decreasing marginal values and free disposal.  That is, valuations that are non-decreasing and weakly concave.  Without these assumptions, sequential auctions can exhibit severe {\em exposure problems}\footnote{The exposure problem arises when a buyer has large value for a set $S$ of items
but much less value for strict subsets of $S$. Thus bidding for the items of $S$ sold early in the auction exposes the buyer to
a high risk if he fails to win the later items of~$S$.} that introduce inefficiencies driven by the tension of overbidding.  For this reason, sequential auctions are pathologically inappropriate mechanisms when valuations are not concave or monotone.  Many practical sequential multiunit auctions therefore assume (or impose) that buyers declare concave non-decreasing valuations. 
For example, in cap-and-trade (sequential) auctions, 
such as the Western Climate Initiative (WCI) and the Regional Greenhouse Gas Initiative (RGGI),
multiple items are sold in each time period but bids are constrained to be weakly decreasing.  We follow the literature on multiunit 
sequential auctions and focus on concave and non-decreasing valuations, where a no-overbidding constraint is more natural.

\subsection{Tie-Breaking Rules}

When overbidding is allowed the forward utilities at equilibria are unique (see also~\cite{GS01}), regardless of the
tie-breaking rule.
Surprisingly, this is {\bf not} the case when overbidding is prohibited:

\ 

\noindent{\sc Example 3:}
Take a four round auction, where $v_i(k) = 1$ for $k \leq 3$ and $v_i(k) = 0$ otherwise. 
Solving backwards, the forward utilities are the same for every {\em non-source node}
whether or not overbidding is permitted. 
In particular, at the successor nodes of the source we have $u_i(\e_i) = 2$ and $u_i(\e_{-i}) = 1$. 
But now a difference occurs. Without the overbidding constraint, 
both buyers would bid $2$ at the source node $\0$ and, regardless of the winner, each buyer has
$u_i(\0)=1$. But with the no-overbidding constraint both buyers will bid $1$. Consequently, if this tie is 
broken in favour of buyer~$1$ with probability $p$, then buyer~$1$ has forward 
utility $u_1(\0) = 1+p$ and buyer~$2$ has forward utility $u_2(\0) = 2-p$, so buyers' payoffs depend on $p$.

\

Thus, under no-overbidding we need to account for the tie-breaking process.
To do this, let $\be=(b_1,b_2)$ where $b_i: \mathbb{H}_+ \rightarrow \mathbb{R}$ is the bidding strategy
of buyer~$i$. Given the bids at the node $\x$, let $\pi_i(\be|\x)$ denote the probability buyer~$i$ 
is awarded the item, where $\pi_i(\be|\x) = 1$ if $b_i(\x) > b_{-i}(\x)$. This defines a tie-breaking rule at each node, 
and the {\em forward utility} of each buyer can again be calculated 
inductively. For any terminal node $\x \in \mathbb{H}_0$ the forward utility is zero: $u_i(\be | \x) = 0$. 
The forward utility of buyer~$i$ at decision node $\x \in \mathbb{H}_+$ is then:
\begin{equation*}
u_i(\be |\x) = \pi_i(\be | \x) \cdot (v_i(1|\x) - b_{-i}(\x) + u_i(\be | \x+\e_i)) + (1 - \pi_i(\be |\x)) \cdot u_i(\be | \x + \e_{-i})
\end{equation*}

With the tie-breaking rule defined, the (expected) forward utilities at equilibria are thus unique.
Moreover, there is a unique bidding strategy $\be$ which survives the iterated elimination of weakly 
dominated strategies. Specifically, under the no-overbidding condition, at any node $\x$ each bidder should bid the minimum of its marginal value for winning and its incremental value:
\begin{equation} \label{eq:bid-value}
 b_i(\x) = \min \, \big[ \, v_i(1|\x)\, ,\, v_i(1|\x) + u_i(\be |\x+\e_i) - u_i(\be | \x+\e_{-i})\, \big]
\end{equation}

\section{Greedy Bidding Strategies}\label{sec:greedy}
To understand equilibria in two-buyer sequential auctions with no-overbidding, 
we need to consider greedy bidding strategies.
At decision node $\x$, suppose buyer~$i$ attempts to win exactly $k$ items by the following strategy:
she waits (bids zero) for $t-k$ rounds and then outbids buyer~$-i$ in the final $k$
rounds. For this, by the no-overbidding assumption, she must bid
$\geq v_{-i}(1|(t-k)\cdot \e_{-i})$ in the final $k$ rounds. If this strategy to be
implementable, it must be that:
$$v_i(k|\x) \ \ge\ v_{-i}(t-k+1|\x)$$
This strategy would then give buyer~$i$ a utility of:
\begin{equation}\label{def:mu-bar}
\bar{\mu}_i(k|\x) \ =\ \sum_{j = 1}^{k} v_i (j |\x) - k \cdot v_{-i}(t - k + 1 |\x) 
\end{equation}
If buyer~$i$ attempts to apply this greedy strategy, it should select $k$ to maximize its profit $\bar{\mu}_i(k|\x)$.
So in equilibrium, buyer $i$ should earn at least the maximum of these utilities over all feasible $k$. Remarkably, this property 
need {\bf not} be true for equilibria when overbidding is allowed; see Example~4 below.

Buyer $i$'s \emph{greedy utility} at decision node $\x$ is the resultant utility from applying its greedy strategy from $\x$:
\begin{equation}\label{def:minmax-utility}
\mu_i(\x) \ =\  \max_{k \in [t] \cup \{0\}} \bar{\mu}_i(k|\x) 
\end{equation}
In turn, buyer~$i$'s corresponding \emph{greedy demand} at $\x$ is:
\begin{equation}\label{def:minmax-demand}
\kappa_i(\x) \ =\  \min \arg \max_{k \in [t] \cup \{0\}} \bar{\mu}_i(k|\x) 
\end{equation}
But when can buyer~$i$ profitably apply this greedy strategy? It can apply it whenever it has {\em duopsony power}.
In a sequential auction this ability arises when $v_i(1|\x) \ >\ v_{-i}(t|\x)$. 
Formally, let  buyer~$i$'s {\em duopsony factor} at $\x$ be:
\begin{equation}\label{def:duopsony-factor}
f_i (\x) \ =\  \max\{ k \in [t]\, :\, v_i (k |\x) > v_{-i}(t-k+1 |\x)\} \cup \{0\}
\end{equation}
Observe that if $f_i (\x)=0$ then 
buyer~$i$ cannot apply the greedy strategy, and we then have $\mu_i(\x) = 0$ and $\kappa_i(\x) = 0$. On the other hand,
if $f_i(\x) > 0$ then $\mu_i(\x) > 0$, and any maximizer of $\mu_i(\x)$ is necessarily at most $f_i(\x)$. 

We say that a buyer is a {\em monopsonist} if the other buyer has no duopsony power, that is, if $f_{-i}(\x) = 0$. 
In turn, a buyer is a {\em strict monopsonist} if she has total duopsony power, i.e. $f_{i}(\x) = t$.
So in a sequential auction with no-overbidding, a strict monopsonist can guarantee it gains at least its
greedy utility. This is analogous to the corresponding {\em static market} setting.
However, this simple fact can fail to hold when overbidding occurs:

\

\noindent{\sc Example 4:}
Consider a three-item auction, where $v_1(k) = 1$ for any $k \in \{1,2,3\}$, and buyer~$2$ 
has incremental valuations $\big(v_2(1),v_2(2),v_2(3)\big) = (2/3-\delta,1/2+\epsilon,0)$, 
where we fix $\epsilon,\delta > 0$ to be small with $2\epsilon=3\delta$.
With overbidding permitted, in equilibrium with ties broken in favour of buyer~$2$,
 buyer~$1$ wins a single item.

 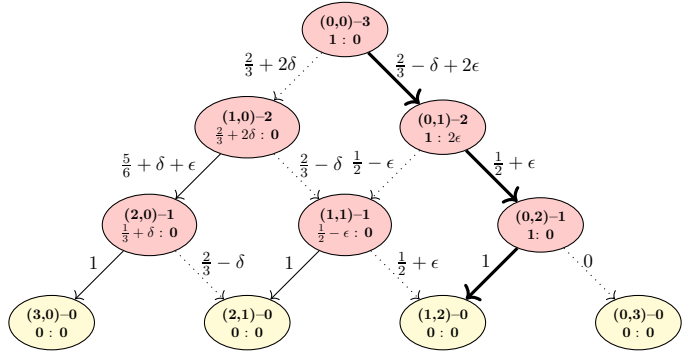
\begin{figure}[h]
 \centering
  \begin{tikzpicture}[scale=.65]
    \node [ellipse,draw, fill=red!20,align=center, scale=0.5] (a1) at (2,6) { {\bf (0,0)--3} \\ {\bf 1} : {\bf 0} };
   \node [ellipse,draw, fill=red!20, align=center, scale=0.5] (b2) at (4,4) { {\bf (0,1)--2} \\ {\bf 1} : {\bf $2\epsilon$} };
  \node [ellipse,draw, fill=red!20, align=center, scale=0.5] (b1) at (0,4) { {\bf (1,0)--2} \\ {\bf $\frac23+2\delta$} : {\bf 0} };
\node [ellipse,draw, fill=red!20, align=center, scale=0.5](c2) at (2,2) { {\bf(1,1)--1} \\ {\bf $\frac12-\epsilon$} : {\bf 0} };
\node [ellipse,draw, fill=red!20, align=center, scale=0.5](c3) at (6,2) { {\bf(0,2)--1} \\ {\bf 1}: {\bf 0} };
  \node [ellipse,draw, fill=red!20,align=center, scale=0.5] (c1) at (-2,2) { {\bf (2,0)--1} \\ {\bf $\frac13+\delta$} : {\bf 0} };
      \node [ellipse,draw, fill=yellow!20, align=center, scale=0.5] (d1) at (-4,0) { {\bf (3,0)--0} \\ {\bf 0} : {\bf 0} };
      \node[ellipse,draw, fill=yellow!20, align=center, scale=0.5](d2) at (0,0) { {\bf (2,1)--0} \\ {\bf 0} :  {\bf 0}};
\node[ellipse,draw, fill=yellow!20, align=center, scale=0.5](d3) at (4,0) { {\bf (1,2)--0} \\  {\bf 0} :  {\bf 0} };
\node[ellipse,draw, fill=yellow!20, align=center, scale=0.5](d4) at (8,0) { {\bf (0,3)--0} \\ {\bf 0} :  {\bf 0} };
\draw [->, dotted] (a1) -- (b1) node[near start,left, scale = .66]{$\ \frac23+2\delta\ $};
\draw [->, very thick] (a1) -- (b2) node[near start,right, scale = .66]{$\ \frac23-\delta+2\epsilon\ $} ;
\draw [->] (b1) -- (c1) node[near start,left, scale = .66]{$\ \frac56+\delta+\epsilon\ $} ;
\draw [->, dotted] (b1) -- (c2) node[near start,right, scale = .66]{$\ \frac23-\delta\ $} ;
\draw [->, dotted] (b2) -- (c2) node[near start,left, scale = .66]{$\ \frac12-\epsilon\ $} ;
\draw [->, very thick] (b2) -- (c3) node[near start,right, scale = .66]{$\ \frac12+\epsilon\ $};
\draw [->] (c1) -- (d1) node[near start,left, scale = .66]{$1\ $}; 
\draw [->, dotted] (c1) -- (d2) node[near start,right, scale = .66]{$\ \frac23-\delta\ $};
\draw [->] (c2) -- (d2) node[near start,left, scale = .66]{$1\ $}; 
\draw [->, dotted] (c2) -- (d3) node[near start,right, scale = .66]{$\ \frac12+\epsilon\ $}; 
\draw [->, very thick] (c3) -- (d3) node[near start,left, scale = .66]{$1\ $}; 
\draw [->, dotted] (c3) -- (d4) node[near start,right, scale = .66]{$0\ $}; 
\end{tikzpicture}
\caption{A sequential auction with overbidding permitted where neither buyer exhibits monopolistic behaviours.}
\label{fig:overbidding}
\end{figure}

Figure~\ref{fig:overbidding} illustrates this example. The key observation here is that $b_2(\0)=2/3 - \delta + 2\epsilon> 2/3 - \delta = v_2(1)$, so buyer $2$ overbids at the source node. Furthermore, buyer~$1$ obtains a profit of $1$ in
this equilibrium with overbidding, but  
$\bar\mu_1(3|\0) = 1+3\delta$. So in the equilibrium with overbidding, buyer~$1$ obtains less 
than her greedy utility.
In contrast, under no-incremental overbidding, buyer~$1$ will win all three items and
make exactly her greedy utility.

\

The greedy strategy induces two types of ``price'' that will be important.
First, we say that the {\em baseline price} of buyer~$i$ at decision node $\x$ is:
\begin{equation}\label{def:baseline-price}
\beta_i(\x,t) \ =\  	\begin{cases}
		v_i (1|\x) & f_i(\x) = 0 \\
		v_{-i}(t - \kappa_i(\x) + 1|\x) & f_i(\x) > 0
		\end{cases} 
\end{equation}
Second, the {\em threshold price} of  buyer~$i$ at decision node 
$\x$ is: 
\begin{equation}\label{def:threshold-price}
p_{i}(\x) \  = \  v_i (1|\x) + \mu_i(\x+\e_i) - \mu_i(\x+\e_{-i})
\end{equation}

The baseline price may be seen as the price a greedy buyer would post if it wanted to obtain its 
greedy utility. By posting a bid of $\beta_i(\x) + \epsilon$ on each node following $\x$, buyer~$i$ would be guaranteed,
by the no-overbidding condition, to win at least $\kappa_i(\x)$ items. The threshold price, in turn, arises from a 
behavioural rule: it is the bid a buyer would make on the assumption that it wins \textbf{exactly} its greedy utility
through the rest of the auction.
%
%

\section{Greedy Bidding Outcomes}\label{sec:technical}

Now imagine that buyers attempt to bid their threshold prices at each decision node, subject to the no-overbidding constraint. 
By definition (\ref{def:threshold-price}) of threshold prices, this corresponds to the behavioural rule where buyers perceive their forward 
utilities to equal their greedy utilities, and bid accordingly. We will discover in Section~\ref{sec:equilibria} that such greedy bidding strategies 
are in some circumstances equivalent to equilibrium bidding strategies under the {\em no-overbidding assumption}.

Accordingly, to understand equilibria we must study the consequences of greedy bidding.
So, in this section, we will inspect the properties of greedy outcomes. 

\subsection{Evolution of the Duopsony Factor}

Recall the {\em duopsony factor} denotes the maximum number of items a buyer can target using the
greedy bidding strategy -- we first inspect its evolution:

\begin{lemma}\label{lem:sdf}
Suppose that $\x$ is a decision node with $t > 1$, then $\forall i \in \{1,2\}$:
\begin{align*}
f_i(\x+\e_i) & = \min \{f_i(\x)-1, 0\} \\
f_i(\x+\e_{-i}) & = \min \{f_i(\x), t-1\}
\end{align*}
\end{lemma}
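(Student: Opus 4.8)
The plan is to reduce both identities to a single structural fact: at the decision node the map $k \mapsto v_i(k|\x) - v_{-i}(t-k+1|\x)$ is weakly decreasing, so the set of targets that beat the greedy threshold is always an initial segment $\{1,\dots,f_i(\x)\}$. First I would record how incremental values transform along the two moves out of $\x$. Because $v_i(k|\y)=V_i(y_i+k)-V_i(y_i+k-1)$ depends only on buyer $i$'s own endowment, winning shifts that buyer's increments while leaving the opponent's untouched: $v_i(k|\x+\e_i)=v_i(k+1|\x)$ and $v_{-i}(k|\x+\e_i)=v_{-i}(k|\x)$, and symmetrically $v_i(k|\x+\e_{-i})=v_i(k|\x)$, $v_{-i}(k|\x+\e_{-i})=v_{-i}(k+1|\x)$. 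In both children the count of remaining items drops from $t$ to $t-1$.

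The crux is the monotonicity. Concavity of $V_i$ makes $v_i(k|\x)$ weakly decreasing in $k$, while $v_{-i}(t-k+1|\x)$ is weakly \emph{increasing} in $k$ (a larger $k$ means a smaller argument, hence a larger increment). Thus $g(k):=v_i(k|\x)-v_{-i}(t-k+1|\x)$ is weakly decreasing, and $\{k\in[t]:g(k)>0\}$ is a prefix $\{1,\dots,f_i(\x)\}$, empty exactly when $f_i(\x)=0$. Once this prefix description is in hand both formulas fall out by inspection; everything else is index bookkeeping.

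For the move $\x+\e_{-i}$ I would note that the winning condition is unchanged: substituting the transformation above, $v_i(k|\x+\e_{-i}) > v_{-i}((t-1)-k+1|\x+\e_{-i})$ is precisely $v_i(k|\x)>v_{-i}(t-k+1|\x)$. Only the admissible range contracts from $[t]$ to $[t-1]$, so the prefix is cut off at $t-1$, yielding $f_i(\x+\e_{-i})=\min\{f_i(\x),t-1\}$. For the move $\x+\e_i$ the condition on $k\in[t-1]$ becomes $v_i(k+1|\x)>v_{-i}(t-k|\x)$; the substitution $j=k+1$ rewrites this as ``$j$ lies in the prefix $\{1,\dots,f_i(\x)\}$ with $j\ge 2$''. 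The largest admissible $k=j-1$ is then $f_i(\x)-1$ when $f_i(\x)\ge 1$, and no $k$ qualifies otherwise, so the new factor is the positive part $\max\{f_i(\x)-1,0\}$.

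I expect the main obstacle to be bookkeeping rather than any deep step: one must keep the index shift $j=k+1$ synchronised with the simultaneous drop of $t$ to $t-1$, and verify the endpoints explicitly. In particular, $f_i(\x)=0$ gives an empty prefix that both children inherit, and $f_i(\x)=t$ (a strict monopsonist) is exactly the case where truncation at $t-1$ genuinely binds in the $\x+\e_{-i}$ identity. With these boundaries checked, the prefix structure delivers both equalities.
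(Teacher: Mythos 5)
Your proof is correct and follows essentially the same route as the paper: the paper splits into the three cases $f_i(\x)=0$, $f_i(\x)=t$ and $0<f_i(\x)<t$, but each case rests on exactly the shift identities $v_i(k|\x+\e_i)=v_i(k+1|\x)$, $v_{-i}(k|\x+\e_i)=v_{-i}(k|\x)$ (and symmetrically) together with the prefix structure of $\{k\in[t]: v_i(k|\x)>v_{-i}(t-k+1|\x)\}$ that you make explicit via monotonicity, so your unified argument is just a cleaner packaging of the same idea. One remark: you correctly obtain $f_i(\x+\e_i)=\max\{f_i(\x)-1,0\}$; the first identity as printed in the lemma reads $\min\{f_i(\x)-1,0\}$, which is evidently a typo (it would force $f_i(\x+\e_i)\le 0$ always), and the paper's own case analysis confirms that $\max$ is intended.
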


 To prove Lemma \ref{lem:sdf} we break the statement up to three claims which we prove separately. The first claim shows that a buyer will no duopsony power will remain so for the rest of the auction.

\begin{claim}\label{cl:sdf-0}
If buyer~$i$ has no duopsony power at decision node $\x$, where $t >1$, then it has no duopsony power at any
subsequent node. Formally, for any $j\in \{1,2\}$, 
$$f_i(\x) = 0 \ \Longrightarrow \ f_i(\x+\e_j) = 0$$
\end{claim}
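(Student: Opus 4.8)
The plan is to unfold the hypothesis $f_i(\x)=0$ into the universal statement that $v_i(k|\x)\le v_{-i}(t-k+1|\x)$ holds for \emph{every} $k\in[t]$, and then to track how the incremental values reindex as we pass to each child node. The two transitions behave differently, so I would treat $j=i$ and $j=-i$ separately, but in each case the goal is identical: verify that the defining inequality for $f_i=0$ survives at the child, now with horizon $t-1$.

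First I would record the reindexing identities, which follow directly from $v_i(k|\x)=V_i(x_i+k)-V_i(x_i+k-1)$ depending only on $x_i$. When buyer~$i$ wins, her own increments shift while her opponent's are fixed: at $\x+\e_i$ we have $v_i(k|\x+\e_i)=v_i(k+1|\x)$ and $v_{-i}(\ell|\x+\e_i)=v_{-i}(\ell|\x)$. Symmetrically, at $\x+\e_{-i}$ we have $v_i(k|\x+\e_{-i})=v_i(k|\x)$ and $v_{-i}(\ell|\x+\e_{-i})=v_{-i}(\ell+1|\x)$. For the case $j=i$, the child condition $f_i(\x+\e_i)=0$ then amounts to $v_i(k+1|\x)\le v_{-i}(t-k|\x)$ for all $k\in[t-1]$; substituting $m=k+1$ this is exactly the family $v_i(m|\x)\le v_{-i}(t-m+1|\x)$ restricted to $m\in\{2,\dots,t\}$, a subrange of $[t]$. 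For the case $j=-i$, the child condition $f_i(\x+\e_{-i})=0$ amounts to $v_i(k|\x)\le v_{-i}(t-k+1|\x)$ for all $k\in[t-1]$, which is the parent family restricted to $[t-1]\subseteq[t]$. In both cases the required inequalities form a subfamily of those already guaranteed by $f_i(\x)=0$, so the implication follows.

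I do not anticipate a genuine obstacle here: the content is purely bookkeeping, and in particular concavity of $V_i$ is \emph{not} needed, since $f_i=0$ is a universal (``for all $k$'') statement that is only strengthened by restriction when we descend to a child. The single point demanding care is aligning the index shifts with the new horizon $t-1$—confusing $t-k$ with $t-k+1$ would break the argument—so I would state the reindexing identities explicitly before substituting, and double-check the endpoints of the index range in each of the two cases.
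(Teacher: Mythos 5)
Your proposal is correct and follows essentially the same route as the paper's proof: unfold $f_i(\x)=0$ into the family of inequalities $v_i(k|\x)\le v_{-i}(t-k+1|\x)$ for all $k\in[t]$, apply the reindexing identities at each child, and observe that the required inequalities at the child are a subfamily of those at the parent. Your observation that concavity is not needed is also consistent with the paper's argument, which uses only the index shifts.
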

\begin{proof}
Let $f_i(\x) = 0$. Then, by definition~(\ref{def:duopsony-factor}) of the duopsony factor, for any $k\le t$, 
we have $v_{-i}(t-k+1|\x) \geq v_i(k|\x)$. Hence:
\begin{align}\label{eq:v-i}
v_{-i}(t-1-k+1|\x+\e_{-i}) &= v_{-i}(t-k+1|\x) \nonumber \\
& \geq v_i(k|\x) \nonumber  \\
&= v_i(k|\x+\e_{-i})
\end{align}
Here the first equality holds because $v_{-i} (j | \x+\e_{-i}) =v_{-i} (j+1 | \x)$ for any $j \in \mathbb{Z}$.
The second equality holds because the incremental values of buyer~$i$ are independent of
the allocation to the buyer~$-i$ and so $v_i (k | \x+\e_{-i}) =v_i (k| \x)$. Because $t-1< t$, it follows from (\ref{eq:v-i}) that $f_i(\x+\e_{-i}) = 0$. 

Similarly, for any $k \le t$, we have:
\begin{align*}
v_{-i}(t-1-k+1|\x+\e_{i}) &= v_{-i}(t-1-k+1|\x) \\
& \geq v_i(k+1|\x) \\
&= v_i(k|\x+\e_{i}) 
\end{align*}
Therefore $f_i(\x+\e_{i},t-1) = 0$.
\end{proof}

Recall, a buyer~$i$ is a strict monopsonist at $\x$ if $f_i(\x) = t$. The next claim shows that a strict monopsonist remains a strict monopsonist throughout the auction.
\begin{claim}\label{cl:sdf-t}
If buyer~$i$ is a strict monopsonist at decision node $\x$, where $t >1$, then it is
a strict monopsonist at every subsequent node. Formally, for any $j\in \{1,2\}$, 
$$f_i(\x) = t \ \Longrightarrow \ f_i(\x+\e_j) = t-1$$
\end{claim}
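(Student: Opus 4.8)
The plan is to reduce the claim, in each of the two cases (the monopsonist herself wins, or her rival wins), to a single strict inequality between incremental values, using exactly the index-shift identities already exploited in the proof of Claim~\ref{cl:sdf-0}. First I would note that since only $t-1$ items remain at $\x+\e_j$, the duopsony factor there can be at most $t-1$, and by definition~(\ref{def:duopsony-factor}) it equals $t-1$ precisely when the defining inequality holds at the maximal admissible index $k=t-1$; that is, exactly when $v_i(t-1|\x+\e_j) > v_{-i}(1|\x+\e_j)$. For the same reason, the hypothesis $f_i(\x)=t$ is equivalent to the single inequality $v_i(t|\x) > v_{-i}(1|\x)$, since $k=t$ is the largest admissible index at $\x$.

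The case $j=i$, where buyer~$i$ herself wins, is immediate. Applying the identity $v_i(t-1|\x+\e_i)=v_i(t|\x)$ (buyer~$i$'s own index shifts up by one upon winning) together with $v_{-i}(1|\x+\e_i)=v_{-i}(1|\x)$ (buyer~$-i$'s incremental values are independent of buyer~$i$'s allocation), the target inequality becomes verbatim the hypothesis $v_i(t|\x) > v_{-i}(1|\x)$, and we are done.

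The case $j=-i$, where the rival wins, is where concavity is needed. Here the analogous identities give $v_i(t-1|\x+\e_{-i})=v_i(t-1|\x)$ and $v_{-i}(1|\x+\e_{-i})=v_{-i}(2|\x)$, so it suffices to establish $v_i(t-1|\x) > v_{-i}(2|\x)$. I would obtain this by chaining the weakly decreasing incremental values of both buyers with the hypothesis:
\[
v_i(t-1|\x)\ \ge\ v_i(t|\x)\ >\ v_{-i}(1|\x)\ \ge\ v_{-i}(2|\x).
\]
The strict middle inequality is precisely the hypothesis $f_i(\x)=t$; the two outer weak inequalities are concavity, relaxing buyer~$i$'s index downward and buyer~$-i$'s index upward. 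The only point demanding care is that both monotonicity steps point in the favourable direction, which they do exactly because each $V_i$ is concave. Consequently there is no genuine obstacle beyond keeping the index bookkeeping consistent across the two cases; the entire content of the claim is carried by this single invocation of concavity in the second case.
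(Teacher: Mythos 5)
Your proof is correct and follows essentially the same route as the paper's: the case $j=i$ reduces via the index-shift identities to the hypothesis $v_i(t|\x)>v_{-i}(1|\x)$, and the case $j=-i$ uses exactly the chain $v_i(t-1|\x)\ge v_i(t|\x)>v_{-i}(1|\x)\ge v_{-i}(2|\x)$ that appears in the paper. The only cosmetic difference is that you make explicit the (correct) observation that $f_i(\x+\e_j)=t-1$ is equivalent to the defining inequality holding at the single index $k=t-1$, which the paper leaves implicit.
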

\begin{proof}
Suppose $f_i(\x) = t$, then $v_i(t|\x) > v_{-i}(1|\x)$. Then:
\begin{equation*}
v_{i}(t-1|\x+\e_{i}) \ =\  v_{i}(t|\x) \ >\  v_{-i}(1|\x) \ =\  v_{-i}(1|\x+\e_{i})
\end{equation*}
Here the first and second equalities result from considering the shift $\x \rightarrow \x+\e_i$ on the valuations.
The strict inequality results from the assumption $f_i(\x) = t$. So, $f_i(\x+\e_i) = t-1$. Similarly:
\begin{align*}
v_i(t-1|\x+\e_{-i}) & = v_i(t-1|\x) \\
&\geq v_i(t|\x) \\
&> v_{-i}(1|\x) \\
&\geq v_{-i}(2|\x)\\
&= v_{-i}(1|\x+\e_{-i})
\end{align*}
Here, the first and last equalities result from considering the shift $\x \rightarrow x+\e_{-i}$ on the valuations. 
The inequalities result from 
the assumption of non-increasing incremental values, and the strict inequality follows from 
the assumption $f_i(\x) = t$. So $f_i(\x+\e_{-i}) = t-1$.
\end{proof}

The third claim examines the intermediate case in which a buyer has duopsony power but is not a strict monopsonist.
\begin{claim}\label{cl:sdf<t}
Let buyer~$i$ have duopsony power but not total duopsony power at decision node $\x$, where $t>1$. 
If it wins the current item then the maximum number of items to which it can apply the greedy bidding strategy
falls by exactly one; it if loses the current item then the maximum number of items to which it can apply the greedy bidding strategy
remains the same. Formally,
$$
0 < f_i(\x) < t \ \Longrightarrow \ 
\begin{cases}
\ f_i(\x+\e_i) &=\ f_i(\x) - 1\\
\ f_i(\x+\e_{-i}) &=\ f_i(\x)
\end{cases}
$$
\end{claim}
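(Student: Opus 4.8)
The plan is to prove both equalities by directly translating the duopsony condition $v_i(k|\cdot) > v_{-i}(t-k+1|\cdot)$ from each child node back to the parent $\x$, exactly in the spirit of Claims~\ref{cl:sdf-0} and~\ref{cl:sdf-t}, and then reading off the new duopsony factor straight from the definition~(\ref{def:duopsony-factor}) as a maximum over indices. Writing $f = f_i(\x)$, the only inputs are the four shift identities $v_i(k|\x+\e_i) = v_i(k+1|\x)$, $v_i(k|\x+\e_{-i}) = v_i(k|\x)$, $v_{-i}(k|\x+\e_i) = v_{-i}(k|\x)$, $v_{-i}(k|\x+\e_{-i}) = v_{-i}(k+1|\x)$, together with $t(\x+\e_j) = t-1$. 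The hypothesis $0 < f < t$ is exactly what keeps the maximizing index $f$ strictly inside $[t]$, which is what makes both translations clean.

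For the winning child $\x+\e_i$, I would expand the condition at index $k$: the left side is $v_i(k|\x+\e_i) = v_i(k+1|\x)$, and since $t(\x+\e_i)-k+1 = t-k$ the right side is $v_{-i}(t-k|\x+\e_i) = v_{-i}(t-k|\x)$. Thus ``$k$ satisfies the condition at $\x+\e_i$'' is literally ``$k+1$ satisfies the condition at $\x$''. Maximizing over $k \in [t-1]$ and re-indexing by $k' = k+1 \in \{2,\dots,t\}$ then yields $f_i(\x+\e_i) = \big(\max\{k' : k' \text{ satisfies the condition at } \x\}\big) - 1 = f-1$. The one point to verify is that restricting the index range from $\{1,\dots,t\}$ to $\{2,\dots,t\}$ does not change the maximum; this holds because $f \ge 1$, with the boundary case $f=1$ absorbed by the $\cup\{0\}$ in~(\ref{def:duopsony-factor}), which gives $f_i(\x+\e_i)=0=f-1$.

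For the losing child $\x+\e_{-i}$, the analogous expansion gives left side $v_i(k|\x+\e_{-i}) = v_i(k|\x)$ and, since again $t(\x+\e_{-i})-k+1 = t-k$, right side $v_{-i}(t-k|\x+\e_{-i}) = v_{-i}(t-k+1|\x)$. So ``$k$ satisfies the condition at $\x+\e_{-i}$'' is identical to ``$k$ satisfies the condition at $\x$'', with no shift of index. Maximizing over $k \in [t-1]$ therefore returns $f$, provided the maximizer is not lost when shrinking the range from $[t]$ to $[t-1]$, which is precisely where $f < t$ enters: it guarantees $f \le t-1$, so $f_i(\x+\e_{-i}) = f$.

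The argument is mechanical once the shift identities are in place, so I do not expect a genuine obstacle. The only place requiring care is the bookkeeping of the index ranges at the two endpoints: preserving the maximum under $\{1,\dots,t\}\to\{2,\dots,t\}$ for the winning child needs $f \ge 1$, and preserving it under $\{1,\dots,t\}\to\{1,\dots,t-1\}$ for the losing child needs $f \le t-1$. These are exactly the hypotheses $f_i(\x) > 0$ and $f_i(\x) < t$, a reassuring indication that the statement is tight.
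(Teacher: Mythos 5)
Your proposal is correct and follows essentially the same route as the paper: apply the shift identities for incremental values at $\x+\e_i$ and $\x+\e_{-i}$, observe that the duopsony condition at index $k$ in the child translates to the condition at index $k+1$ (winning child) or $k$ (losing child) at $\x$, and read off the new maximum from definition~(\ref{def:duopsony-factor}). Your index-range bookkeeping at the endpoints (using $f_i(\x)\ge 1$ and $f_i(\x)\le t-1$) is sound and, if anything, slightly more careful than the paper's, which instead invokes the interval structure of the satisfying set via weakly decreasing incremental values.
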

\begin{proof}
Note first that $v_i(k|\x+\e_{-i}) = v_i(k|\x)$ and $v_{-i}((t-1)-k+1|\x+\e_{-i}) = v_{-i}(t-k+1|\x)$ by considering the 
shift in incremental valuations $\x \rightarrow \x + \e_{-i}$. Since $0 < f_i(\x) \leq t-1$, 
for $k \leq f_i(\x), v_i(k|\x) > v_{-i}(t-k+1|\x)$ and for $k > f_i(\x), v_i(k|\x) \leq v_{-i}(t-k+1|\x)$. 
Therefore, $f_i(\x+e_{-i}) = f_i(\x)$.

Similarly, $v_i(k|\x+\e_{i}) = v_i(k+1|\x)$ and $v_{-i}((t-1)-k+1|\x+\e_{i}) = v_{-i}(t-k|\x)$, by considering the shift in 
incremental valuations $\x \rightarrow \x + \e_i$. Now $0 < f_i(\x) \leq t-1$.
Thus, by definition~(\ref{def:duopsony-factor}) of the duopsony factor,
for $k + 1 \leq f_i(\x)$ we have $v_i(k+1|\x) > v_{-i}(t-k|\x)$ and for $k + 1 > f_i(\x)$ we have $v_i(k+1|\x) \leq v_{-i}(t-k|\x)$. 
Therefore, $f_i(\x+e_{i}) = f_i(\x)-1$.
\end{proof}

\subsubsection{Evolution of Greedy Utilities}

Next, we analyse the evolution of buyers' greedy utilities. We find that some buyer $i$'s greedy utility will decrease upon the other buyer's win if and only if she is a strict monopsonist who demands the entire supply. In turn, we bound below the decrease in buyer $i$'s greedy utility if she has duopsony power and wins an item.

\begin{lemma}\label{lem:evolutionMinmaxUtility}
The greedy utility of a buyer weakly decreases when the buyer loses.
Specifically, for any decision node $\x$ and any buyer $i$,
\begin{align*}
\mu_i(\x+\e_{-i}) &= \mu_i(\x) &\mathrm{if}\  \kappa_i(\x)  &< t \\
\mu_i(\x+\e_{-i}) &< \mu_i(\x)  &\mathrm{if}\  \kappa_i(\x) &=t 
\end{align*}
In turn, if buyer~$i$ has non-zero greedy demand at decision node $\x$ and wins an item, then its greedy utility 
decreases by at most the value it would have for purchasing an item at his baseline price. Formally, $\forall \x \in \mathbb{H}_+, \forall i \in \{1,2\}$:
$$ \kappa_i(\x) > 0 \Rightarrow \mu_i(\x+\e_i) \geq \mu_i(\x) - v_i(1|\x) + \beta_i(\x) $$
\end{lemma}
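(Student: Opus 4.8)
The plan is to reduce both parts to the four shift identities $v_i(k\,|\,\x+\e_i) = v_i(k+1\,|\,\x)$, $v_i(k\,|\,\x+\e_{-i}) = v_i(k\,|\,\x)$, $v_{-i}(k\,|\,\x+\e_i) = v_{-i}(k\,|\,\x)$ and $v_{-i}(k\,|\,\x+\e_{-i}) = v_{-i}(k+1\,|\,\x)$, by rewriting $\bar\mu_i(\cdot)$ at each child node directly in terms of the parent node's incremental values. No appeal to the duopsony-factor evolution of Lemma~\ref{lem:sdf} should be needed; this is essentially an exercise in careful index bookkeeping.

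For the loss statement I would first show that for every admissible $k \in [t-1] \cup \{0\}$ one has $\bar\mu_i(k\,|\,\x+\e_{-i}) = \bar\mu_i(k\,|\,\x)$. The gain term $\sum_{j=1}^k v_i(j\,|\,\cdot)$ is unchanged because buyer~$i$'s incremental values are insensitive to $-i$'s allocation, while the per-item price $v_{-i}((t-1)-k+1\,|\,\x+\e_{-i})$ collapses to $v_{-i}(t-k+1\,|\,\x)$ under the last shift identity. Consequently $\mu_i(\x+\e_{-i}) = \max_{k \le t-1}\bar\mu_i(k\,|\,\x)$ differs from $\mu_i(\x) = \max_{k \le t}\bar\mu_i(k\,|\,\x)$ only in whether the index $k=t$ is admissible. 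If $\kappa_i(\x) < t$, a maximizer survives in $[t-1]\cup\{0\}$, so the two maxima coincide; if $\kappa_i(\x) = t$, then $t$ is the smallest (hence only surviving) maximizer, so every $k < t$ is strictly suboptimal and discarding $k = t$ strictly lowers the maximum. That yields the two cases.

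For the win statement the key observation is that winning one item and then re-optimizing over $\kappa_i(\x)-1$ targets reproduces exactly the quantity $\mu_i(\x) - v_i(1\,|\,\x) + \beta_i(\x)$. Writing $\kappa = \kappa_i(\x) > 0$ (so $f_i(\x) > 0$ and hence $\beta_i(\x) = v_{-i}(t-\kappa+1\,|\,\x)$), I would expand $\bar\mu_i(\kappa-1\,|\,\x+\e_i)$ via the shift identities: the gain term becomes $\sum_{j=1}^{\kappa-1} v_i(j\,|\,\x+\e_i) = \sum_{j=2}^{\kappa} v_i(j\,|\,\x)$, and the per-item price $v_{-i}((t-1)-(\kappa-1)+1\,|\,\x+\e_i) = v_{-i}(t-\kappa+1\,|\,\x) = \beta_i(\x)$ is invariant. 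Thus $\bar\mu_i(\kappa-1\,|\,\x+\e_i) = \sum_{j=2}^{\kappa} v_i(j\,|\,\x) - (\kappa-1)\beta_i(\x)$, which is precisely $\mu_i(\x) - v_i(1\,|\,\x) + \beta_i(\x)$. Since $0 \le \kappa-1 \le t-1$, the index $k = \kappa-1$ is admissible at $\x+\e_i$, so $\mu_i(\x+\e_i) \ge \bar\mu_i(\kappa-1\,|\,\x+\e_i)$ gives the claimed bound.

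The only real obstacle is the bookkeeping: keeping the index shifts straight so that the per-item price $v_{-i}(t-\kappa+1\,|\,\x)$ is seen to be invariant under the ``win one, target one fewer'' move, and verifying the endpoint/admissibility conditions, namely that $\kappa - 1 \in [t-1]\cup\{0\}$ and that $\kappa_i(\x) > 0$ forces $f_i(\x) > 0$ so that the correct branch of the baseline-price definition~(\ref{def:baseline-price}) applies. Once those are pinned down, both inequalities follow immediately.
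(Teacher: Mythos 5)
Your proposal is correct and follows essentially the same route as the paper: the paper isolates your two computations as Claim~\ref{cl:lose-utility-k} (invariance of $\bar\mu_i(k|\cdot)$ under a loss for $k\le t-1$) and Claim~\ref{cl:win-utility-k} (the ``win one, target one fewer'' shift identity), and then concludes exactly as you do, via the $[t-1]\cup\{0\}$ versus $[t]\cup\{0\}$ comparison for the loss case and the admissibility of $k=\kappa_i(\x)-1$ at $\x+\e_i$ for the win case. Your handling of the edge conditions ($\kappa_i(\x)=t$ forcing the argmax set to be $\{t\}$, and $\kappa_i(\x)>0\Rightarrow f_i(\x)>0$ for the baseline-price branch) matches the paper's reasoning.
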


To prove Lemma \ref{lem:evolutionMinmaxUtility}, we will first prove two claims about the payoff
of the greedy strategy targeting the purchase of $k$ items.
Our first claim states that the greedy utility does not change if the buyer loses the first item,
provided its greedy strategy was not attempting to win every item.
\begin{claim}\label{cl:lose-utility-k}
At decision node $\x$, the utility to a buyer from targetting less than $t$ items using the greedy strategy 
remains constant when the buyer loses.
Specifically, for any buyer $i$ and any $k\le t-1$,
\begin{equation*}
\bar{\mu}_i(k|\x+\e_{-i}) \ =\  \bar{\mu}_i(k|\x)
\end{equation*}
\end{claim}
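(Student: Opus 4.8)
The plan is to prove this by direct substitution into the definition~(\ref{def:mu-bar}) of $\bar\mu_i$, exploiting the two shift identities for incremental valuations already used in the proof of Claim~\ref{cl:sdf-0}. Recall that $\bar\mu_i(k|\x)$ splits into a reward term $\sum_{j=1}^k v_i(j|\x)$ and a payment term $k\cdot v_{-i}(t-k+1|\x)$, where $t=t(\x)$. I would show each piece is individually unchanged when we pass from $\x$ to $\x+\e_{-i}$.

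For the reward term, the key fact is that buyer~$i$'s own incremental values do not depend on how many items buyer~$-i$ has won, i.e. $v_i(j|\x+\e_{-i})=v_i(j|\x)$ for every $j$. Hence $\sum_{j=1}^k v_i(j|\x+\e_{-i})=\sum_{j=1}^k v_i(j|\x)$ term by term. The substantive part is the payment term, where two shifts occur simultaneously and I would check that they cancel. First, moving to $\x+\e_{-i}$ decreases the number of remaining items from $t$ to $t-1$, so the payment index becomes $(t-1)-k+1=t-k$. Second, since buyer~$-i$ has now won an extra item, its valuations shift by one: $v_{-i}(\ell|\x+\e_{-i})=v_{-i}(\ell+1|\x)$ for all $\ell$. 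Applying this with $\ell=t-k$ gives $v_{-i}(t-k|\x+\e_{-i})=v_{-i}(t-k+1|\x)$, which is exactly the payment rate appearing in $\bar\mu_i(k|\x)$. Combining the two pieces yields $\bar\mu_i(k|\x+\e_{-i})=\bar\mu_i(k|\x)$.

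There is essentially no obstacle here beyond careful index bookkeeping; the only point worth flagging is why the hypothesis $k\le t-1$ is needed. It guarantees that targeting $k$ items is still feasible at the successor node $\x+\e_{-i}$ (where only $t-1$ items remain), so that $\bar\mu_i(k|\x+\e_{-i})$ is well-defined as a point of the domain $[t-1]\cup\{0\}$, and it simultaneously ensures the payment index $t-k\ge 1$, so that $v_{-i}(t-k|\x+\e_{-i})$ refers to a genuine $(t-k)$th incremental value rather than an out-of-range term. The conceptual content of the claim is precisely this cancellation: the drop in remaining supply is exactly offset by the rightward shift in buyer~$-i$'s valuation caused by its win, leaving buyer~$i$'s greedy payoff for any target strictly below full supply untouched.
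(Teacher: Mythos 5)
Your proposal is correct and follows essentially the same route as the paper: a direct expansion of $\bar{\mu}_i(k|\x+\e_{-i})$ via definition~(\ref{def:mu-bar}), using the invariance $v_i(j|\x+\e_{-i})=v_i(j|\x)$ for the reward term and the shift $v_{-i}(t-k|\x+\e_{-i})=v_{-i}(t-k+1|\x)$ for the payment term. Your added remark on why $k\le t-1$ is needed is a sensible piece of bookkeeping that the paper leaves implicit, but the argument is otherwise identical.
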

\begin{proof}
By definition~(\ref{def:mu-bar}) of $\bar{\mu}$, we have
\begin{align*}
\bar{\mu}_i(k|\x+\e_{-i}) & = \sum_{j=1}^k v_i(j|\x+\e_{-i}) - k \cdot v_{-i}(t-1-k+1|\x+\e_{-i}) \nonumber \\
& = \sum_{j=1}^k v_i(j|\x+\e_{-i}) - k \cdot v_{-i}(t-k|\x+\e_{-i}) \nonumber \\
& = \sum_{j=1}^k v_i(j|\x) - k \cdot v_{-i}(t-k|\x+\e_{-i}) \nonumber \\
& = \sum_{j=1}^k v_i(j|\x) - k \cdot v_{-i}(t-k+1|\x) \nonumber  \\
& = \bar{\mu}_i(k|\x)
\end{align*}
Here, the first and last equality hold by the definition of $\bar{\mu}$, the second equality holds by simplifying 
the expression in $v_i(\cdot|\x+\e_i)$, and third and fourth equalities hold by considering the shift in valuations 
$\x \rightarrow \x + \e_{-i}$.
\end{proof}

In contrast, if the buyer targets $k$ items using the greedy bidding strategy then its utility can
change if it successfully wins the first item.
\begin{claim}\label{cl:win-utility-k}
Let $\x$ be a decision node where $t > 1$. Then for any buyer $i$ and any $k\le t$
\begin{equation*}
\bar{\mu}_i(k-1|\x+\e_i) \ =\  \bar{\mu}_i(k|\x) - v_i(1|\x) + v_{-i}(t-k+1|\x)
\end{equation*}
\end{claim}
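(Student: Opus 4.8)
The plan is to prove the identity by directly expanding both sides using definition~(\ref{def:mu-bar}) of $\bar{\mu}$ together with the two shift rules for incremental valuations, exactly as in the proof of Claim~\ref{cl:lose-utility-k}. The only conceptual point is that at the node $\x+\e_i$ there are $t-1$ items remaining, so when I instantiate the definition of $\bar{\mu}_i(k-1|\x+\e_i)$ I must use $t-1$ in place of $t$ and $k-1$ in place of $k$.

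First I would write
\begin{equation*}
\bar{\mu}_i(k-1|\x+\e_i) \ =\ \sum_{j=1}^{k-1} v_i(j|\x+\e_i) \ -\ (k-1)\cdot v_{-i}\big((t-1)-(k-1)+1 \,\big|\, \x+\e_i\big).
\end{equation*}
The critical bookkeeping step is to simplify the index of the $v_{-i}$ term: $(t-1)-(k-1)+1 = t-k+1$, so this marginal value is $v_{-i}(t-k+1|\x+\e_i)$. This is precisely the index appearing in $\bar{\mu}_i(k|\x)$, which is what makes the two $v_{-i}$ terms reference the \emph{same} marginal value and hence combine cleanly at the end.

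Next I would apply the shift $\x \rightarrow \x+\e_i$. For buyer~$i$, winning an item shifts her incremental values, so $v_i(j|\x+\e_i) = v_i(j+1|\x)$; reindexing gives $\sum_{j=1}^{k-1} v_i(j+1|\x) = \sum_{j=2}^{k} v_i(j|\x) = \sum_{j=1}^{k} v_i(j|\x) - v_i(1|\x)$. For buyer~$-i$, her valuations are unaffected by $i$'s allocation, so $v_{-i}(t-k+1|\x+\e_i) = v_{-i}(t-k+1|\x)$. Substituting both yields
\begin{equation*}
\bar{\mu}_i(k-1|\x+\e_i) \ =\ \Big(\textstyle\sum_{j=1}^{k} v_i(j|\x) - v_i(1|\x)\Big) \ -\ (k-1)\cdot v_{-i}(t-k+1|\x).
\end{equation*}

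Finally I would compare this against $\bar{\mu}_i(k|\x) = \sum_{j=1}^{k} v_i(j|\x) - k\cdot v_{-i}(t-k+1|\x)$. The summation terms differ by exactly $-v_i(1|\x)$, while the coefficient of $v_{-i}(t-k+1|\x)$ changes from $-k$ to $-(k-1)$, contributing a surplus of $+v_{-i}(t-k+1|\x)$; collecting these gives the claimed expression $\bar{\mu}_i(k|\x) - v_i(1|\x) + v_{-i}(t-k+1|\x)$. I do not anticipate a genuine obstacle here: the argument is a routine computation, and the only place demanding care is verifying the index collapse $(t-1)-(k-1)+1 = t-k+1$, since it is this coincidence (rather than any inequality or monotonicity assumption) that lets the two $v_{-i}$ contributions merge.
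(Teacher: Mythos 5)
Your proposal is correct and follows essentially the same route as the paper's proof: expand $\bar{\mu}_i(k-1|\x+\e_i)$ from definition~(\ref{def:mu-bar}), collapse the index $(t-1)-(k-1)+1=t-k+1$, apply the two valuation shift rules, and reindex the sum. The bookkeeping is handled correctly throughout, so there is nothing to add.
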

\begin{proof}
By definition~(\ref{def:mu-bar}) of $\bar{\mu}$, we have
\begin{align*}
\bar{\mu}_i(k-1|\x+\e_i) & = \sum_{j=1}^{k-1} v_i(j|\x+\e_i) - (k-1)\cdot v_{-i}((t-1)-(k-1)+1|\x+\e_i) \\
& = \sum_{j=1}^{k-1} v_i(j+1|\x) - (k-1)\cdot v_{-i}(t-k+1|\x+\e_i) \\
& = \sum_{j=1}^{k-1} v_i(j+1|\x) - (k-1) \cdot v_{-i}(t-k+1|\x) \\
& = \sum_{j=2}^{k} v_i(j|\x) - (k-1) \cdot v_{-i}(t-k+1|\x) \\
& = \bar{\mu}_i(k|\x) - v_i(1|\x) + v_{-i}(t-k+1|\x)
\end{align*}
Again, the first and last equalities hold by the definition of $\bar{\mu}$. The second equality follows by simplifying the 
expression in $v_i(\cdot|\x+\e_i)$, the third equality holds by considering the shift in incremental 
valuations $\x \rightarrow \x + \e_i$, and the fourth equality holds by a change of variables.
\end{proof}

Let us now prove Lemma~\ref{lem:evolutionMinmaxUtility}. 
We want to show that, at any decision node $\x$, if a buyer~$i$ greedy demand corresponds 
to the entire set of items for sale, then its greedy utility strictly decreases if buyer~$-i$ wins an item at $\x$. If 
instead buyer~$i$'s greedy demand does not correspond to all the items up for sale, then buyer~$i$'s greedy utility 
does not change if buyer~$-i$ wins an item at $\x$.

\begin{proof}[Proof (of Lemma~\ref{lem:evolutionMinmaxUtility})]
By definition~(\ref{def:minmax-utility}) of the greedy utility and by Claim~\ref{cl:lose-utility-k}, we have
\begin{align}\label{eq:mu-1}
\mu_i(\x+\e_{-i}) 
& = \max_{k \in [t-1] \cup \{0\}} \bar{\mu}_i(k|\x+\e_{-i}) \nonumber \\
& = \max_{k \in [t-1] \cup \{0\}} \ \bar{\mu}_i(k|\x) \nonumber \\
&\leq  \max_{k \in [t] \cup \{0\}} \bar{\mu}_i(k|\x)  \nonumber\\
&=  \mu_i(\x)
\end{align}
But equality holds in the second inequality if and only if $\kappa_i(\x) < t$.
The first part of the lemma then follows immediately from (\ref{eq:mu-1}).

Observe that if buyer~$i$ wins at decision node $\x$ and then wins $k-1$ items implementing its greedy strategy, then
its profit is exactly $\bar\mu_i(k-1|\x+\e_{i}) +v_i(1|\x) - b_{-i}(\x)$. By Claim~\ref{cl:win-utility-k}, this is greater than
$\bar{\mu}_i(k|\x)$ if the other buyer bid $b_{-i}(\x)<v_{-i}(t-k+1|\x)$. Setting $k = \kappa_i(\x)$ shows the 
second part of the lemma.
\end{proof}

\subsubsection{Evolution of Greedy Demand}

Next, we turn attention to how the demand evolves. We show that, if the greedy demand of a buyer is less than the entire supply, 
then it remains constant upon losing the current item. Intuitively, we could assume that buyer~$i$ did not demand the item 
for sale at $\x$, so we could assume that 
the demand was a subset of the supply at $\x+\e_{-i}$.
 If instead buyer~$i$ wins an item, then its greedy demand can decrease by at most one. 
In particular, 
if buyer~$i$ demands the entire supply at decision node $\x$, upon winning an item, it will continue to demand the entire supply.

\begin{lemma}\label{lem:evolutionMinmaxDemand}
For any $\x \in \mathbb{H}_+$ and any $i \in \{1,2\}$:
\begin{align*}
\kappa_i(\x) < t &\Rightarrow \kappa_i(\x+\e_{-i}) = \kappa_i(\x) \\
t > 1 &\Rightarrow \kappa_i(\x+\e_i) \geq \kappa_i(\x) - 1 \\
t > 1, \kappa_i(\x) = t & \Rightarrow \kappa_i(\x+\e_i) = t-1
\end{align*}
\end{lemma}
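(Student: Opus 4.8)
The plan is to reduce each of the three implications to a comparison among the target-$k$ utilities $\bar\mu_i(k\,|\,\cdot)$ and then track the location of the \emph{smallest} maximizer, which is by definition $\kappa_i$. The two workhorses are Claim~\ref{cl:lose-utility-k}, which says that losing leaves the profile $k\mapsto\bar\mu_i(k\,|\,\x)$ unchanged on the overlapping domain $\{0,\dots,t-1\}$, and Claim~\ref{cl:win-utility-k}, which relates $\bar\mu_i(k-1\,|\,\x+\e_i)$ to $\bar\mu_i(k\,|\,\x)$. Throughout I would also use concavity, i.e.\ that $v_{-i}(\cdot\,|\,\x)$ is non-increasing.

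For the loss case (first implication) this is immediate. Since $\kappa_i(\x)<t$, the smallest maximizer of $\bar\mu_i(\cdot\,|\,\x)$ over $\{0,\dots,t\}$ already lies in $\{0,\dots,t-1\}$, and by Claim~\ref{cl:lose-utility-k} the profile at $\x+\e_{-i}$ agrees with that at $\x$ on precisely this range. Dropping the now-unavailable index $k=t$ therefore changes neither the maximum value nor the smallest maximizer, so $\kappa_i(\x+\e_{-i})=\kappa_i(\x)$.

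The winning cases (second and third implications) are where the real work is. Writing $\kappa=\kappa_i(\x)$ and applying Claim~\ref{cl:win-utility-k} with $k=k'+1$ gives $\bar\mu_i(k'\,|\,\x+\e_i)=\bar\mu_i(k'+1\,|\,\x)-v_i(1\,|\,\x)+v_{-i}(t-k'\,|\,\x)$. The additive $-v_i(1\,|\,\x)$ is a constant and drops out of any comparison, but the term $v_{-i}(t-k'\,|\,\x)$ depends on $k'$, so the maximizer does \emph{not} simply shift down by one. To prove $\kappa_i(\x+\e_i)\ge\kappa-1$ (assuming $\kappa\ge2$; the cases $\kappa\le1$ give $\kappa-1\le0$ and are trivial), I would show that the candidate index $\kappa-1$ strictly beats every smaller index $k'\le\kappa-2$: after substituting the claim and cancelling $-v_i(1\,|\,\x)$, the desired inequality becomes $\bar\mu_i(\kappa\,|\,\x)+v_{-i}(t-\kappa+1\,|\,\x)>\bar\mu_i(k'+1\,|\,\x)+v_{-i}(t-k'\,|\,\x)$. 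Here $\bar\mu_i(\kappa\,|\,\x)>\bar\mu_i(k'+1\,|\,\x)$ strictly, because $k'+1<\kappa$ and $\kappa$ is the \emph{smallest} maximizer at $\x$; and $v_{-i}(t-\kappa+1\,|\,\x)\ge v_{-i}(t-k'\,|\,\x)$ weakly, because $t-\kappa+1\le t-k'$ and $v_{-i}$ is non-increasing. A strict term plus a weak term in the same direction yields a strict inequality, so no index below $\kappa-1$ can be a maximizer at $\x+\e_i$, giving $\kappa_i(\x+\e_i)\ge\kappa-1$. The third implication is the special case $\kappa=t$: the same estimate, now using that $\bar\mu_i(t\,|\,\x)$ strictly exceeds $\bar\mu_i(k\,|\,\x)$ for all $k<t$, shows the top index $t-1$ strictly beats every smaller $k'<t-1$, forcing $\kappa_i(\x+\e_i)=t-1$.

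The main obstacle, and the one point needing care, is exactly the $k'$-dependent price term $v_{-i}(t-k'\,|\,\x)$ that the win introduces: unlike the loss case there is no clean equality of profiles, so one cannot argue by a bare index shift. The delicate step is to pair the strict gap coming from ``$\kappa$ is the smallest maximizer'' with the weak gap coming from concavity so that their sum is strictly positive, while confirming that the boundary indices ($\kappa\in\{0,1\}$, and $k=t$) fall into the trivial or strict cases respectively.
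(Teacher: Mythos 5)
Your proposal is correct and follows essentially the same route as the paper: the loss case via the invariance of the profile $k\mapsto\bar{\mu}_i(k|\x)$ under losing (Claim~\ref{cl:lose-utility-k}), and the win cases by applying Claim~\ref{cl:win-utility-k}, pairing the strict gap from the minimality of $\kappa_i(\x)$ with the weak gap from the monotonicity of $v_{-i}$ (exactly the chain in the paper's Lemma~\ref{lem:kappa-minus-1}), with the $\kappa_i(\x)=t$ case as the boundary instance. No gaps.
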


As before, we divide Lemma \ref{lem:evolutionMinmaxDemand} into several separate statements. 
First, note that proof of Claim \ref{cl:sdf<t} implies that, if the buyer does not demand the entire supply, then its demand will not change if it loses: 
\begin{corollary}\label{lem:demand<supply}
If the greedy demand of a buyer is less than the entire supply then it remains constant upon losing the current item.
That is,
\begin{equation*}
\kappa_i(\x) < t  \ \Longrightarrow \ \kappa_i(\x+\e_{-i}) = \kappa_i(\x)
\end{equation*}
\end{corollary}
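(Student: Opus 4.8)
The plan is to reduce everything to Claim~\ref{cl:lose-utility-k}, which already tells us that losing the current item leaves the greedy payoff of every still-feasible target unchanged. The hypothesis $\kappa_i(\x)<t$ means $\kappa_i(\x)\le t-1$, so the only target that becomes infeasible at $\x+\e_{-i}$ (namely $k=t$) was never the minimal maximizer to begin with; hence the argmax structure should carry over verbatim and the greedy demand should be preserved.

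First I would record the payoff identity. By Claim~\ref{cl:lose-utility-k}, for every $k\in[t-1]\cup\{0\}$ we have $\bar{\mu}_i(k|\x+\e_{-i})=\bar{\mu}_i(k|\x)$. Thus the map $k\mapsto\bar{\mu}_i(k|\cdot)$ agrees on $\x$ and $\x+\e_{-i}$ over the entire range $[t-1]\cup\{0\}$ of targets available after the loss.

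Next I would pin down the minimal maximizer over the restricted range at $\x$. By definition~(\ref{def:minmax-demand}), $\kappa_i(\x)$ is the smallest element of $\arg\max_{k\in[t]\cup\{0\}}\bar{\mu}_i(k|\x)$, and the hypothesis gives $\kappa_i(\x)\le t-1$. So $\kappa_i(\x)$ is a maximizer lying inside $[t-1]\cup\{0\}$, whence it attains $\max_{k\in[t-1]\cup\{0\}}\bar{\mu}_i(k|\x)=\mu_i(\x)$; moreover no index smaller than $\kappa_i(\x)$ can maximize over this smaller range, since such an index would also maximize over the larger range $[t]\cup\{0\}$ and contradict the minimality defining $\kappa_i(\x)$. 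Therefore $\min\arg\max_{k\in[t-1]\cup\{0\}}\bar{\mu}_i(k|\x)=\kappa_i(\x)$.

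Finally I would combine the two observations. Since losing reduces the supply to $t-1$, the greedy demand at $\x+\e_{-i}$ is $\kappa_i(\x+\e_{-i})=\min\arg\max_{k\in[t-1]\cup\{0\}}\bar{\mu}_i(k|\x+\e_{-i})$, which by the payoff identity equals $\min\arg\max_{k\in[t-1]\cup\{0\}}\bar{\mu}_i(k|\x)$, and this in turn equals $\kappa_i(\x)$ by the previous step. The one place I would be careful—the main (if modest) obstacle—is bookkeeping the passage from the argmax over $[t]\cup\{0\}$ to the argmax over the truncated range $[t-1]\cup\{0\}$, ensuring that dropping the infeasible target $k=t$ neither lowers the maximum value nor creates a smaller maximizer; the assumption $\kappa_i(\x)<t$ is exactly what rules this out.
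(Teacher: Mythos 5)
Your proposal is correct and takes essentially the same route as the paper: both reduce the statement to the payoff identity $\bar{\mu}_i(k|\x+\e_{-i})=\bar{\mu}_i(k|\x)$ for $k\le t-1$ (Claim~\ref{cl:lose-utility-k}) and then observe that passing from the argmax over $[t]\cup\{0\}$ to the argmax over $[t-1]\cup\{0\}$ can only change the minimal maximizer if $t$ were the unique maximizer, which $\kappa_i(\x)<t$ forbids. Your write-up is in fact a bit more careful than the paper's (which cites Claim~\ref{cl:sdf<t} where it evidently means Claim~\ref{cl:lose-utility-k}), but the underlying argument is identical.
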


\begin{proof}
By the proof of Claim \ref{cl:sdf<t}, $\mu_i(\x+\e_{-i}) = \max_{k \in [t-1] \cup \{0\}} \bar{\mu}_i(k|\x)$. Note that:
\begin{align*}
\kappa_i(\x+\e_{-i}) &= \min \, \arg\max_{k \in [t-1] \cup \{0\}} \bar{\mu}_i(k|\x) \\
\kappa_i(\x)&=  \min\, \arg\max_{k \in [t] \cup \{0\}} \bar{\mu}_i(k|\x) 
\end{align*}
Suppose that $\kappa_i(\x) \neq \kappa_i(\x+\e_{-i})$ -- this can only be the case if:
$$t = \arg \max_{k \in [t] \cup \{0\}} \bar{\mu}_i(k|\x)$$
This contradicts the fact that $\kappa_i(\x) < t$.
\end{proof}

The next lemma concerns the behaviour of the greedy demand $\kappa_i(\x)$ if buyer $i$ wins: upon purchasing an item, buyer $i$'s greedy demand cannot decrease too much.

\begin{lemma}\label{lem:kappa-minus-1}
The greedy demand of a buyer can decrease by at most one when the buyer wins.
Specifically, for any decision node $\x$ with $t>1$ and any buyer~$i$,
$$\kappa_i(\x+\e_i) \ \geq\  \kappa_i(\x) - 1$$
\end{lemma}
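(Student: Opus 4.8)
The plan is to route everything through Claim~\ref{cl:win-utility-k}, which is the only available bridge between greedy payoffs at $\x$ and at $\x+\e_i$. Substituting $k = k'+1$ into that claim (and noting $\bar{\mu}_i(0|\cdot)=0$ handles the boundary), I get, for every $k' \in [t-1]\cup\{0\}$, the identity
\begin{equation*}
\bar{\mu}_i(k'|\x+\e_i) \ =\ \bar{\mu}_i(k'+1|\x) - v_i(1|\x) + v_{-i}(t-k'|\x).
\end{equation*}
Write $\kappa = \kappa_i(\x)$. The claim is vacuous when $\kappa \le 1$, since $\kappa_i(\x+\e_i)\ge 0$ always, so I may assume $\kappa \ge 2$; and since $\kappa \le t$, the index $\kappa-1$ is a feasible demand at $\x+\e_i$.

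First I would isolate the key step: show that $\bar\mu_i(\kappa-1|\x+\e_i) > \bar\mu_i(k'|\x+\e_i)$ for every $k' \le \kappa-2$. This is exactly what is needed, because $\kappa_i(\x+\e_i)$ is the \emph{smallest} maximizer of $\bar\mu_i(\cdot|\x+\e_i)$; once no index $\le \kappa-2$ can be a maximizer, we conclude $\kappa_i(\x+\e_i) \ge \kappa-1$. Applying the displayed identity to both terms, the $-v_i(1|\x)$ cancels and the difference splits cleanly as
\begin{equation*}
\bar\mu_i(\kappa-1|\x+\e_i) - \bar\mu_i(k'|\x+\e_i) \ =\ \big(\bar\mu_i(\kappa|\x) - \bar\mu_i(k'+1|\x)\big) + \big(v_{-i}(t-\kappa+1|\x) - v_{-i}(t-k'|\x)\big).
\end{equation*}
The first bracket is strictly positive: since $\kappa$ is the minimal maximizer of $\bar\mu_i(\cdot|\x)$ and $k'+1 < \kappa$, we have $\bar\mu_i(k'+1|\x) < \bar\mu_i(\kappa|\x)$.

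The main obstacle is the second bracket. Claim~\ref{cl:win-utility-k} does not shift demands by a clean reindexing but adds the index-dependent term $v_{-i}(t-k'|\x)$, and I must ensure this does not cancel the strict gain coming from the first bracket. This is precisely where the hypothesis of non-increasing incremental values is used: for $k' \le \kappa-2$ we have $t-k' \ge t-\kappa+2 > t-\kappa+1$, so monotonicity of $v_{-i}$ gives $v_{-i}(t-k'|\x) \le v_{-i}(t-\kappa+1|\x)$, making the second bracket non-negative. Hence the total difference is strictly positive for all $k' \le \kappa-2$, so no such index maximizes $\bar\mu_i(\cdot|\x+\e_i)$, and therefore $\kappa_i(\x+\e_i) \ge \kappa-1$, as claimed.
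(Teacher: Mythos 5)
Your proof is correct and follows essentially the same route as the paper's: both invoke Claim~\ref{cl:win-utility-k} to transport the greedy payoffs from $\x$ to $\x+\e_i$, use the minimality of $\kappa_i(\x)$ among maximizers for the strict inequality, and use the weak monotonicity of $v_{-i}$ to control the index-dependent price term. The only difference is cosmetic (you reindex via $k'=k-1$ and present the comparison as a two-bracket decomposition, while the paper writes it as a chain of inequalities).
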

\begin{proof}
If $\kappa_i(\x) \leq 1$, the result is trivial.
So assume that $\kappa_i(\x) > 1$. 
Now, to simplify the notation that follows, set $\kappa_i=\kappa_i(\x)$.
Because $\kappa_i$ is the smallest number of items the buyer can target to obtain its maximum utility via
the greedy bidding strategy, we have $\bar{\mu}_i(k|\x) < \bar{\mu}_i(\kappa_i|\x)$ 
for any $k < \kappa_i$.
This fact gives
\begin{align}\label{eq:kappa-minus-1}
\bar{\mu}_i(\kappa_i-1|\x+\e_i) 
& = \bar{\mu}_i(\kappa_i|\x) - v_1(1|\x) + v_{-i}(t-\kappa_i-1|\x+\e_i) \nonumber \\ 
& > \bar{\mu}_i(k|\x) - v_1(1|\x) + v_{-i}(t-\kappa_i-1|\x+\e_i)\nonumber  \\
& \geq \bar{\mu}_i(k|\x) - v_1(1|\x) + v_{-i}(t-k-1|\x+\e_i) \nonumber \\
& = \bar{\mu}_i(k-1|\x+\e_i)
\end{align}
Here the two equalities are applications of Claim~\ref{cl:win-utility-k}.
The weak inequality follows because the incremental values of buyer~$-i$ are weakly decreasing.

So (\ref{eq:kappa-minus-1}) tells us that $\bar{\mu}_i(\kappa_i-1|\x+\e_i) > \bar{\mu}_i(k-1|\x+\e_i)$
for any $k < \kappa_i$. The result immediately follows.
\end{proof}

Lemma~\ref{lem:kappa-minus-1} implies that if a buyer's short-sighted optimization problem compels it 
to demand the entire supply,  then it will still demand all the remaining supply upon winning that round.
\begin{corollary}\label{cor:entire-supply}
If the greedy demand of a buyer is the entire supply then, upon winning the current item, it will 
continue to demand the entire remaining supply. That is,
\[
\pushQED{\qed}  
\kappa_i(\x) = t \ \Longrightarrow \ \kappa_i(\x+\e_i) = t-1 \qedhere \popQED 
\]
\end{corollary}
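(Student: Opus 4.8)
The plan is to pin down $\kappa_i(\x+\e_i)$ exactly by squeezing it between $t-1$ and $t-1$. The lower bound is precisely the content of Lemma~\ref{lem:kappa-minus-1}: instantiating that lemma at the hypothesis $\kappa_i(\x) = t$ gives
$$\kappa_i(\x+\e_i) \ \geq\ \kappa_i(\x) - 1 \ =\ t-1.$$
This is the only nontrivial ingredient, and since the lemma is already proved, no further argument is required to secure it.

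For the matching upper bound I would invoke the definition of greedy demand directly. At the child node $\x+\e_i$ exactly $t(\x+\e_i) = t-1$ items remain to be sold, so by definition~(\ref{def:minmax-demand}) the quantity $\kappa_i(\x+\e_i)$ is obtained as a minimal maximizer of $\bar{\mu}_i(\,\cdot\,|\x+\e_i)$ over $k \in [t-1] \cup \{0\}$. Hence $\kappa_i(\x+\e_i) \leq t-1$ for free, simply because a buyer cannot target more items than are still available.

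Putting the two bounds together yields $\kappa_i(\x+\e_i) = t-1$, which is the claim. I do not expect any real obstacle here; the argument is immediate from Lemma~\ref{lem:kappa-minus-1}. The only point deserving a moment's care is the degenerate case $t = 1$, in which that lemma's hypothesis $t > 1$ fails. But then $\x+\e_i$ is a terminal node with no items remaining, so $\kappa_i(\x+\e_i) = 0 = t-1$ holds trivially, and the corollary is verified in this case as well.
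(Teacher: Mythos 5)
Your proposal is correct and matches the paper's own (implicit) argument: the corollary is stated there as an immediate consequence of Lemma~\ref{lem:kappa-minus-1}, with the upper bound $\kappa_i(\x+\e_i) \leq t-1$ understood to be automatic from the definition of greedy demand. Your explicit treatment of the trivial upper bound and the degenerate case $t=1$ is just a more careful write-up of the same reasoning.
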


\subsubsection{Evolution of Baseline and Threshold Prices}

Finally, we inspect the evolution of baseline and threshold prices. The first lemma shows that the baseline price is a lower bound for the 
threshold price.
Intuitively, if buyer~$i$ can win an item at a price no greater than the baseline price, by Lemma \ref{lem:evolutionMinmaxUtility} 
she would be able to obtain at least her greedy outcome, so her threshold price cannot be less than her baseline price.

\begin{lemma}\label{lem:base-thresh}
For any $\x \in \mathbb{H}_+$ and any $i \in \{1,2\}$, $p_i(\x) \geq \beta_i(\x)$.
\end{lemma}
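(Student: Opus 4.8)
The plan is to split on whether buyer~$i$ has duopsony power at $\x$, i.e. on whether $f_i(\x)=0$ or $f_i(\x)>0$, matching the two branches of the definition~(\ref{def:baseline-price}) of the baseline price. In both cases the threshold price is $p_i(\x)=v_i(1|\x)+\mu_i(\x+\e_i)-\mu_i(\x+\e_{-i})$ by~(\ref{def:threshold-price}), so the entire statement reduces to controlling how buyer~$i$'s greedy utility moves when she wins versus loses the current item. That is exactly the content of Lemma~\ref{lem:evolutionMinmaxUtility}, so I expect this proof to be a short assembly of its two halves rather than a fresh computation.

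In the case $f_i(\x)=0$ we have $\kappa_i(\x)=0$ and $\mu_i(\x)=0$, and $\beta_i(\x)=v_i(1|\x)$. By Claim~\ref{cl:sdf-0} buyer~$i$ still has no duopsony power at either successor node (and if $t=1$ both successors are terminal), so $\mu_i(\x+\e_i)=\mu_i(\x+\e_{-i})=0$. Substituting into~(\ref{def:threshold-price}) gives $p_i(\x)=v_i(1|\x)=\beta_i(\x)$, so the inequality holds with equality. In the case $f_i(\x)>0$ we instead have $\kappa_i(\x)>0$ and $\beta_i(\x)=v_{-i}(t-\kappa_i(\x)+1|\x)$. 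Here I would invoke both halves of Lemma~\ref{lem:evolutionMinmaxUtility}: its first half gives the loss bound $\mu_i(\x+\e_{-i})\le\mu_i(\x)$ (with equality when $\kappa_i(\x)<t$ and strict inequality when $\kappa_i(\x)=t$), while its second half, applicable precisely because $\kappa_i(\x)>0$, gives the win bound $\mu_i(\x+\e_i)\ge\mu_i(\x)-v_i(1|\x)+\beta_i(\x)$. Plugging these two estimates into $p_i(\x)=v_i(1|\x)+\mu_i(\x+\e_i)-\mu_i(\x+\e_{-i})$, the $\mu_i(\x)$ terms cancel against one another and the $v_i(1|\x)$ terms cancel as well, leaving exactly $p_i(\x)\ge\beta_i(\x)$.

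I do not anticipate a real obstacle: the analytic work is entirely front-loaded into Lemma~\ref{lem:evolutionMinmaxUtility}, and the present lemma is essentially a one-line cancellation once the correct win/loss bounds are lined up. The only points needing a moment of care are bookkeeping rather than mathematical: making sure the $f_i(\x)=0$ branch is handled separately (since the win bound of Lemma~\ref{lem:evolutionMinmaxUtility} assumes $\kappa_i(\x)>0$), and checking the boundary case $t=1$, where both successors are terminal so all greedy utilities vanish and the stated bounds still hold (indeed with equality in the win bound). With those edge cases dispatched, the two-case argument above closes the proof.
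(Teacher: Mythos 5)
Your proof is correct and follows essentially the same route as the paper: both arguments reduce to the cancellation $p_i(\x) \ge v_i(1|\x) + \bigl(\mu_i(\x) - v_i(1|\x) + \beta_i(\x)\bigr) - \mu_i(\x) = \beta_i(\x)$ coming from the win and loss bounds on the greedy utility. The only difference is organizational: by citing the second half of Lemma~\ref{lem:evolutionMinmaxUtility} directly (which is stated for all of $\mathbb{H}_+$ and whose $t=1$ boundary you correctly verify), you collapse the paper's separate $f_i(\x)=1$ and $f_i(\x)>1$ cases, where the paper re-derives the win bound from Claim~\ref{cl:win-utility-k}, into a single uniform case.
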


\begin{proof}
First consider the case $f_i(\x) = 0$. Then buyer~$i$ has no duopsony power and, by Claim~\ref{cl:sdf-0}, we have 
$f_i(\x+\e_j) = 0$, for any buyer $j$. Consequently  $\mu_i(\x+\e_j) = 0$.
Thus, by definition~(\ref{def:threshold-price}), the threshold price satisfies
\begin{align*}
p_{i}(\x) &= v_i (1|\x) + \mu_i(\x+\e_i) - \mu_i(\x+\e_{-i}) \\
&= v_i (1|\x)  
\end{align*}
Similarly, by definition~(\ref{def:baseline-price}), we have $\beta_i(\x) = v_{i}(1|\x)$. Thus the baseline and
threshold prices are equal.

Next, consider the case $f_i(\x) = 1$.
Then $f_i(\x+\e_i) = f_i(\x)-1 =0$ by Claim~\ref{cl:sdf<t},
hence: 
\begin{equation}\label{eq:mu-zero}
\mu_i(\x+\e_i) \ =\ 0
\end{equation}
Now partition the analysis in two further cases. First, assume $t > 1$. Then, again by Claim~\ref{cl:sdf<t}, we have 
$f_i(\x+\e_{-i}) = f_i(\x)= 1$. So the greedy demands satisfy $\kappa_i(\x+\e_{-i})=\kappa_i(\x)=1$. Therefore,
\begin{align}\label{eq:mu-v}
\mu_i(\x+\e_{-i}) &= \bar{\mu}_i(1|\x+\e_{-i}) \nonumber \\
&= v_i(1|\x+\e_{-i}) - v_{-i}(t-1|\x+\e_{-i}) \nonumber \\
& = v_i(1|\x) - v_{-i}(t|\x) \\
& = \bar{\mu}_i(1|\x)  \nonumber \\
&= \mu_i(\x) \nonumber
\end{align}
Because $f_i(\x) > 0$, by definition~(\ref{def:baseline-price}), the baseline price is
\begin{equation}\label{eq:beta-v}
\beta_i(\x)\ =\  v_{-i}(t-\kappa_i(\x)+1|\x) \ =\ v_{-i}(t|\x)
\end{equation}
Thus, the threshold price is given by
\begin{align*}
p_{i}(\x) &= v_i (1|\x) + \mu_i(\x+\e_i) - \mu_i(\x+\e_{-i}) \\
&= v_i (1|\x) - \mu_i(\x+\e_{-i}) \\
&= v_i (1|\x)  -  \left(v_i(1|\x) - v_{-i}(t|\x)   \right)\\
&=   v_{-i}(t|\x)  \\
&=  \beta_i(\x) 
\end{align*}
Here the second, third and fifth equalities follows from (\ref{eq:mu-zero}), (\ref{eq:mu-v}) and (\ref{eq:beta-v}), respectively.
Thus, again, the baseline and threshold prices are equal.

Second assume instead that $t = 1$. Because this is then the final round, we have 
$\mu_i(\x+\e_i) = \mu_i(\x+\e_{-i}) = 0$.
The threshold price is then $p_i(\x) = v_i(1|\x)$. 
Now as $f_i(\x) = 1$ we have $v_i(1|\x) > v_{-i}(t|\x) = \beta_i(\x)$.
So the  threshold price is strictly greater than the baseline price.

Finally consider the case $f_i(\x) > 1$. 
Then, by Lemma~\ref{lem:sdf}, we have $f_i(\x+\e_j) \geq f_i(\x) -1 \ge 1$ for any buyer $j$. 
Now, denoting $\kappa_i(\x)$ as $\kappa_i$,
\begin{align}\label{eq:mu-into-beta}
\mu_i(\x+\e_i)
& = \max_{k \in [t-1]} \bar{\mu}_i(k|\x+\e_{i}) \nonumber \\
& \geq \bar{\mu}_i(\kappa_i-1|\x+\e_i) \nonumber \\
& = \bar{\mu}_i(\kappa_i|\x) - v_i(1|\x) + v_{-i}(t-\kappa_i+1|\x) \nonumber \\
& = \bar{\mu}_i(\kappa_i|\x) - v_i(1|\x) + \beta_i(\x)
\end{align}
Here the inequality follows from considering a specific solution to the maximization problem. The first 
equality follows from Claim \ref{cl:win-utility-k}, and the final equality follows from the definition of~(\ref{def:baseline-price}) of the 
baseline price $\beta_i(\x)$. 
Plugging (\ref{eq:mu-into-beta}) into the definition~(\ref{def:threshold-price}) of the threshold price, we obtain:
\begin{align}\label{eqn:last-case}
p_i(\x)  &= v_i(1|\x) + \mu_i(\x+\e_i) - \mu_i(\x+\e_{-i}) \nonumber\\
& \geq   v_i(1|\x)  + \bar{\mu}_i(\kappa_i|\x) - v_i(1|\x) + \beta_i(\x) - \mu_i(\x+\e_{-i}) \nonumber\\
& = \beta_i(\x) + \mu_i(\x)  - \mu_i(\x+\e_{-i})
\end{align}
By Lemma~\ref{lem:evolutionMinmaxUtility}, we know that $\mu_i(\x) \ge \mu_i(\x+\e_{-i})$.
Hence~(\ref{eqn:last-case}) implies that the threshold price is at least the baseline price.
\end{proof}

Moreover, if buyer~$i$'s greedy demand corresponds to the entire supply, it should ensure that it wins every item while targeting his greedy utility. 
This implies the inequality of Lemma \ref{lem:base-thresh} should become strict.

\begin{lemma}\label{lem:strictly-greater}
If the greedy demand of a buyer is the entire supply then the threshold price is strictly greater than the baseline price.
Specifically,
\begin{equation*}
\kappa_i(\x)=t \ \Longrightarrow\  p_i(\x) >  \beta_i(\x)
\end{equation*}
\end{lemma}

\begin{proof}
By assumption $\kappa_i(\x)=t$,
hence by Corollary~\ref{cor:entire-supply}, we have $\kappa_i(\x+\e_i)=t-1$. Then:
\begin{equation}\label{eq:A}
\mu_i(\x+\e_i) \ =\ \bar\mu_i(t-1 | \x+\e_i) 
\end{equation}
Furthermore, by Claim~\ref{cl:win-utility-k},
\begin{equation}\label{eq:B}
\bar{\mu}_i(t-1|\x+\e_i) \ =\  \bar{\mu}_i(t|\x) + v_i(1|\x) - v_{-i}(1|\x)
\end{equation}
Finally, by Lemma~\ref{lem:evolutionMinmaxUtility}, we have 
\begin{equation}\label{eq:C}
\mu_i(\x) > \mu_i(\x+\e_{-i}) 
\end{equation}
Thus the threshold price at the decision node $(\x,t)$ satisfies
\begin{align}\label{eq:D}
p_i(\x) &= v_i(1|\x) + \mu_i(\x+\e_i) - \mu_i(\x+\e_{-i}) \nonumber \\
&= v_i(1|\x) + \bar\mu_i(t-1 | \x+\e_i) - \mu_i(\x+\e_{-i}) \nonumber \\
&= v_i(1|\x) + \left( \bar{\mu}_i(t|\x) - v_i(1|\x) + v_{-i}(1|\x)\right) - \mu_i(\x+\e_{-i}) \nonumber  \\
&=  v_{-i}(1|\x) + \bar\mu_i(t|\x)  - \mu_i(\x+\e_{-i})\nonumber  \\
&= v_{-i}(1|\x) + \mu_i(\x)  - \mu_i(\x+\e_{-i}) \nonumber \\
&> v_{-i}(1|\x) 
\end{align}
Here the second and third equalities are due to~(\ref{eq:A}) and~(\ref{eq:B}), respectively. 
The strict inequality follows from~(\ref{eq:C}). 

To conclude, recall that $\kappa_i(\x)=t$ implies that the baseline price is 
$\beta_i(\x)=v_{-i}(t-\kappa_i(\x)+1|\x)=v_{-i}(1|\x)$. Substituting into~(\ref{eq:D})
gives $p_i(\x)>\beta_i(\x)$ and so the threshold price strictly exceeds the baseline price.
\end{proof}

Instead consider the case when some buyer~$i$ with duopsony power does not demand the entire supply. 
Suppose that, buyer~$i$ wins an item, and still has duopsony power after doing so. 
As its demand will not decrease significantly, its baseline price will be weakly higher. We would then presume that 
buyer~$i$ is in a situation that favours buying more items, hence it would be willing to pay higher prices.

\begin{lemma}\label{lem:evolutionThresholdWin}
Given $\x \in \mathbb{H}_+$ and $i \in \{1,2\}$ such that $f_i(\x) > 1$ and $\kappa_i(\x) < t$. Then $\beta_i(\x+\e_i) \geq p_i(\x)$,
where equality holds only if $\bar{\mu}(\kappa_i(\x+\e_i)+1|\x) = \mu_i(\x)$. Moreover, $p_i(\x+\e_i) \geq p_i(\x)$.
\end{lemma}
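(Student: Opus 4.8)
The plan is to reduce both inequalities to a single algebraic identity linking the threshold price $p_i(\x)$, the baseline price $\beta_i(\x+\e_i)$, and the greedy utility $\mu_i(\x)$. Throughout write $\kappa_i = \kappa_i(\x)$ and $\kappa_i' = \kappa_i(\x+\e_i)$. First I would extract the two structural consequences of the hypotheses. Since $f_i(\x) > 1$, Lemma~\ref{lem:sdf} gives $f_i(\x+\e_i) = f_i(\x) - 1 \geq 1 > 0$, so buyer~$i$ still has duopsony power after winning; hence the baseline price at the node $\x+\e_i$ (which has $t-1$ remaining items) is $\beta_i(\x+\e_i) = v_{-i}((t-1) - \kappa_i' + 1 \mid \x+\e_i) = v_{-i}(t - \kappa_i' \mid \x)$, using that buyer~$-i$'s incremental values are unaffected by the shift $\x \to \x+\e_i$. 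Since $\kappa_i(\x) < t$, the first part of Lemma~\ref{lem:evolutionMinmaxUtility} yields the crucial equality $\mu_i(\x+\e_{-i}) = \mu_i(\x)$, so the threshold price collapses to $p_i(\x) = v_i(1\mid\x) + \mu_i(\x+\e_i) - \mu_i(\x)$.

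The core of the argument is then to re-express $\mu_i(\x+\e_i)$. Here I would use $\mu_i(\x+\e_i) = \bar{\mu}_i(\kappa_i' \mid \x+\e_i)$ and apply Claim~\ref{cl:win-utility-k} with $k = \kappa_i' + 1$, which is admissible because $\kappa_i' \leq t-1$ forces $\kappa_i' + 1 \leq t$. This gives $\bar{\mu}_i(\kappa_i' \mid \x+\e_i) = \bar{\mu}_i(\kappa_i'+1 \mid \x) - v_i(1\mid\x) + v_{-i}(t - \kappa_i' \mid \x)$. Recognising that the leftover term $v_{-i}(t-\kappa_i'\mid\x)$ is exactly $\beta_i(\x+\e_i)$, substitution into the simplified threshold price produces the identity $p_i(\x) = \bar{\mu}_i(\kappa_i'+1 \mid \x) + \beta_i(\x+\e_i) - \mu_i(\x)$, equivalently $p_i(\x) - \beta_i(\x+\e_i) = \bar{\mu}_i(\kappa_i'+1 \mid \x) - \mu_i(\x)$.

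The first claim is now immediate: since $\mu_i(\x) = \max_k \bar{\mu}_i(k\mid\x) \geq \bar{\mu}_i(\kappa_i'+1\mid\x)$, the right-hand side is non-positive, so $\beta_i(\x+\e_i) \geq p_i(\x)$, with equality precisely when $\bar{\mu}_i(\kappa_i'+1\mid\x) = \mu_i(\x)$, matching the stated condition. For the moreover-clause I would simply invoke Lemma~\ref{lem:base-thresh} at the decision node $\x+\e_i$ — valid because $f_i(\x) > 1$ forces $t \geq 2$, so $t(\x+\e_i) = t-1 \geq 1$ and $\x+\e_i \in \mathbb{H}_+$ — which gives $p_i(\x+\e_i) \geq \beta_i(\x+\e_i)$; chaining with the first claim then yields $p_i(\x+\e_i) \geq \beta_i(\x+\e_i) \geq p_i(\x)$.

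The step I expect to be the main obstacle is the bookkeeping in the second paragraph: one must verify that the index $k = \kappa_i'+1$ lies in the admissible range of the maximisation defining $\mu_i(\x)$, and that the valuation shifts in Claim~\ref{cl:win-utility-k} align so the residual term is \emph{exactly} the baseline price $\beta_i(\x+\e_i)$ rather than a nearby incremental value. It is also worth flagging that the hypothesis $\kappa_i(\x) < t$ is genuinely load-bearing: it is what upgrades the weak bound $\mu_i(\x+\e_{-i}) \leq \mu_i(\x)$ to the equality used above, and without it the sign of the error term would reverse the desired inequality.
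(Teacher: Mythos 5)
Your proof is correct and follows essentially the same route as the paper's: both derive the identity $p_i(\x) = \bar{\mu}_i(\kappa_i(\x+\e_i)+1\mid\x) + \beta_i(\x+\e_i) - \mu_i(\x)$ and read off the first claim from the optimality of $\mu_i(\x)$, with your use of Claim~\ref{cl:win-utility-k} merely packaging the sum manipulations the paper performs by hand. For the moreover-clause the paper splits into cases $\kappa_i(\x+\e_i) < t-1$ and $\kappa_i(\x+\e_i) = t-1$ (invoking Lemma~\ref{lem:strictly-greater} in the latter), but your single unconditional appeal to Lemma~\ref{lem:base-thresh} already suffices, so this is only a cosmetic difference.
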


\begin{proof}
Let us first  prove the first part of the statement. As $\kappa_i(\x) < t$,  we have:
$\mu_i(\x)= \mu_i(\x+\e_{-i})$ by Lemma~\ref{lem:evolutionMinmaxDemand}. 
Setting $\hat{k}=\kappa_i(\x+\e_i)$, the threshold price is then
\begin{align}\label{eq:pee1}
p_i(\x) 
& = v_i(1|\x) + \mu_i(\x+\e_i) - \mu_i(\x+\e_{-i}) \nonumber \\
& = v_i(1|\x) + \mu_i(\x+\e_i) - \mu_i(\x) \nonumber \\
& = v_i(1|\x) + \sum_{j=1}^{\hat{k}} v_i(j|\x+\e_i)  - \hat{k} \cdot v_{-i}(t-\hat{k}|\x+\e_i) - \mu_i(\x) \nonumber \\
& = v_i(1|\x) + \sum_{j=1}^{\hat{k}} v_i(j+1|\x)  - \hat{k} \cdot v_{-i}(t-\hat{k}|\x+\e_i) - \mu_i(\x)  \nonumber \\
& = \sum_{j=1}^{\hat{k}+1} v_i(j|\x)  - \hat{k} \cdot v_{-i}(t-\hat{k}|\x+\e_i) - \mu_i(\x)  
\end{align}
On the other hand,
\begin{align}\label{eq:pee2}
\hat{k} \cdot v_{-i}(t-\hat{k}|\x+\e_i) &=   (\hat{k}+1) \cdot v_{-i}(t-\hat{k}|\x+\e_i) - v_{-i}(t-\hat{k}|\x+\e_i) \nonumber\\
&=   (\hat{k}+1) \cdot v_{-i}(t-\hat{k}|\x)  - v_{-i}(t-\hat{k}|\x+\e_i)\nonumber \\
&=   (\hat{k}+1) \cdot v_{-i}(t-\hat{k}|\x)  - \beta_{i}(\x+\e_i) 
\end{align}
Together (\ref{eq:pee1}) and (\ref{eq:pee2}) give 
\begin{align*}
p_i(\x) & = \sum_{j=1}^{\hat{k}+1} v_i(j|\x)  - \left( (\hat{k}+1) \cdot v_{-i}(t-\hat{k}|\x)  - \beta_{i}(\x+\e_i)  \right)  - \mu_i(\x)  \\
& = \bar{\mu}_i(\hat{k}+1|\x) + \beta_{i}(\x+\e_i) - \mu_i(\x) \\
& \leq \beta_i(\x+\e_i)
\end{align*}
Here, the inequality is tight only if $\mu_i(\x) = \bar{\mu}_i(\hat{k}+1|\x)$.

Now we show that the threshold price weakly increases -- we have two cases to analyse. First, assume that $\kappa_i(\x+\e_i) < t-1$. 
Then:
\begin{align*}
p_i(\x+\e_i) &\geq \beta_i(\x+\e_i) \\
&\geq p_i(\x)
 \end{align*}
Here the first inequality follows from Lemma~\ref{lem:base-thresh}.

Second, assume that $\kappa_i(\x+\e_i) = t-1$. Then:
\begin{align*}
p_i(\x+\e_i) &> \beta_i(\x+\e_i) \\
&\geq p_i(\x)
 \end{align*}
Now the first inequality instead follows from Lemma~\ref{lem:strictly-greater}.
\end{proof}

If instead buyer $-i$ wins at $\x$, then buyer $i$ loses the opportunity to apply its greedy strategy to win $t$ items from $\x$. 
If buyer $i$ still does not demand the entire supply at $\x+\e_{-i}$, then this loss of opportunity translates to a lesser incentive to purchase 
at a given price:

\begin{lemma}\label{lem:threshold-decreasing}
Given $\x \in \mathbb{H}_+$ and $i \in \{1,2\}$ such that $f_i(\x) > 1$ but $\kappa_i(\x) < t-1$. Then $p_i(\x+\e_{-i},t-1) \ \leq\ p_i(\x,t)$. Moreover, 
the inequality is strict if and only if $\kappa_i(\x+\e_i) = t-1$.
\end{lemma}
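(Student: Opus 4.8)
The plan is to expand both threshold prices via their definition \eqref{def:threshold-price} and show that the difference $p_i(\x) - p_i(\x+\e_{-i})$ collapses to a single controllable term. Using that buyer~$i$'s incremental values are unaffected by buyer~$-i$'s allocation (so $v_i(1|\x+\e_{-i}) = v_i(1|\x)$), the leading $v_i(1|\cdot)$ terms cancel and, after regrouping, I would obtain
\begin{align*}
p_i(\x) - p_i(\x+\e_{-i})
&= \big[\mu_i(\x+\e_i) - \mu_i(\x+\e_{-i})\big] - \big[\mu_i(\x+\e_i+\e_{-i}) - \mu_i(\x+2\e_{-i})\big] \\
&= \big[\mu_i(\x+\e_i) - \mu_i(\x+\e_i+\e_{-i})\big] + \big[\mu_i(\x+2\e_{-i}) - \mu_i(\x+\e_{-i})\big].
\end{align*}
This rearrangement into two ``loss differences'' is the crux: each bracket records how buyer~$i$'s greedy utility changes when buyer~$-i$ wins an item, and both are governed by Lemma~\ref{lem:evolutionMinmaxUtility}.

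Next I would argue that the right-hand bracket vanishes. Since $\kappa_i(\x) < t-1 < t$, Corollary~\ref{lem:demand<supply} gives $\kappa_i(\x+\e_{-i}) = \kappa_i(\x) < t-1$; as $t-1$ is exactly the supply remaining at $\x+\e_{-i}$, buyer~$i$ does not demand the whole supply there, so the first clause of Lemma~\ref{lem:evolutionMinmaxUtility} yields $\mu_i(\x+2\e_{-i}) = \mu_i(\x+\e_{-i})$. (Observe that $\kappa_i(\x) < t-1$ with $\kappa_i(\x)\ge 0$ forces $t\ge 2$, so $\x+\e_i$ and $\x+\e_{-i}$ are genuine decision nodes and the lemma applies.) This is precisely where the hypothesis $\kappa_i(\x) < t-1$, rather than merely $\kappa_i(\x) < t$, is needed: with $\kappa_i(\x) = t-1$ the right-hand bracket would be strictly negative and the clean conclusion would break.

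Finally I would control the left-hand bracket, again via Lemma~\ref{lem:evolutionMinmaxUtility}, now applied at the node $\x+\e_i$ whose remaining supply is $t-1$. Because $\kappa_i(\x+\e_i) \le t-1$ always, the two cases of that lemma are exhaustive: if $\kappa_i(\x+\e_i) < t-1$ then $\mu_i(\x+\e_i+\e_{-i}) = \mu_i(\x+\e_i)$ and the bracket is zero; if $\kappa_i(\x+\e_i) = t-1$ then $\mu_i(\x+\e_i+\e_{-i}) < \mu_i(\x+\e_i)$ and the bracket is strictly positive. Combining the two steps gives $p_i(\x) - p_i(\x+\e_{-i}) = \mu_i(\x+\e_i) - \mu_i(\x+\e_i+\e_{-i}) \ge 0$, with strict inequality exactly when $\kappa_i(\x+\e_i) = t-1$, which yields both halves of the claim.

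I do not anticipate a serious obstacle: once the difference is rearranged, everything reduces to two invocations of Lemma~\ref{lem:evolutionMinmaxUtility}. The only points requiring care are the bookkeeping of the supply at each successor (it drops from $t$ to $t-1$, which changes what ``demanding the whole supply'' means in the cases of Lemma~\ref{lem:evolutionMinmaxUtility}) and, via Lemma~\ref{lem:sdf}, confirming that $f_i(\x)>1$ keeps buyer~$i$ in possession of duopsony power at the successors $\x+\e_i$ and $\x+\e_{-i}$ so that the threshold prices being compared are the relevant ones; the hypothesis $f_i(\x)>1$ itself plays no role in the sign computation.
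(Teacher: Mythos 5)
Your proposal is correct and follows essentially the same route as the paper's proof: expand both threshold prices via definition~(\ref{def:threshold-price}), cancel the $v_i(1|\cdot)$ terms, kill the $\mu_i(\x+2\e_{-i})-\mu_i(\x+\e_{-i})$ bracket using $\kappa_i(\x+\e_{-i})=\kappa_i(\x)<t-1$, and read off the sign and strictness of the remaining bracket $\mu_i(\x+\e_i)-\mu_i(\x+\e_i+\e_{-i})$ from Lemma~\ref{lem:evolutionMinmaxUtility}. Your write-up is in fact slightly more explicit than the paper's about why the second bracket vanishes and why the two cases for $\kappa_i(\x+\e_i)$ are exhaustive.
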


\begin{proof}
Take $\kappa_i(\x) < t-1$. Then, by definition~(\ref{def:threshold-price}) of threshold prices, we have that:
\begin{align*}
p_i(\x) - p_i(\x+\e_{-i}) &= \text{\hspace{2mm}} \big( v_i(1|\x) + \mu_i(\x+\e_i) - \mu_i(\x+\e_{-i}) \big)\\ 
&\ \ \ \ \ - \big( v_i(1|\x+\e_{-i}) + \mu_i(\x+\e_i+\e_{-i}) - \mu_i(\x+2\e_{-i}) \big)\\
&= \text{\hspace{2mm}}\mu_i(\x+\e_i) - \mu_i(\x+\e_i+\e_{-i}) \\
&\geq \text{\hspace{2mm}}0
\end{align*}
Here the first equality follows by the definition of threshold prices. The second equality and the inequality follow from 
Lemma \ref{lem:evolutionMinmaxUtility}. Again by Lemma \ref{lem:evolutionMinmaxUtility}, the inequality is 
strict if and only if $\kappa_i(\x+\e_i) = t-1$.
\end{proof}

Finally, if buyer~$i$ with duopsony power targets his greedy utility and does not demand the entire supply, 
then incentives for buyer~$-i$ are aligned such 
that it should want to purchase items without letting buyer~$i$ win. Buyer~$-i$ would be able to do so if buyer~$i$'s bids 
do not exceed buyer~$-i$'s incremental value. The 
following lemma shows that this is the case.

\begin{lemma}\label{cl:top-marginal}
Given $\x \in \mathbb{H}_+$ and $i \in \{1,2\}$. If $\kappa_i(\x) < t$ and $f_i(\x) > 0$  then $p_i(\x) \ \leq\  v_{-i}(t-\kappa_i(\x+\e_i)|\x)$.
Moreover, the inequality 
is tight if only if $\bar{\mu}_i(\kappa_i(\x+\e_i)+1|\x) = \mu_i(\x)$.
\end{lemma}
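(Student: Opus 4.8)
The plan is to follow almost exactly the algebra already carried out in the first half of Lemma~\ref{lem:evolutionThresholdWin}, but to record the conclusion directly in terms of an incremental value of buyer~$-i$ rather than in terms of the baseline price. First I would set $\hat{k} = \kappa_i(\x+\e_i)$ and simplify the threshold price. Since $f_i(\x) > 0$ we have $\mu_i(\x) > 0$, hence $\kappa_i(\x) \geq 1$; combined with the hypothesis $\kappa_i(\x) < t$ this forces $t \geq 2$, so $\x+\e_i$ is a genuine decision node and $\hat{k}$ is well defined with $0 \leq \hat{k} \leq t-1$. The hypothesis $\kappa_i(\x) < t$ lets me invoke Lemma~\ref{lem:evolutionMinmaxUtility} to get $\mu_i(\x+\e_{-i}) = \mu_i(\x)$, so that the definition~(\ref{def:threshold-price}) collapses to $p_i(\x) = v_i(1|\x) + \mu_i(\x+\e_i) - \mu_i(\x)$.

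Next I would expand $\mu_i(\x+\e_i) = \bar{\mu}_i(\hat{k}\mid\x+\e_i)$ using definition~(\ref{def:mu-bar}) together with the two shift identities $v_i(j\mid\x+\e_i) = v_i(j+1\mid\x)$ and $v_{-i}(\,\cdot\mid\x+\e_i) = v_{-i}(\,\cdot\mid\x)$ (the latter because buyer~$-i$'s incremental values are independent of buyer~$i$'s allocation). This is the same computation as equations~(\ref{eq:pee1}) and~(\ref{eq:pee2}), and it yields the identity $p_i(\x) = \bar{\mu}_i(\hat{k}+1\mid\x) + v_{-i}(t-\hat{k}\mid\x) - \mu_i(\x)$. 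The one deliberate deviation from Lemma~\ref{lem:evolutionThresholdWin} is that I would \emph{not} rewrite $v_{-i}(t-\hat{k}\mid\x)$ as the baseline price $\beta_i(\x+\e_i)$: the present hypothesis is only $f_i(\x) > 0$ rather than $f_i(\x) > 1$, so I cannot guarantee $f_i(\x+\e_i) > 0$, and when $f_i(\x+\e_i) = 0$ the baseline price equals $v_i(1\mid\x+\e_i)$ instead of $v_{-i}(t-\hat{k}\mid\x)$. Keeping the incremental value explicit sidesteps this case entirely, and the edge case $\hat{k}=0$ is covered automatically since then $\bar{\mu}_i(1\mid\x) + v_{-i}(t\mid\x) = v_i(1\mid\x)$.

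Finally I would conclude by feasibility of the maximizer. Because $\hat{k}+1 \leq t$, the term $\bar{\mu}_i(\hat{k}+1\mid\x)$ appears in the maximization~(\ref{def:minmax-utility}) defining $\mu_i(\x)$, so $\bar{\mu}_i(\hat{k}+1\mid\x) \leq \mu_i(\x)$; substituting into the identity above gives $p_i(\x) \leq v_{-i}(t-\hat{k}\mid\x) = v_{-i}(t-\kappa_i(\x+\e_i)\mid\x)$, with equality precisely when $\bar{\mu}_i(\kappa_i(\x+\e_i)+1\mid\x) = \mu_i(\x)$. The main (mild) obstacle is not the algebra, which is routine once the shift identities are applied, but rather noticing that one must avoid the baseline-price substitution used in Lemma~\ref{lem:evolutionThresholdWin}; that substitution silently relies on $f_i(\x) > 1$, whereas here we must work under the weaker assumption $f_i(\x) > 0$.
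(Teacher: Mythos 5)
Your proposal is correct and follows essentially the same route as the paper's proof: both reduce $p_i(\x)$ via $\mu_i(\x+\e_{-i})=\mu_i(\x)$ (Lemma~\ref{lem:evolutionMinmaxUtility}), rewrite $\mu_i(\x+\e_i)=\bar{\mu}_i(\hat{k}\mid\x+\e_i)$ through the shift identity of Claim~\ref{cl:win-utility-k} with $k=\hat{k}+1$, and conclude from the feasibility bound $\bar{\mu}_i(\hat{k}+1\mid\x)\le\mu_i(\x)$, with the same tightness condition. Your added remarks -- verifying $t\ge 2$ so that $\hat{k}$ is well defined, and noting that the baseline-price substitution of Lemma~\ref{lem:evolutionThresholdWin} must be avoided under the weaker hypothesis $f_i(\x)>0$ -- are sound and, if anything, slightly more careful than the paper's own write-up.
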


\begin{proof}
Set $\hat{k}=\kappa_i(\x+\e_i)$. Then, by definition~(\ref{def:threshold-price}) of threshold prices,
\begin{align}\label{eq:top}
p_i(\x) & = v_i(1|\x) + \mu_i(\x+\e_i) - \mu_i(\x+\e_{-i}) \nonumber \\
& = v_i(1|\x) + \mu_i(\x+\e_i) - \mu_i(\x) \nonumber \\
& \leq v_i(1|\x) + \mu_i(\x+\e_i) - \bar{\mu}_i(\hat{k}+1|\x) \nonumber \\
& = v_i(1|\x) + \mu_i(\x+\e_i) - \left( \bar{\mu}_i(\hat{k}|\x+\e_i) + v_i(1|\x) - v_{-i}(t-\hat{k}|\x) \right) \nonumber \\
& = v_{-i}(t-\hat{k}|\x) + \mu_i(\x+\e_i) - \bar{\mu}_i(\hat{k}|\x+\e_i) \nonumber \\
& = v_{-i}(t-\hat{k}|\x)
\end{align}
The second equality holds by Lemma~\ref{lem:evolutionMinmaxUtility} because $\kappa_i(\x) < t$.
The first inequality holds due to the optimality of the choice $\kappa_i(\x)$.
The third equality is an application of Claim~\ref{cl:win-utility-k} with $k= \hat{k}+1=\kappa_i(\x+\e_i)+1$.
The fifth equality holds as $\hat{k}=\kappa_i(\x+\e_i)$.
For (\ref{eq:top}) to be tight, we require that $\bar{\mu}_i(\hat{k}+1|\x) = \mu_i(\x)$.
\end{proof}

\subsubsection{Auction Outcomes with Greedy Bidding Strategies}

We are now ready to use our analysis of quantities induced by greedy bidding to inspect what happens in the auction when 
both buyers use their greedy bidding strategies. Note that \emph{realised} quantities are those reached in the outcome with 
positive probability. Tie-breaking rules can introduce randomisation of the allocations and prices buyers face each round, but 
we are able to make a statement for any realisation of these quantities.

\begin{theorem}\label{thm:greedy-equilibria}
Suppose buyers implement their greedy bidding strategies. Then on any realised outcome path from some decision node $\x$, if 
there exists a monopsonist buyer $i$ at $\x$, then her realised utility is $\mu_i(\x)$; else some 
buyer $i \in \arg\hspace{-.05cm}\min_{j \in \{1,2\}} p_i(\x)$ has realised utility equal to $\mu_i(\x)$. Furthermore, buyer $i$ 
purchases at least $\kappa_i(\x)$ items. Finally, prices are equal to $p_i$ alongside the realised outcome path until 
buyer $i$ demands the entire supply, after which prices equal $\beta_i$. In particular, prices are declining 
along any realised outcome path.
\end{theorem}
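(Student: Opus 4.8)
The plan is to induct backwards on the number of remaining items $t=t(\x)$, proving a slightly strengthened statement that tracks a single \emph{leader} buyer explicitly: take the leader $i$ to be the monopsonist when one exists, and otherwise the realised loser of the current round, which is always a minimiser of $p_j(\x)$ (both buyers lie in $\arg\min_j p_j(\x)$ only under a tie). The inductive claim is then that the leader realises utility exactly $\mu_i(\x)$, buys at least $\kappa_i(\x)$ items, and that the realised prices equal $p_i$ while $\kappa_i<t$ and equal $\beta_i$ once $\kappa_i=t$. The base case $t=1$ is immediate: both children are terminal so $p_j(\x)=v_j(1|\x)=b_j(\x)$, the higher-value buyer wins at the second price, and one checks this realises $\mu_i(\x)=\beta_i(\x)$ with $\kappa_i(\x)=1$.

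For the inductive step I would first read off the winner from the capped bids $b_j(\x)=\min[v_j(1|\x),p_j(\x)]$. The crucial preliminary is that the leader bids her threshold price exactly: whenever $\kappa_i(\x)<t$, Lemma~\ref{lem:evolutionMinmaxUtility} gives $\mu_i(\x+\e_{-i})=\mu_i(\x)$, so $p_i(\x)=v_i(1|\x)+\mu_i(\x+\e_i)-\mu_i(\x)$, and a termwise application of Claim~\ref{cl:win-utility-k} at the maximiser $\kappa_i(\x+\e_i)+1\le f_i(\x)$ shows $\mu_i(\x+\e_i)\le\mu_i(\x)$, whence $p_i(\x)\le v_i(1|\x)$ and $b_i(\x)=p_i(\x)$. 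If $\kappa_i(\x)=t$ (so $i$ is a strict monopsonist), Lemma~\ref{lem:strictly-greater} gives $p_i(\x)>\beta_i(\x)=v_{-i}(1|\x)=b_{-i}(\x)$, the leader wins at price $\beta_i(\x)$, Corollary~\ref{cor:entire-supply} preserves $\kappa_i(\x+\e_i)=t-1$, and Claim~\ref{cl:win-utility-k} with $k=t$ collapses $v_i(1|\x)-\beta_i(\x)+\mu_i(\x+\e_i)$ to $\mu_i(\x)$. If instead $\kappa_i(\x)<t$, then Lemma~\ref{cl:top-marginal} gives $p_i(\x)\le v_{-i}(t-\kappa_i(\x+\e_i)|\x)\le v_{-i}(1|\x)$, so together with $p_{-i}(\x)\ge p_i(\x)$ the non-leader's bid weakly exceeds $b_i(\x)=p_i(\x)$; the non-leader wins at the second price $p_i(\x)$, while Corollary~\ref{lem:demand<supply} and Lemma~\ref{lem:evolutionMinmaxUtility} give $\kappa_i(\x+\e_{-i})=\kappa_i(\x)$ and $\mu_i(\x+\e_{-i})=\mu_i(\x)$, so the leader's realised utility is just her forward utility $\mu_i(\x)$ and she buys at least $\kappa_i(\x)$ items.

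To close the induction I must show the leader's identity propagates to the realised child. When a monopsonist exists this is automatic, since Claim~\ref{cl:sdf-0} preserves $f_{-i}=0$ under either child; and in the tie case the designation of the leader as the realised loser is exactly what keeps the hypothesis applicable. In the competitive case I would argue that when the non-leader wins, Lemma~\ref{lem:threshold-decreasing} gives $p_i(\x+\e_{-i})\le p_i(\x)$ while Lemma~\ref{lem:evolutionThresholdWin} applied to the \emph{winning} non-leader gives $p_{-i}(\x+\e_{-i})\ge p_{-i}(\x)\ge p_i(\x)\ge p_i(\x+\e_{-i})$, so the leader remains a threshold-price minimiser and the inductive hypothesis applies to the same buyer. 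The declining-price conclusion then assembles from three facts: $p_i$ is non-increasing while $\kappa_i<t$ (Lemma~\ref{lem:threshold-decreasing}), $\beta_i=v_{-i}(1|\x)$ is constant once $\kappa_i=t$ (the winner no longer shifts $-i$'s valuations), and there is a weak drop from $p_i$ to $\beta_i$ at the transition by Lemma~\ref{lem:strictly-greater}.

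I expect the genuine friction to lie entirely in the boundary cases where the evolution lemmas do not apply as stated. First, $f_i(\x)=1$, where $\kappa_i\equiv 1$ and one must instead use the identity $p_i(\x)=\beta_i(\x)=v_{-i}(t|\x)$ from the proof of Lemma~\ref{lem:base-thresh} to see that prices stay constant until the final round. Second, the transition node where $\kappa_i(\x)=t-1$, at which Lemma~\ref{lem:threshold-decreasing} is unavailable and the $p_i\to\beta_i$ comparison must be established by a direct computation through Claim~\ref{cl:win-utility-k}. Third, ties $p_1(\x)=p_2(\x)$, where the realised winner depends on the tie-breaking rule; here the leader must be re-read off as the realised loser on each path, and verifying that this choice remains a minimiser at the child (so that the hypothesis transfers to the \emph{same} buyer on every realised continuation) is the subtle point that makes the statement hold for any realisation rather than merely in expectation.
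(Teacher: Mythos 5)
Your overall architecture is the paper's: backward induction on $t$, the same case split (no duopsony power, monopsonist, strict monopsonist demanding the whole supply, competitive node), and the same supporting lemmas deployed in the same roles. The one place where you genuinely depart from the paper is also where the argument has a real gap: your strengthened inductive hypothesis that the buyer attaining her greedy utility can always be taken to be the \emph{realised loser} of the current round. The difficulty is not merely ``verifying that this choice remains a minimiser at the child,'' as you put it. Even granting the chain $p_{-i}(\x+\e_{-i}) \geq p_{-i}(\x) \geq p_i(\x) \geq p_i(\x+\e_{-i})$, the inductive hypothesis at the child only hands you \emph{some} buyer in $\arg\min_j p_j(\x+\e_{-i})$ who realises her greedy utility; if the child is itself a tie, that buyer may be the winner $-i$ rather than your leader $i$, and nothing in your set-up forces the loser at $\x$ to realise $\mu_i(\x+\e_{-i})$ on that continuation. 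Your strengthening is therefore stronger than the theorem as stated, and it is not delivered by the proposed induction.

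The paper resolves this by \emph{not} fixing the designated buyer in advance: it lets the recursion at the child dictate who attains greedy utility, and separately handles the case where that buyer is the \emph{winner} at $\x$. Concretely, if buyer $i$ wins at $\x$ and the induction designates $i$ at $\x+\e_i$, then $i \in \arg\min_j p_j(\x+\e_i)$ together with $p_i(\x+\e_i) \geq p_i(\x) \geq p_{-i}(\x) \geq p_{-i}(\x+\e_i)$ forces all of these quantities to be equal; hence the price paid at $\x$ is $p_i(\x)$, and since $\mu_i(\x+\e_{-i}) = \mu_i(\x)$ by Lemma~\ref{lem:evolutionMinmaxUtility}, the winner's realised utility is $v_i(1|\x) - p_i(\x) + \mu_i(\x+\e_i) = \mu_i(\x)$ directly from the definition of the threshold price. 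An analogous tightness computation (via the equality condition in Lemma~\ref{cl:top-marginal}) rescues the tie inside the monopsonist case with $\kappa_i(\x) < t$, where the monopsonist may win at price $v_{-i}(1|\x)$. You should either adopt this ``designate at the child, then propagate up'' bookkeeping, or prove separately that at every tie node both buyers realise their greedy utilities --- the latter is exactly what your strengthened claim needs and is not established by the lemmas you cite. The remaining boundary issues you flag ($f_i(\x)=1$, the $\kappa_i = t-1$ transition) are real but routine, and your treatment of the strict-monopsonist and non-tied competitive cases matches the paper.
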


\begin{proof}
Proceed by induction on $t$: if $t = 1$, then $p_i(\x) = v_i(1|\x)$ for any buyer $i$, so we have the standard equilibrium for a single-item second-price auction, which satisfies the necessary conditions. 

So suppose that $t > 1$. First suppose that no buyer has any duopsony power. Then by definition \ref{def:duopsony-factor} of duopsony power, for any $i,j \in \{1,2\}$ and $k,l \in [t]$, $v_i(k) = v_j(l)$, i.e. all incremental valuations are equal to some $w \geq 0$. In this case, at every node $\x'$ of the auction rooted at decision node $\x$, $p_i(\x') = \beta_i(\x') = w$, and $u_i(\x') = \mu_i(\x') = 0$. This satisfies the conditions of the theorem.

Now suppose that buyer $i$ is a monopsonist at $\x$ with positive duopsony power, then by Lemma \ref{lem:sdf}, buyer $i$ is a monopsonist at every node of the auction tree rooted at $\x$. Therefore, buyer $i$ attains its greedy utility in every subsequent node. Now, by Lemma \ref{lem:strictly-greater} if buyer $i$ demands the entire supply, then $\min \{ p_i(\x), v_i(1|\x) \} > \beta_i(\x) = p_{-i}(\x)$, so buyer $i$ wins with probability $1$ at $\x$. At decision node $\x+\e_i$, by Corollary \ref{cor:entire-supply}, buyer $i$ demands the entire supply, so by the induction hypothesis she keeps winning at price $v_{-i}(1|\x) = \beta_i(\x)$. Then prices are constant along the outcome path, and the realised utility of buyer $1$ is $\sum_{k = 1}^{T }v_i(k|\x) - t \cdot v_{-i}(1|\x) = \mu_i(\x)$.

Instead suppose $\kappa_i(\x) < t$. Then $p_{-i}(\x) = v_{-i}(1|\x)$, and by Lemma \ref{cl:top-marginal}, $p_i(\x) \leq v_{-i}(1|\x)$. Therefore, the price equals $p_i(\x)$ at decision node $\x$. If buyer $-i$ wins at decision node $\x$, then buyer $i$'s realised utility equals $\mu_i(\x+\e_{-i})$, which by Lemma \ref{lem:evolutionMinmaxUtility} equals $\mu_i(\x)$. Furthermore, by Lemma \ref{lem:threshold-decreasing}, $p_i(\x+\e_{-i}) \leq p_i(\x)$, so prices are non-increasing. Now suppose that buyer $i$ wins at decision node $\x$. Then $p_i(\x) = v_{-i}(1|\x)$. By Lemma \ref{cl:top-marginal}, this can only happen if $v_{-i}(1|\x) = v_{-i}(t-\kappa_i(\x+\e_i)|\x) = \beta_i(\x+\e_{-i})$ and $\mu_i(\x+\e_i) + v_i(1|\x) - v_{-i}(1|\x) = \mu_i(\x)$. Then prices equal to $v_{-i}(1|\x)$ for the rest of the auction, and buyer $i$ earns her greedy utility.

Finally, suppose $f_i(\x) > 0$ for any $j \in \{1,2\}$, and that buyer $i$ wins at $\x$. Then price is equal to $p_{-i}(\x)$. If buyer $-i$ is a monopsonist at $\x+\e_i$, then by the induction hypothesis and Lemma \ref{lem:threshold-decreasing}, we have the desired result. If not, then by Lemma \ref{lem:evolutionThresholdWin} and Lemma \ref{lem:threshold-decreasing}, $p_i(\x+\e_i) \geq p_i(\x) \geq p_{-i}(\x) \geq p_{-i}(\x+\e_i)$. If buyer $-i$ is the member of $\arg\hspace{-.05cm}\min_{j \in \{1,2\}} p_i(\x+\e_i)$ who has realised utility equal to $\mu_{-i}(\x+\e_i)$, then we have the desired result. If instead it is buyer $i$, note that $p_i(\x+\e_i) = p_{-i}(\x+\e_i)$. Therefore, the price at $\x$ equals $p_i(\x)$, and prices do not increase from $\x$ to $\x+\e_i$. Finally, as $\mu_i(\x+\e_{-i}) = \mu_i(\x+\e_i)$ by Lemma \ref{lem:evolutionMinmaxUtility}, $\mu_i(\x) = v_i(1|\x) - p_i(\x) + \mu_i(\x+\e_i)$, hence buyer $i$'s realised utility equals her greedy utility.

It remains to show that each buyer $j$ purchases at least $\kappa_j(\x)$ items from any outcome path starting from $\x$. The proof is again by induction on $t$ -- if $t = 1$, a buyer $j$ with $\kappa_j(\x) = 1$ has $v_j(1|\x) > v_{-j}(1|\x)$, hence he necessarily wins an item. Now suppose that $t > 1$. If $\kappa_j(\x) = t$, then as argued above buyer $j$ purchases all items at price $\beta_j(\x)$, hence buyer $j$ indeed purchases $t$ items. If $\kappa_j(\x) < t$, then $\kappa_j(\x+\e_j) \geq \kappa_j(\x) - 1$ and $\kappa_j(\x+\e_{-j}) = \kappa_j(\x)$, so whoever wins at $\x$, a realised outcome path proceeds to an outcome which awards buyer $j$ a number of items $\geq \kappa_j(\x)$ onwards from $\x$.
\end{proof}
\section{Characterisation of No-Overbidding Equilibria}\label{sec:equilibria}

In this section, we classify the equilibria of two-buyer sequential multiunit auctions under the no-overbidding condition. 
Structurally, we will see that any equilibrium is made up of three phases (a competitive phase, a competition reduction phase 
and a monopsony phase) characterized by very different strategic behaviours. 

First, however, let's show that the \emph{declining price anomaly} holds. Here, it is worth emphasizing declining prices do not follow as a direct 
consequence of the no-overbidding assumption. Indeed, for $\geq 3$ buyers, the declining price anomaly can fail to hold given
decreasing incremental valuations even with the no-overbidding assumption; see Prebet et al.~\cite{PNV19}. 
The proof of the declining price anomaly depends instead on a weaker version of Equation $(8)$ in \cite{GS01}, which survives the imposition of the no-overbidding constraint.

\begin{lemma}\label{lem:priceBound}
Let $\x$ be a decision node and let $p(\x)$ denote the price paid at $\x$. Then for any buyer $i$, $p(\x) \leq v_i(1|\x) + u_i(\x+\e_i) - u_i(\x)$.
\end{lemma}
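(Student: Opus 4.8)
The plan is to read $p(\x)$ as the loser's bid in the two-buyer second-price auction, so that $p(\x)=\min\{b_1(\x),b_2(\x)\}$, and to rewrite the target inequality in the equivalent form $u_i(\x)\le v_i(1|\x)-p(\x)+u_i(\x+\e_i)$; that is, buyer~$i$'s equilibrium forward utility never exceeds the utility it would obtain by winning the current item at the realised price $p(\x)$ and then continuing. The single ingredient I need from the no-overbidding rule is the upper bound obtained by dropping the $\min$ in~(\ref{eq:bid-value}), namely $b_i(\x)\le v_i(1|\x)+u_i(\x+\e_i)-u_i(\x+\e_{-i})$. Everything else is bookkeeping with the recursive definition of $u_i$ and the tie-breaking probability $\pi_i(\be|\x)$.

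I would split on whether buyer~$i$ wins with positive probability. If $\pi_i(\be|\x)=0$, then buyer~$i$ surely loses, so $u_i(\x)=u_i(\x+\e_{-i})$ while $p(\x)=b_i(\x)$; the no-overbidding bound then reads exactly $p(\x)=b_i(\x)\le v_i(1|\x)+u_i(\x+\e_i)-u_i(\x+\e_{-i})=v_i(1|\x)+u_i(\x+\e_i)-u_i(\x)$, which is the claim. If instead $\pi_i(\be|\x)>0$, then $b_i(\x)\ge b_{-i}(\x)$, so the price is $p(\x)=b_{-i}(\x)\le b_i(\x)$, and the no-overbidding bound gives $p(\x)\le v_i(1|\x)+u_i(\x+\e_i)-u_i(\x+\e_{-i})$. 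Rearranged, this says the loss-continuation value $u_i(\x+\e_{-i})$ is at most the win-continuation value $W:=v_i(1|\x)-p(\x)+u_i(\x+\e_i)$.

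Now I combine the two continuations. Since $\pi_i(\be|\x)>0$ forces $b_{-i}(\x)=p(\x)$, the recursive definition of forward utility writes $u_i(\x)$ as the convex combination $\pi_i(\be|\x)\cdot W+(1-\pi_i(\be|\x))\cdot u_i(\x+\e_{-i})$; because $u_i(\x+\e_{-i})\le W$ from the previous step, this convex combination is itself $\le W$, which is precisely $u_i(\x)\le v_i(1|\x)-p(\x)+u_i(\x+\e_i)$, i.e. the claim. I expect the only delicate point to be the tie case $b_i(\x)=b_{-i}(\x)$, where the realised winner is random: here the argument above still applies verbatim once one observes that at a tie the price equals the common bid and that the convex-combination inequality $u_i(\x+\e_{-i})\le W$ is exactly the no-overbidding bound. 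It is also worth recording that when buyer~$i$ wins outright the inequality holds with equality, so the bound is tight and mirrors Equation~$(8)$ of Gale--Stegeman~\cite{GS01}.
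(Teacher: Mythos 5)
Your proposal is correct and follows essentially the same route as the paper's proof: both rest on the bound $b_i(\x)\le v_i(1|\x)+u_i(\x+\e_i)-u_i(\x+\e_{-i})$ from~(\ref{eq:bid-value}) together with the recursive definition of $u_i(\be|\x)$, with $p(\x)=\min_j b_j(\x)$. The only difference is presentational — you argue directly via the convex-combination inequality $u_i(\x)\le W$ where the paper argues by contradiction — so nothing further is needed.
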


\begin{proof}
If buyer $i$ wins with probability $1$ at decision node $\x$, then the inequality holds with equality. Otherwise, suppose that buyer $i$ wins with probability $q < 1$ at decision node $\x$. Then:
$$u_i(\x) = q \cdot \left(v_i(1|\x) - p(\x) +  u_i(\x+\e_i)\right) + (1-q) \cdot u_i(\x+\e_{-i})$$
Now suppose for a contradiction that $p(\x) > v_i(1|\x) + u_i(\x+\e_i) - u_i(\x)$. Then, plugging in the above expression:
$$p(\x) > v_i(1|\x) + u_i(\x+\e_i) - u_i(\x+\e_{-i})$$
However, $p(\x) = \min_{j \in \{1,2\}} b_j(\x) \leq b_i(\x)$, and by Equation \ref{eq:bid-value}:
$$b_i(\x) \leq v_i(1|\x) + u_i(\x+\e_i) - u_i(\x+\e_{-i}) \Rightarrow\!\Leftarrow$$
\end{proof}

The lemma allows us to adapt the declining price anomaly proof of Gale and Stegeman~\cite{GS01}:

\begin{theorem}\label{thm:dpa}
In a two-buyer sequential multiunit auction, under equilibrium bidding strategies, prices are non-increasing 
along any realised equilibrium path.
\end{theorem}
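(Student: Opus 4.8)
The plan is to prove the declining price anomaly by backward induction on $t$, mirroring the structure of Gale and Stegeman~\cite{GS01} but using Lemma~\ref{lem:priceBound} in place of their Equation (8) so the argument survives the no-overbidding constraint. Fix a decision node $\x$ with $t > 1$ and let $\y \in \{\x+\e_1, \x+\e_2\}$ be the realised successor along an equilibrium path; I want to show $p(\y) \le p(\x)$. By symmetry I may assume buyer~$1$ wins at $\x$, so $\y = \x+\e_1$ and $p(\x) = b_2(\x)$. The goal reduces to bounding $p(\x+\e_1)$ from above by $p(\x)$.

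First I would apply Lemma~\ref{lem:priceBound} at the child node $\x+\e_1$ to whichever buyer wins there. This gives, for the winning buyer $i$ at $\x+\e_1$, the bound $p(\x+\e_1) \le v_i(1|\x+\e_1) + u_i(\x+2\e_1 \text{ or } \x+\e_1+\e_2) - u_i(\x+\e_1)$, i.e. $p(\x+\e_1) \le b_i(\x+\e_1)$ reinterpreted through forward utilities. The key is then to relate this quantity back to buyer~$2$'s losing bid $b_2(\x) = p(\x)$ at the parent. I would exploit the shift relations on incremental valuations (namely $v_i(k|\x+\e_1)$ versus $v_i(k|\x)$, as used repeatedly in Claims~\ref{cl:sdf-0}--\ref{cl:sdf<t}) together with the weakly decreasing incremental values: winning an item weakly lowers buyer~$1$'s own top marginal value and leaves buyer~$2$'s relevant marginals shifted in a controlled way. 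The heart of the argument is to show that the marginal value governing the price at $\x+\e_1$ cannot exceed the marginal value that governed the price at $\x$, using concavity to dominate the after-win marginals by the before-win marginals.

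Concretely, I would split on who wins at $\x+\e_1$. If buyer~$1$ wins again, monotonicity of her incremental values ($v_1(1|\x+\e_1) = v_1(2|\x) \le v_1(1|\x)$) drives the price down, and I combine this with Lemma~\ref{lem:priceBound} applied to buyer~$2$ at $\x$ to close the loop. If buyer~$2$ wins at $\x+\e_1$, then $p(\x+\e_1) = b_1(\x+\e_1)$ and I must show this does not exceed $b_2(\x)$; here I would invoke Lemma~\ref{lem:priceBound} for buyer~$1$ at $\x$ (giving $p(\x) \ge$ a lower bound on what buyer~$1$ was willing to pay) and the equilibrium forward-utility recursion to telescope the utilities $u_1(\x+\e_1), u_1(\x+2\e_1), u_1(\x+\e_1+\e_2)$ against their analogues at $\x$. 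The no-overbidding cap~(\ref{eq:bid-value}) is what prevents the pathological price jumps seen in the overbidding regime.

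The main obstacle I anticipate is handling the interaction between the two possible winners at the child and the forward-utility terms cleanly: unlike the pure second-price recursion, the no-overbidding minimum in~(\ref{eq:bid-value}) means a buyer's realised bid can be its incremental value rather than its full marginal value, so the telescoping of utilities no longer cancels automatically. I expect to need Lemma~\ref{lem:priceBound} in both directions (once at $\x$ and once at $\x+\e_1$) precisely to absorb this slack, and to lean on weak concavity of $V_i$ to dominate the post-win marginals. Once the single-step inequality $p(\x+\e_1) \le p(\x)$ is established for an arbitrary realised transition, the theorem follows immediately by chaining the inequality along the entire realised equilibrium path from the source to a terminal node.
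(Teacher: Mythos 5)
Your high-level skeleton matches the paper's: reduce to the single-step inequality $p(\x+\e_1)\le p(\x)$ when buyer~$1$ wins at $\x$, control the child's price via Lemma~\ref{lem:priceBound}, and chain along the realised path. But the core of the argument is missing, and your case analysis is organised around the wrong quantity. The paper never splits on who wins at $\x+\e_1$; it splits on whether the \emph{loser's} no-overbidding cap binds at $\x$. If it binds, then $p(\x)=v_2(1|\x)$, and since the price at any node is the minimum of the bids, $p(\x+\e_1)\le b_2(\x+\e_1)\le v_2(1|\x+\e_1)=v_2(1|\x)=p(\x)$ and you are done. If it does not bind, then $p(\x)=v_2(1|\x)+u_2(\x+\e_2)-u_2(\x+\e_1)$ exactly, and Lemma~\ref{lem:priceBound} applied to buyer~$2$ (the loser at $\x$) at the child node gives $p(\x+\e_1)\le v_2(1|\x)+u_2(\x+(1,1))-u_2(\x+\e_1)$; subtracting reduces everything to the single inequality $u_2(\x+\e_2)\ge u_2(\x+(1,1))$. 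That inequality is the heart of the proof, and it is not a concavity or ``telescoping'' fact: it follows from an equilibrium deviation argument --- at $\x+\e_2$ buyer~$2$ can always underbid and lose, reaching $\x+(1,1)$, so his equilibrium forward utility at $\x+\e_2$ is at least $u_2(\x+(1,1))$. Nothing in your plan supplies this step, and the appeal to weak concavity of $V_i$ cannot substitute for it, because the obstruction lives in the forward-utility terms, not in the valuations.

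Two further problems with the cases as you describe them. In the sub-case where buyer~$1$ wins again at $\x+\e_1$, the realised price is buyer~$2$'s \emph{losing} bid, so the monotonicity $v_1(1|\x+\e_1)\le v_1(1|\x)$ does not ``drive the price down''; what you need is control of buyer~$2$'s bid at the child, which is exactly what the uniform argument above provides without any split on the child's winner. And Lemma~\ref{lem:priceBound} is a one-sided bound, $p(\x)\le v_i(1|\x)+u_i(\x+\e_i)-u_i(\x)$; invoking it ``for buyer~$1$ at $\x$'' to obtain a lower bound on $p(\x)$ reverses its direction. (It does hold with equality for a sure winner, but the useful exact expression for $p(\x)$ is buyer~$2$'s bid formula, since the price is the losing bid.) As written, the plan would not close.
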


\begin{proof}
Suppose that WLOG buyer $1$ wins with positive probability at decision node $\x$, with $t > 1$ -- we will show that $p(\x) \geq p(\x+\e_1)$. If buyer $2$'s no-overbidding constraint was binding at $\x$, then we are done, since $p(\x+\e_1) \leq v_2(1 |\x) = p(\x)$. So suppose that buyer $2$'s no-overbidding constraint does not bind, then:
$$p(\x) = v_2(1|\x) + u_2(\x+\e_2) - u_2(\x+\e_1)$$
On the other hand, by Lemma \ref{lem:priceBound}:
$$p(\x+\e_1) \leq v_2(1|\x) + u_2(\x+(1,1)) - u_2(\x+\e_1)$$
Subtracting the two equations, we get:
$$p(\x) - p(\x+\e_1) \geq u_2(\x+\e_2) - u_2(\x+(1,1))$$
However, the RHS must be $\geq 0$, or buyer $2$ would have a profitable deviation at $\x+\e_2$ by bidding $b_1(\x) - \epsilon$ for $\epsilon > 0$. That is impossible since bidding strategies are the unique ones surviving the iterative elimination of weakly dominated strategies.
\end{proof}

We now proceed to show when there necessarily is a direct equivalence between greedy bidding and equilibrium bidding 
strategies: it is exactly when there exists a monopsonist. 

\begin{theorem}\label{thm:bid-threshold-main}
Suppose that at decision node $\x$, some buyer $i$ is a monopsonist. Then for any decision node $\x'$ of the 
auction tree rooted at $\x$, prices and utilities are equal for equilibrium and greedy bidding strategies.
\end{theorem}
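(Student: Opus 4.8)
The plan is to induct on the number $t$ of unsold items, proving at every monopsonist node that the equilibrium winner, price, and forward utilities coincide with the greedy quantities identified in Theorem~\ref{thm:greedy-equilibria}. The base case $t=1$ is a single-item second-price auction: each buyer bids $v_j(1|\x)$, and since $i$ is a monopsonist $v_{-i}(1|\x)\le v_i(1|\x)$, so $i$ wins at price $v_{-i}(1|\x)$ under both regimes. For $t>1$, the first point is that monopsony is inherited: Claim~\ref{cl:sdf-0} gives $f_{-i}(\x)=0\Rightarrow f_{-i}(\x+\e_j)=0$, so $i$ is a monopsonist at both children and the induction hypothesis applies to both subtrees. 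In particular the equilibrium forward utilities at the children equal the greedy ones, so $u_i(\be|\x+\e_j)=\mu_i(\x+\e_j)$ for $j\in\{1,2\}$. Substituting this into the bid rule~\eqref{eq:bid-value} and comparing with definition~\eqref{def:threshold-price} immediately yields $b_i(\x)=\min[v_i(1|\x),p_i(\x)]$; that is, the monopsonist's equilibrium bid already agrees with its greedy bid.

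It remains to control the non-monopsonist and the resulting allocation, which I would split along the dichotomy of Theorem~\ref{thm:greedy-equilibria}. If $\kappa_i(\x)=t$, then by Corollary~\ref{cor:entire-supply} and the induction hypothesis buyer~$i$ sweeps the entire subtree rooted at $\x+\e_i$, so $u_{-i}(\be|\x+\e_i)=0\le u_{-i}(\be|\x+\e_{-i})$; hence the non-monopsonist's no-overbidding constraint binds, $b_{-i}(\x)=v_{-i}(1|\x)$, while Lemma~\ref{lem:strictly-greater} makes $b_i(\x)>v_{-i}(1|\x)$, so $i$ wins at price $\beta_i(\x)=v_{-i}(1|\x)$, exactly as in the greedy outcome. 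If instead $\kappa_i(\x)<t$, then $b_i(\x)=p_i(\x)\le v_{-i}(1|\x)$ by Lemma~\ref{cl:top-marginal} (the degenerate sub-case $f_i(\x)=0$, where all incremental values coincide, is trivial), and greedy bidding has $-i$ winning at price $p_i(\x)$. Because the price charged when $-i$ wins is the monopsonist's losing bid $p_i(\x)$ irrespective of $-i$'s own bid, everything reduces to showing that $-i$ still wins in equilibrium; the utilities then match child-by-child through the induction hypothesis, using $u_i(\be|\x)=\mu_i(\x+\e_{-i})=\mu_i(\x)$ from Lemma~\ref{lem:evolutionMinmaxUtility}.

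The crux, and the step I expect to be hardest, is this last claim in the case $\kappa_i(\x)<t$: that the non-monopsonist's marginal value for winning is at least the monopsonist's bid,
\[
 mv_{-i}\ :=\ v_{-i}(1|\x)+u_{-i}(\be|\x+\e_{-i})-u_{-i}(\be|\x+\e_{i})\ \ge\ p_i(\x).
\]
The difficulty is that, unlike the monopsonist, the non-monopsonist generally does \emph{not} bid its greedy value: one can exhibit a three-item instance in which a loss leaves the monopsonist hungrier and conceding fewer cheap items later, so that $u_{-i}(\be|\x+\e_{-i})<u_{-i}(\be|\x+\e_{i})$ and the equilibrium bid of $-i$ falls strictly below $v_{-i}(1|\x)$. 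Thus the inequality cannot come from the no-overbidding constraint binding; rewriting it as $u_{-i}(\be|\x+\e_i)-u_{-i}(\be|\x+\e_{-i})\le v_{-i}(1|\x)-p_i(\x)$, what must be shown is that the drop in the non-monopsonist's continuation utility between the two branches never exceeds the (nonnegative) current-round margin it earns on the conceded item. I would establish this by carrying an auxiliary bound on the non-monopsonist's cross-branch utility gap through the induction, driving it by the declining behaviour of prices and values---the threshold-price monotonicity of Lemmas~\ref{lem:evolutionThresholdWin} and~\ref{lem:threshold-decreasing}, the tightness clause of Lemma~\ref{cl:top-marginal}, and the declining-price guarantee of Theorem~\ref{thm:dpa}---so that no item conceded later is cheaper relative to value than the one conceded now. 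Once $mv_{-i}\ge p_i(\x)$ is in hand, a no-profitable-deviation argument of the kind used in Theorem~\ref{thm:dpa} forces $-i$ to win (or tie) at price $p_i(\x)$: were $i$ to win at a strictly lower price, $-i$ could profitably deviate to the winning bid, which is impossible under the unique strategies surviving iterated elimination. The winner and price at $\x$ then coincide across the two regimes, and the forward utilities follow from the induction hypothesis, completing the step.
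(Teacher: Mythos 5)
Your setup matches the paper's: induction on $t$, heredity of monopsony via Claim~\ref{cl:sdf-0} so that the induction hypothesis forces $b_i(\x)=\min[v_i(1|\x),p_i(\x)]$, and a case split on whether the monopsonist demands the entire supply. The case $\kappa_i(\x)=t$ is handled correctly and as in the paper. But in the case $\kappa_i(\x)<t$ with the non-monopsonist's no-overbidding constraint slack, you have correctly located the crux and then not closed it: the inequality $mv_{-i}\ge p_i(\x)$ is asserted with only a plan (``carrying an auxiliary bound on the non-monopsonist's cross-branch utility gap through the induction'') and no actual argument. As it stands this is a genuine gap, and it is the entire content of the hard case. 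Moreover, the target you set yourself --- forcing buyer $-i$ to win in equilibrium --- is stronger than what the theorem asserts and is not what happens in general: the winner at $\x$ can differ between equilibrium and greedy play; only prices and utilities must coincide.

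The paper avoids your crux entirely with a sandwich argument that you have all the ingredients for but do not assemble. Suppose buyer $i$ wins at $\x$ in equilibrium. Since buyer $i$ does not demand the entire supply at $\x+\e_i$ (because $u_{-i}(\be|\x+\e_i)>0$ when $-i$'s constraint is slack), the induction hypothesis gives that the realised price at $\x+\e_i$ is $p_i(\x+\e_i)$; Lemma~\ref{lem:evolutionThresholdWin} gives $p_i(\x+\e_i)\ge p_i(\x)$, while Theorem~\ref{thm:dpa} bounds the price paid at $\x$ below by the price at $\x+\e_i$. Since buyer $i$ bid only $p_i(\x)$, the price at $\x$ is pinned to exactly $p_i(\x)$, whence both buyers are indifferent between winning and losing there, and prices and utilities agree with the greedy outcome regardless of the realised winner. (The complementary subcase, where $-i$ wins in equilibrium, is immediate since the price is then $i$'s bid $p_i(\x)$.) If you want to salvage your route, note that $mv_{-i}\ge p_i(\x)$ falls out of this sandwich a posteriori rather than serving as an input; trying to establish it first, as a standalone invariant propagated through the induction, is doing the theorem's work twice and is not substantiated in your write-up.
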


\begin{proof}
Proceed by induction on $t$: for $t = 1$, the statement is obvious, so suppose $t > 1$, and consider equilibrium bidding strategies from decision node $\x$, where buyer $i$ is a monopsonist. As the existence of a monopsonist is a hereditary property on the auction graph, buyer $i$ is again a monopsonist at the successor nodes to $\x$. Therefore, $u_i(\x+\e_j) = \mu_i(\x+\e_j)$ for any decision node, hence buyer $i$ bids $b_i(\x) = \min\{ p_i(\x), v_1(1|\x) \}$ as intended.

First consider the case if buyer $i$ has no duopsony power, then both buyers are monopsonists. As being a monopsonist is a hereditary property, $u_j(\x+\e_\ell) = 0$ for any $j,\ell \in \{1,2\}$ by the induction hypothesis, hence each buyer $j$ bids $p_j(\x) = v_j(1|\x)$. In this case, bids for greedy and equilibrium bidding strategies are equal at $\x$, hence by the induction hypothesis prices and utilities are equal for equilibrium and greedy bidding strategies as desired.

So we consider the case if buyer $i$ does have duopsony power. If $p_i(\x) > v_{-i}(1|\x)$, then by Lemma \ref{lem:strictly-greater} and Lemma \ref{cl:top-marginal}, buyer $i$ demands $t$ items. Then buyer $i$ purchases an item at price $\beta_i(\x) = v_{-i}(1|\x)$ at decision node $\x$. Then by Lemma \ref{lem:evolutionMinmaxDemand}, $\kappa_i(\x+\e_1) = t-1$, so by the induction hypothesis, buyer $i$ keeps purchasing the rest of the items at price $v_{-i}(1|\x)$ as desired.

Now suppose that $p_i(\x) \leq v_{-i}(1|\x)$, then by Lemma \ref{lem:strictly-greater}, buyer $i$ demands $< t$ items. Furthermore, as buyer $i$ has duopsony power then buyer $i$ bids $b_i(\x) = p_i(\x)$ by Lemma~\ref{cl:top-marginal}. First consider the case if buyer $-i$'s no-overbidding constraint binds -- then buyer $-i$ bids $v_{-i}(1|\x)$, so bids at decision node $\x$ are equal for equilibrium and greedy bidding strategies. Then by the induction hypothesis the outcomes from decision node $\x$ are also equal for equilibrium and greedy bidding strategies. Therefore, we need to only consider the case if $b_{-i}(x) = v_{-i}(1|\x) + u_i(\x+\e_{-i}) - u_i(\x+\e_i) < v_{-i}(1|\x)$. Note that in this case, buyer $-i$ must have positive forward utility at $\x+\e_{i}$, hence buyer $i$ does not demand the entire supply after winning an item.

If buyer $-i$ wins at decision node $\x$, then the outcome at decision node $\x$ does not change if buyer $-i$ deviates to bidding $v_{-i}(1|\x)$, so suppose that buyer $i$ wins at decision node $\x$ under equilibrium bidding strategies. By the induction hypothesis, the price at decision node $\x+\e_i$ equals $p_i(\x+\e_i)$ as buyer $i$ does not demand the entire supply at $\x+\e_i$. However, by Lemma~\ref{lem:evolutionThresholdWin} $p_i(\x+\e_i) \geq p_i(\x)$. On the other hand, as buyer $i$ wins at decision node $\x$ under equilibrium bidding strategies, by Theorem \ref{thm:dpa} $p_i(\x+\e_i) \leq p_i(\x)$. Therefore $p_i(\x+\e_i) = p_i(\x)$. So buyer $i$ must have won at price no less than $p_i(\x)$. As buyer $i$ bid $p_i(\x)$ at $\x$, we conclude that buyer $i$ pays price $p_i(\x)$ at $\x$. By Lemma \label{cl:top-marginal} and as buyer $i$ has duopsony power, $b_i(\x) = p_i(\x) < v_i(1|\x)$, and by assumption $b_{-i}(x) < v_{-i}(1|\x)$. In particular, either buyer is indifferent to winning and losing at price $p_i(\x)$. This implies that buyer $-i$'s deviation to $v_{-i}(1|\x)$ to win the item at $\x$ does not change the price and utilities at decision node $\x$.
\end{proof}

Informally, suppose only buyer $1$ has duopsony power; then buyer $1$ is constrained by the equilibrium
bidding strategies to make her greedy utility at every possible 
future node. Thus her bids equal to her threshold price at every round of the auction. Buyer $2$ will then take 
advantage of buyer $1$'s bidding strategies by purchasing an item 
whenever possible.


But what happens in the more complex setting where both buyers have duopsony power? Call buyer $i$ 
a \emph{quasi-monopsonist} at decision node $\x$ if there exists a realised equilibrium path 
from $\x$ to a final round $\y$ such that $b_i(\y) \geq b_{-i}(\y)$.  Note that it is possible for both agents to 
be quasi-monopsonists at a node $\x$ if there is a randomized tie-breaking rule.  
By decreasing prices, a quasi-monopsonist 
may have a realised payoff weakly less than its greedy utility; moreover, a monopsonist is always a quasi-monopsonist. 
This definition, along with properties of greedy bidding, allows us to fully characterise equilibria.


\begin{theorem}\label{thm:eql-character}
For equilibrium bidding strategies, while no buyer demands the entire supply, prices at each node are no less than 
the minimum threshold price. Moreover, at every decision node $\x$ there exists a quasi-monopsonist buyer $i$. 
Finally, if buyer $i$ is a quasi-monopsonist at decision node $\x$ and if $\x+\e_{-i}-\e_i$ is also a decision node, 
then $i$ is again a quasi-monopsonist at decision node $\x+\e_{-i}-\e_i$.
\end{theorem}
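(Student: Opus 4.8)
The plan is to establish the three assertions by induction on the number of remaining items $t$, drawing on the declining-price theorem (Theorem~\ref{thm:dpa}), the monopsonist equivalence (Theorem~\ref{thm:bid-threshold-main}), the greedy-outcome description (Theorem~\ref{thm:greedy-equilibria}), and the threshold-price lemmas of Section~\ref{sec:technical}. Existence of a quasi-monopsonist is immediate: from any decision node $\x$, following the (tie-broken) equilibrium strategies produces a realised equilibrium path reaching a final-round node $\y$ with $t(\y)=1$; at $\y$ at least one buyer $i$ has $b_i(\y)\ge b_{-i}(\y)$, and that same path then witnesses that $i$ is a quasi-monopsonist at $\x$.

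For persistence I would show, by induction on $t$, that for every decision node $\x$ with $x_i\ge 1$ (so that $\x':=\x+\e_{-i}-\e_i$ is again a decision node), if $i$ is a quasi-monopsonist at $\x$ then $i$ is a quasi-monopsonist at $\x'$. In the base case $t=1$ both nodes are final-round nodes and quasi-monopsony of $i$ at $\x$ reads $v_i(1|\x)\ge v_{-i}(1|\x)$; since decreasing increments give $v_i(1|\x')=v_i(x_i)\ge v_i(x_i+1)=v_i(1|\x)$ and $v_{-i}(1|\x')=v_{-i}(x_{-i}+2)\le v_{-i}(x_{-i}+1)=v_{-i}(1|\x)$, we obtain $b_i(\x')\ge b_{-i}(\x')$. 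For $t>1$, take a witnessing path from $\x$ and let $c$ be its realised first winner, so that $i$ is a quasi-monopsonist at $\x+\e_c$. A \emph{double shift} now shows $i$ is a quasi-monopsonist at both children of $\x'$: first $i$ is a quasi-monopsonist at $\x+\e_{-i}=\x'+\e_i$ (directly from the tail of the path when $c=-i$, or from the inductive hypothesis applied to $\x+\e_i$ when $c=i$), and then the inductive hypothesis applied to $\x+\e_{-i}$ gives $i$ a quasi-monopsonist at $\x+2\e_{-i}-\e_i=\x'+\e_{-i}$. Since every realised equilibrium path from $\x'$ enters one of these two children, prepending that first edge to a witness from the child yields a witness at $\x'$.

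For the price bound I would induct on $t$ with $p(\x)=\min\{b_1(\x),b_2(\x)\}$ and $i\in\arg\min_\ell p_\ell(\x)$. If some buyer is a monopsonist at $\x$, then Theorem~\ref{thm:bid-threshold-main} makes equilibrium and greedy bidding coincide and Theorem~\ref{thm:greedy-equilibria} gives $p(\x)=\min_\ell p_\ell(\x)$ outright, so the substance is the competitive case where both buyers have duopsony power (hence $0<f_\ell(\x)<t$ for each $\ell$). The target is $b_1(\x),b_2(\x)\ge p_i(\x)$. For the higher-threshold buyer $-i$ this is clean: Lemma~\ref{cl:top-marginal} gives $p_i(\x)\le v_{-i}(t-\kappa_i(\x+\e_i)|\x)\le v_{-i}(1|\x)$, and, using that equilibrium forward utility is monotone in one's own holdings (as in the proof of Theorem~\ref{thm:dpa}) so that $u_{-i}(\x+\e_{-i})\ge u_{-i}(\x+\e_i)$, both branches of the minimum defining $b_{-i}(\x)$ are at least $v_{-i}(1|\x)\ge p_i(\x)$. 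For the lower-threshold buyer $i$ one first gets $v_i(1|\x)\ge p_i(\x)$ from $\mu_i(\x+\e_i)\le\mu_i(\x)=\mu_i(\x+\e_{-i})$, which follows from Claim~\ref{cl:win-utility-k} together with the fact that any maximiser of $\bar\mu_i(\cdot|\x+\e_i)$ is at most $f_i(\x+\e_i)$.

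The main obstacle is the remaining branch of the lower-threshold buyer's bid, namely bounding $v_i(1|\x)+u_i(\x+\e_i)-u_i(\x+\e_{-i})$ below by $p_i(\x)$. Using the standing inequality $u_i\ge\mu_i$ (an auxiliary induction, justified by each buyer's option to play its greedy strategy) together with Lemma~\ref{lem:evolutionMinmaxUtility}, this reduces to the delicate equality $u_i(\x+\e_{-i})=\mu_i(\x+\e_{-i})$: the weaker buyer must realise exactly its greedy utility upon losing. This is the crux, because a priori a buyer can strictly outperform its greedy benchmark after a loss, which would depress its bid and could push the price below $\min_\ell p_\ell(\x)$. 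I would rule this out by strengthening the induction to carry the invariant that a minimum-threshold buyer is pinned to its greedy payoff throughout the competitive phase — an analogue of the monopsonist equivalence of Theorem~\ref{thm:bid-threshold-main} — while using declining prices (Theorem~\ref{thm:dpa}) and Lemma~\ref{lem:threshold-decreasing} to track which buyer attains the minimum threshold as the path descends. Verifying that this invariant is preserved, and thereby excluding the over-performing-loser scenario, is where the real effort is concentrated.
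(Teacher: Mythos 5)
Your treatment of the second and third assertions is sound. The existence argument is exactly the paper's. For the shift property you give a genuinely different proof: the paper takes a witnessing path $P$ from $\x$ and an arbitrary realised path $Q$ from $\x+\e_{-i}-\e_i$, and either splices $Q$ into $P$ at an intersection or observes that a non-intersecting $Q$ stays offset by $k(s)\cdot(\e_{-i}-\e_i)$ with $k(s)>0$, so that concavity forces buyer~$i$ to weakly outbid at the final round of $Q$. Your ``double shift'' induction --- establish quasi-monopsony at both children $\x'+\e_i=\x+\e_{-i}$ and $\x'+\e_{-i}=\x+2\e_{-i}-\e_i$ and prepend whichever first edge is realised --- is correct (the hypotheses $x_i\ge 1$ and $t\ge 2$ needed for each application of the inductive hypothesis are available) and arguably cleaner, since it avoids the non-intersecting-path case analysis entirely.

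The price bound is where there is a genuine gap, and you flag it yourself: your argument reduces to the equality $u_i(\x+\e_{-i})=\mu_i(\x+\e_{-i})$ for the minimum-threshold buyer $i$, and the invariant you propose to carry (``a minimum-threshold buyer is pinned to its greedy payoff throughout the competitive phase'') is neither established nor clearly true. In the competitive phase the identity of the minimum-threshold buyer, and of the eventual monopsonist, can change with the realised winner, and a buyer can strictly out-perform its greedy benchmark while both buyers retain duopsony power; pinning the $\arg\min$ buyer to $\mu_i$ at every node is essentially as hard as the theorem itself. The paper sidesteps this by arguing forward rather than backward: follow the realised equilibrium path from $\x$ to the earliest node $\x'=\y+\e_{-i}$ at which some buyer $i$ is a monopsonist (this exists, and by Lemma~\ref{lem:sdf} the step into it must be a loss for $i$). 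At $\y$, Theorem~\ref{thm:bid-threshold-main} gives $u_i(\y+\e_{-i})=\mu_i(\y+\e_{-i})$ while $u_i(\y+\e_i)\ge\mu_i(\y+\e_i)$, so $b_i(\y)\ge p_i(\y)$ (Lemma~\ref{cl:top-marginal} ensuring the no-overbidding cap does not bind); since $-i$ wins at $\y$ the price there equals $b_i(\y)$, and the bound is transported back to $\x$ using declining prices (Theorem~\ref{thm:dpa}) together with the evolution of $p_i$ along the path (Lemmas~\ref{lem:evolutionThresholdWin} and~\ref{lem:threshold-decreasing}). The needed equality $u=\mu$ is thus invoked only at the first monopsonist node, where it is already a theorem rather than an invariant to be maintained. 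As written, the price-bound part of your proposal does not go through; you would need to either adopt this forward-tracing argument or actually prove your invariant.
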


\begin{proof}
Begin by showing the first statement of the theorem: let $\x$ be a decision node. If there exists a monopsonist buyer $i$ at decision node $\x$, then prices at each node equal $p_i$ as long as buyer $i$ does not demand the entire supply, so suppose that both buyers have duopsony power. Consider a realised equilibrium path from decision node $\x$; on this path, there must be an earliest decision node $\x'$ such that some buyer $i$ is a monopsonist. As $\x'$ is the earliest such node, by Lemma \ref{lem:sdf} the previous node on the equilibrium path must be $\y$, where $\x' = \y+\e_{-i}$.

Now, $b_i(\y) \geq p_i(\y)$, as: (1) buyer $i$ attains her greedy utility from decision node $\x'$, and must earn at least her greedy utility in expectation from decision node $\y+\e_i$, and (2) by Lemma \ref{cl:top-marginal} buyer $i$'s no-overbidding constraint would not bind if she were to bid $p_i(\y)$. Furthermore, as buyer $-i$ wins at decision node $\y$, price at $\y$ equals $b_i(\y)$. However, tracing back to decision node $\x$, by Lemma \ref{lem:evolutionThresholdWin} and Lemma \ref{lem:threshold-decreasing}\footnote{Buyer $i$ cannot demand the entire bundle at any node from $\y$ to $\x$, as she is not a monopsonist.}, $p_i$ is decreasing, while prices are non-decreasing. Therefore, at decision node $\x$, the price is no less than $p_i(\x)$, hence is no less than the minimum threshold price.

The existence of a quasi-monopsonist is obvious, so we prove the ``shift property''. Suppose that buyer $i$ is a quasi-monopsonist at decision node $\x$, and let $P$ be a realisation of the equilibrium path such that buyer $i$'s bid at the final round is no less than buyer $-i$'s bid at the final round. Now, take an arbitrary realisation of the equilibrium path $Q$ from $\x+\e_{-i}-\e_{i}$. If $Q$ intersects $P$, then the path starting from $\x+\e_{-i}-\e_{i}$ and following $Q$ until intersecting $P$, and then following $P$ is an equilibrium path. Therefore, $i$ is a quasi-monopsonist at $\x+\e_{-i}-\e_{i}$. 

Suppose instead that $Q$ never intersects $P$. Then if $\x', \x''$ are respectively nodes on $P$ and $Q$ with $t(\x') = t(\x'') = s$, it must be that $\x'' = k(s) \cdot (\e_{-i}-\e_{i}) + \x'$ for some $k(s) > 0$ -- as $k(s) - k(s+1) \in \{-1,0,1\}$ and as the initial nodes have difference $k(t) = 1$. In particular, if $\y$ is a node on $P$ with $t(\y) = 1$, then $\y + k(1) \cdot (\e_{-i}-\e_{i})$ is a node on $Q$. Then at the last round of the auction on $Q$, buyer $i$ bids $v_{i}(1-k(1)|\y) \geq v_{i}(1|\y)$ and buyer $-i$ bids $v_{-i}(1+k(1)|\y) \leq v_{-i}(1|\y)$. By choice of $P$, $v_{i}(1|\y) \geq v_{-i}(1|\y)$. Therefore, buyer $i$ weakly outbids buyer $-i$ at the final round of the auction on realised equilibrium path $Q$, hence buyer $i$ is a quasi-monopsonist at decision node $\x+\e_{-i}-\e_{i}$.
\end{proof}

It may not be immediately apparent, but Theorem~\ref{thm:eql-character} gives us a very clear picture of what happens at 
an equilibrium. Specifically, each equilibrium consists of three phases. The first phase is the {\em competitive phase}.
In this phase the identity of the ``eventual monopsonist'' may change depending on the winner of an item (and may be 
uncertain due to randomized tie-breaking). Consequently, the two buyers compete to buy items and drive 
prices above the threshold prices. The buyer who fails to win enough items in this phase retains sufficient duopsony 
power to become a monopsonist. The second phase, the {\em competition reduction} phase, begins once the identity of 
the monopsonist is established.  The monopsonist then posts its threshold price in each round. The other buyer exploits 
the monopsonist's bidding strategy to purchase items. This phase ends when the competition from the other buyer has been weakened sufficiently 
enough for the monopsonist to desire winning all the remaining items. Thus we enter the third phase, the {\em monopsony phase}, where the monopsonist 
purchases all the remaining items at its current baseline price.

\section{The Price of Anarchy}\label{sec:PoA}

The {\em price of anarchy} of a sequential auction is the worst-case ratio between the social welfare attained at an equilibrium allocation
and the welfare of the optimal allocation. 
In this section, we show first show the flaw with the previously claimed proof of the price of anarchy
for two-buyer sequential auctions \emph{with} overbidding allowed. This motivates us to inspect the price of anarchy under 
the no-overbidding constraint. We then exploit our equilibrium characterization to prove that the price of anarchy is exactly $1-1/e$ in two-buyer
sequential auctions with no-overbidding, assuming weakly decreasing incremental valuations.

\subsection{On the Price of Anarchy of Equilibria with Overbidding}\label{sec:PoA-overbidding}
For two-buyer sequential auctions with overbidding permitted, an identical guarantee of $1-1/e$ on 
the price of anarchy was claimed by Bae et al.~\cite{BBB08,BBB09}. 
The proof in the original paper~\cite{BBB08} was flawed and a new proof was given in~\cite{BBB09}. 
Unfortunately, as we now explain there is also a flaw in the new proof. 

Interestingly, {\sc Example 4} also illustrates the bug in~\cite{BBB09}.
Observe, from Figure~\ref{fig:overbidding}, that the efficiency of this auction is:
$$\frac{1 + (2/3-2\epsilon/3) + (1/2+\epsilon)}{3} \ =\  \frac{13}{18} + \frac{\epsilon}{9}$$
This gives a corresponding price of anarchy of $13/18=0.722$. Of course, this in itself is not problematic 
as it exceeds $1-1/e=0.632$. To explain the problem requires the following notion
defined in~\cite{BBB09}, which we present in the terminology of this paper. 
Assume that the valuations are non-decreasing and concave, and they satisfy 
$v_1(k) = 1$ for any $k \in [T]$ and 
$v_2(1) \leq v_1(T)$, then we say that the marginal utilities satisfy the \textbf{subgame kink property} if:
\begin{enumerate}
\item On the equilibrium path, buyer~$1$ wins all his items last.
\item At any node off the equilibrium path, buyer~$1$ wins.
\end{enumerate}
Notice that our auction in Figure~\ref{fig:overbidding} satisfies the subgame kink property. 
Furthermore, in Lemma~5 of~\cite{BBB09}, it is suggested that:
\begin{lemma*}\cite{BBB09}
For any $T \in \mathbb{N}$, if $v_1(k) = 1$ for any $k \in [T]$ and $b \leq 1$, amongst valuation profiles that support equilibria satisfying the 
subgame kink property, the minimal efficiency valuation profile for buyer~$2$ with the constraint $v_2(1) \leq b$ is 
given by, for some $k \in [T]$:
\begin{equation}\label{eqn:minimalV2}
v_2(j)  = \begin{cases}1 - \frac{k}{T-j+1} & j \in [T-k] \\
 0 & \text{otherwise}\end{cases}
\end{equation}
\end{lemma*}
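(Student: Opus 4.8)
The plan is to read the Lemma as a constrained minimisation. Fix the number $k$ of items that buyer~$1$ wins; then under the subgame kink property the equilibrium allocation awards buyer~$2$ the first $m := T-k$ items and buyer~$1$ the last $k$. Since $v_2(j) \le v_2(1) \le b \le 1 = v_1(\cdot)$, the welfare-optimal allocation gives every item to buyer~$1$, so $\opt = T$ and the equilibrium welfare is $\sum_{j=1}^{m} v_2(j) + k$. Hence, for a fixed $k$, minimising efficiency is exactly minimising $\sum_{j=1}^{m} v_2(j)$ subject to (i) $v_2$ non-increasing and non-negative, (ii) $v_2(1)\le b$, and (iii) the profile supporting a subgame-kink equilibrium in which buyer~$2$ indeed wins its first $m$ items. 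The claimed profile should then emerge as the pointwise-smallest feasible $v_2$, and one finally optimises over $k$.

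First I would pin down the forward utilities dictated by the subgame kink property by backward induction. Off the equilibrium path buyer~$1$ wins every remaining item; since a win by buyer~$1$ leaves buyer~$2$'s holdings unchanged, buyer~$2$'s marginal value stays at $v_2(a+1)$ while it holds $a$ items, so buyer~$1$ pays $v_2(a+1)$ in each such round. Consequently, from a node where buyer~$2$ holds $j-1$ items with $s$ items remaining, buyer~$1$'s off-path continuation utility is $s\,(1 - v_2(j))$. On the path, buyer~$1$ ultimately clinches its $k$ items at price $0$ (buyer~$2$ has zero value for them), so buyer~$1$'s on-path forward utility is the constant $k$ throughout buyer~$2$'s winning phase, while buyer~$2$'s forward utility after losing is $0$.

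With these in hand, at the round where buyer~$2$ wins its $j$th item (holding $j-1$, with $s_j := T-j+1$ items left) buyer~$1$'s bid is its marginal value $1 + (s_j-1)(1-v_2(j)) - k$, and buyer~$2$ must bid at least this to win. Writing buyer~$2$'s continuation utility as $U_2(j)$, the constraint is $v_2(j) + U_2(j) \ge 1 + (s_j-1)(1-v_2(j)) - k$. In the binding regime $U_2(j)=0$ this rearranges to $s_j\, v_2(j) \ge s_j - k$, i.e. $v_2(j) \ge 1 - k/(T-j+1)$, matching the claimed profile. I would then check consistency: at equality the induced second price equals $v_2(j)$, so buyer~$2$ makes zero profit each round (justifying $U_2(j)=0$), and a short computation shows buyer~$1$'s deviation to win early also yields exactly $k$, so it is content to wait. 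Thus the claimed profile does support the equilibrium.

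The substantive step is showing this all-binding profile is pointwise minimal, hence minimises $\sum_{j} v_2(j)$, rather than merely feasible. The natural argument is an exchange: lowering $v_2(j)$ by $\delta$ tightens its own constraint by $s_j\delta$ and so must be paid for by raising later values to restore $U_2(j)$; but since each later round $l>j$ has $s_l < s_j$ remaining items, restoring $U_2(j)$ costs strictly more than $\delta$ in $\sum_j v_2(j)$, so no such exchange helps (and the monotonicity of $v_2$ only further restricts the exchange). I expect this extremality argument, together with the interaction with the cap $v_2(1)\le b$ when selecting $k$, to be the main routine obstacle. The genuinely delicate point, however, is constraint~(iii) under \emph{overbidding}: the derivation above computes bids as marginal values, but when overbidding is permitted buyer~$2$ may bid strictly above $v_2(1)$ (exactly as in Example~4), which alters the prices buyer~$1$ faces and hence the set of profiles that sustain a subgame-kink equilibrium. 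Verifying that no such overbidding equilibrium attains strictly lower efficiency than the claimed profile is where I would expect the argument to be most fragile.
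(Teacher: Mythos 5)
This lemma is not something the paper proves --- it is quoted from \cite{BBB09} precisely so that the authors can \emph{refute} it. Example~4 of the paper (with $T=3$, $v_1\equiv 1$, $v_2=(2/3-\delta,\,1/2+\epsilon,\,0)$ and $b=2/3-\delta$) is a counterexample: that profile supports an equilibrium satisfying the subgame kink property with efficiency $13/18+\epsilon/9$, whereas the best feasible profile of the form~(\ref{eqn:minimalV2}) (necessarily $k=2$, since $k=1$ forces $v_2(1)=2/3>b$ and $k=3$ gives full efficiency) achieves efficiency $14/18$. So the statement is false as written, no proof of it can be correct, and your attempt necessarily contains a gap.

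The gap sits in your extremality step combined with the final optimisation over $k$. Your exchange argument, even granting it, shows at most that \emph{for a fixed $k$ for which the all-binding profile is feasible}, that profile minimises $\sum_j v_2(j)$ among profiles supporting buyer~2 winning $T-k$ items. It says nothing about the case where the cap $v_2(1)\le b$ rules out the all-binding profile for the $k$ that would be globally efficiency-minimising. In that case the constrained minimiser for that $k$ is obtained by pushing $v_2(1)$ down to $b$ and compensating by raising later increments just enough to keep buyer~2 winning $T-k$ items; the resulting profile is \emph{not} of the form~(\ref{eqn:minimalV2}), its welfare exceeds the unconstrained minimum for that $k$ only infinitesimally, and it can therefore still undercut every feasible form-(\ref{eqn:minimalV2}) profile at the other values of $k$. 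That is exactly what happens in Example~4. You correctly flag overbidding as the fragile point --- and indeed buyer~2 overbids at the source node in that example, which is what lets it keep winning two items after $v_2(1)$ is lowered below $2/3$ --- but you stop short of noticing that this actually falsifies the lemma rather than merely complicating its proof. The paper's own diagnosis agrees in substance: the induction in \cite{BBB09} wrongly assumes that a minimal-efficiency auction has a minimal-efficiency subauction after buyer~2's first win, whereas the requirement that buyer~2 win at the root imposes extra constraints on that subauction.
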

In particular, for the auction of {\sc Example 4}, we have $v_2(1) \leq 2/3 - \delta$ for small $\delta > 0$.
Consider applying the lemma above for $b = 2/3 - \delta$. Then the minimal efficiency $v_2(\cdot)$ should be given by (\ref{eqn:minimalV2}) 
for some $k \in [T]$. It can't be that $k = 1$, as that would imply $v_2(1) = 2/3 > 2/3 - \delta$. On the other hand, $k = 3$ leads to 
efficiency. So, in this case, the minimal efficiency should be given by $k=2$ where:
\begin{equation*}
v_2(j) = \begin{cases}
1/3 & j = 1 \\
0 & j \in \{2,3\}
\end{cases}
\end{equation*}
Let us compute the resultant equilibrium bidding strategies. In the final round of the auction, it is always the case that buyer~$1$ wins 
the item and pays buyer~$2$'s incremental value. So for $t = 1$, $u_1(\x) = 2/3$ if $\x = (2,0)$ and $u_1(\x) = 1$ otherwise. 
Also for such $\x$, $u_2(\x) = 0$.
Buyer~$1$ then bids $b_1(\e_1) = 1 + 2/3 - 1 = 2/3$ at decision node~$(1,0)$  and bids $1$ at node~$(0,1)$.
On the other hand, because $u_2(\x) = 0$ for any $\x$ with $t = 1$, we have $u_2(\e_1) = 1/3$ and $u_2(\e_2) = 0$. 
So at decision node $(1,0)$, buyer~$1$ wins an 
item at price $1/3$, and at decision node $(0,1)$, it wins an item at price $0$. Therefore, $u_1(\e_1) = 4/3$, 
$u_1(\e_2) = 2$ and $u_2(\e_1) = u_2(\e_2) = 0$. Consequently, the bids at decision node~$\0$ are:
\begin{align*}
b_1(\0) & = 1 + 4/3 - 2 = 1/3 \\
b_2(\0) & = 1/3 + 0 - 0 = 1/3
\end{align*}
Breaking the tie in favour of buyer~$2$, this valuation profile has price of anarchy:
$$\frac{1/3+1+1}{3} = \frac{7}{9} = \frac{14}{18}$$
But this is greater than $13/18 + \epsilon/9$ for sufficiently small $\epsilon > 0$, hence it is not minimal. 
The problem seems to be in the induction step. The proofs assume in a minimal efficiency auction with $v_2(1) \leq b$ that if 
buyer~$2$ wins the first item, then the subauction starting from $(0,1)$ is also a minimal efficiency auction with $v_2(2) \leq v_2(1)$. 
Our example shows that this is not necessarily the case: the subauction starting from $(0,1)$ need not be minimal, as there are 
further constraints on it if buyer~$2$ is to win at $(0,0)$.

While the arguments in \cite{BBB09} may be incorrect, we strongly suspect that the claimed efficiency result is correct. Reducing the 
minimal efficiency problem to a mixed-integer linear program, we were able to verify for $T \leq 12$ that $T$-round minimal 
efficiency auctions with concave and non-decreasing valuations, where $v_1(T) \geq v_2(1)$ and buyer~$1$ is constrained to 
win $k$ items in the equilibrium, do have buyer~$2$'s incremental valuations satisfy equation~(\ref{eqn:minimalV2}) (with the same $k$).
This implies the $1-1/e$ efficiency bound holds for $T \leq 12$. 

\subsection{Social Welfare in Two-Buyer Sequential Auctions with No-Overbidding.}

Let's begin with the required definitions.
The {\em social welfare} of the allocation where buyer~$1$ wins $k\in [t] \cup \{0\}$ items from decision node~$\x$
is $\wel(k|\x) = \sum_{i=1}^k v_1(i|\x) + \sum_{i=1}^{t-k} v_2(i|\x)$. The {\em optimal social welfare}
is $\opt(\x) = \max_{k \in [t] \cup \{0\}} \wel(k | \x)$ and the corresponding {\em set of optimal allocations} from 
decision node~$\x$ 
is $\argopt(\x) = \arg  \max_{k \in [t] \cup \{0\}} \wel(k | \x)$.

The duopsony factor~(\ref{def:duopsony-factor}) of each buyer provides an explicit form of the set of optimal allocations.
Specifically, the set of optimal allocations at decision node $\x$ forms an interval.

\begin{claim}\label{cl:alloc-character}
The set of optimal allocations is $\argopt(\x) = [ f_1(\x), t-f_2(\x) ] \cap \mathbb{Z}$ at $\x$.
\end{claim}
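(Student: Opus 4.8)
The plan is to reduce the claim to the discrete concavity of the welfare function $\wel(\cdot|\x)$ in the number of items $k$ awarded to buyer~$1$, and then to read off the two endpoints of the maximizing interval directly from the duopsony factors. First I would compute the forward difference
$$\wel(k|\x) - \wel(k-1|\x) = v_1(k|\x) - v_2(t-k+1|\x),$$
which follows immediately from the definition of $\wel$ by cancelling the common summands. The key structural observation is that this difference is weakly decreasing in $k$: as $k$ grows, $v_1(k|\x)$ weakly decreases by concavity of $V_1$, while $v_2(t-k+1|\x)$ weakly increases by concavity of $V_2$, since its argument $t-k+1$ shrinks. Hence $\wel(\cdot|\x)$ is discretely concave on $[t]\cup\{0\}$, so its set of maximizers is automatically a contiguous block of integers; it then remains only to locate the two endpoints.

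For the lower endpoint I would use definition~(\ref{def:duopsony-factor}) of $f_1(\x)$ together with the monotonicity of the difference. Since the difference is weakly decreasing and is strictly positive at $k=f_1(\x)$, it is strictly positive for every $k\le f_1(\x)$, so $\wel$ strictly increases up to $k=f_1(\x)$; by maximality of $f_1(\x)$ the difference is nonpositive at $k=f_1(\x)+1$, hence at every larger $k$, so $\wel$ weakly decreases thereafter. Thus the maximum is first attained at $k=f_1(\x)$, which is therefore the smallest element of $\argopt(\x)$. The degenerate case $f_1(\x)=0$ is covered automatically, since then the difference is already nonpositive at $k=1$ and so $0$ is a maximizer.

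For the upper endpoint I would invoke the symmetric version of the same argument, reparametrising by the number $m=t-k$ of items awarded to buyer~$2$. Written as a function of $m$, the welfare has forward difference $v_2(m|\x)-v_1(t-m+1|\x)$, so the identical reasoning shows that its smallest maximizer in $m$ is $f_2(\x)$; translating back via $k=t-m$, the largest maximizer in $k$ is $t-f_2(\x)$. Combining the two endpoints with the interval structure yields $\argopt(\x)=[f_1(\x),\,t-f_2(\x)]\cap\mathbb{Z}$, and non-emptiness of this interval is automatic, since both endpoints are genuine maximizers and this forces $f_1(\x)\le t-f_2(\x)$.

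I expect no serious obstacle; the argument is essentially bookkeeping built on a single monotonicity fact. The one point needing care is the interplay between the strict inequality in the definition of the duopsony factor and the weak inequalities governing where $\wel$ rises or falls: one must verify that strict positivity of the difference at $k=f_1(\x)$ propagates to all smaller $k$ (so that $f_1(\x)$, rather than some smaller index, is the first maximizer) and that nonpositivity propagates to all larger $k$. The weak monotonicity of the difference handles both directions cleanly, and the same mechanism, applied symmetrically in $m$, pins down the upper endpoint.
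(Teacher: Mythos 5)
Your proposal is correct and follows essentially the same route as the paper: both rest on the forward-difference identity $\wel(k+1|\x)-\wel(k|\x)=v_1(k+1|\x)-v_2(t-k|\x)$ together with the definition of the duopsony factors and the weak monotonicity of incremental values. The only cosmetic difference is that you package the argument as discrete concavity plus separate location of the two endpoints (via the $m=t-k$ symmetry), whereas the paper runs an explicit three-case sign analysis showing $\wel$ is strictly increasing, constant, then strictly decreasing.
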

\begin{proof}
To begin observe that, for $0 \leq k < t$, we have
\begin{equation}\label{eq:interval}
\wel(k+1|\x) - \wel(k|\x) = v_1(k+1|\x) - v_2(t-k|\x)
\end{equation}
First, assume that $k < f_1(\x)$ and so $k+1 \leq f_1(\x)$. By definition of duopsony factor~(\ref{def:duopsony-factor}) of buyer~$1$, it follows that 
$v_1(k+1|\x) - v_2(t-k|\x)>0$. So, by~(\ref{eq:interval}), we have $\wel(k+1|\x) > \wel(k|\x)$. 
The lower bound on the claimed interval then holds.

Second, assume $f_1(\x) \leq k < t-f_2(\x)$. Because $k+1 > f_1(\x)$, we have $v_1(k+1|\x) \le v_2(t-k|\x)$, by definition of 
duopsony factor~(\ref{def:duopsony-factor}) of buyer~$1$. Furthermore, as 
as $k < t-f_2(\x)$ we have $t-k > f_2(\x)$. Therefore, $v_2(t-k|\x) \le v_1(k+1|\x)$, by definition of duopsony factor~(\ref{def:duopsony-factor}) 
of buyer~$2$. Ergo, $v_2(t-k|\x) = v_1(k+1|\x)$. Thus $\wel(k+1|\x) = \wel(k|\x)$ and each integer in the interval
has the same social welfare.

Finally, assume $k \geq t-f_2(\x)$. Then $t-k \leq f_2(\x)$. Hence $v_2(t-k|\x) > v_1(k+1|\x)$
by definition of duopsony factor~(\ref{def:duopsony-factor}) of buyer~$2$.
It follows that $\wel(k+1|\x) < \wel(k|\x)$. The upper bound of the claimed interval then holds.
\end{proof}

Now, recall each equilibrium induces an equilibrium path in the extensive form game. Thus, we must study the social welfare arising along paths.
Formally, a {\em path} from decision node~$\x$ is a $(t+1)$-tuple $P = (\x^t,\x^{t-1},...,\x^1,\x^0)$, such that 
(i)~The first node on the path is $\x$, that is, $\x^t = \x$, and (ii)~Each successive node follows from some buyer~$j$ acquiring an item, that is,
for each $s \in [t]$ we have $\x^{s-1} = \x^s + \e_j$, for some buyer $j \in \{1,2\}$. 
We say that a path is an {\em equilibrium path} (taken with positive probability) if 
each successive node is the result of a buyer~$j$ winning with positive probability, that is
$\x^s + \e_j$ implies that $\pi_j(\be|\x) > 0$.
Finally, for any path $P = (\x^t,\x^{t-1},...,\x^1,\x^0)$ and any $s \in [t]$, we say that $P^s = (\x^s,...,\x^0)$ is the subpath of $P$ of length $s+1$.

We may then define a measure of efficiency along a path $P = (\x^t,\x^{t-1},...,\x^1,\x^0)$.
The {\em efficiency} along a path $P$ from $\x$ is:
$$\Gamma(P) = \frac{\wel(x^0_1 - x^t_1|\x)}{\opt(\x)}$$
We will see that if neither buyer has monopsony power then the efficiency of a path can be bounded by the 
efficiency along a subpath.

We now present a conditional lower bound on efficiency:

\begin{lemma}\label{lem:consec-bound}
Let $P$ be a path from~$\x$ where $t>1$. If neither buyer has monopsony power , that is $f_i(\x) < t$ for each buyer~$i$, 
then $\Gamma(P) \geq \Gamma(P^{t-1})$.
\end{lemma}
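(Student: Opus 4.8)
The plan is to reduce the inequality $\Gamma(P)\ge\Gamma(P^{t-1})$ to a single elementary monotonicity fact, after first determining how the numerator $\wel$ and the denominator $\opt$ transform under the first step $\x\to\x+\e_j$ of the path. Write $k=x^0_1-x^t_1$ for the number of items buyer~$1$ acquires along all of $P$, so that $\Gamma(P)=\wel(k|\x)/\opt(\x)$. Let $j\in\{1,2\}$ be the buyer who wins the first item, so that $\x^{t-1}=\x+\e_j$; then buyer~$1$ acquires $k-1$ items along $P^{t-1}$ when $j=1$ and $k$ items when $j=2$. In both cases I aim to show the subpath numerator equals $\wel(k|\x)-v_j(1|\x)$ and the subpath denominator equals $\opt(\x)-v_j(1|\x)$, giving
\[
\Gamma(P^{t-1})=\frac{\wel(k|\x)-v_j(1|\x)}{\opt(\x)-v_j(1|\x)}.
\]

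The numerator identity is a routine telescoping. For $j=1$, using the shift relations $v_1(i|\x+\e_1)=v_1(i+1|\x)$ and $v_2(i|\x+\e_1)=v_2(i|\x)$ and reindexing the buyer-$1$ sum yields $\wel(k-1|\x+\e_1)=\wel(k|\x)-v_1(1|\x)$; the case $j=2$ is symmetric via $v_2(i|\x+\e_2)=v_2(i+1|\x)$, giving $\wel(k|\x+\e_2)=\wel(k|\x)-v_2(1|\x)$. So in each case the numerator of $\Gamma(P^{t-1})$ is $\wel(k|\x)-v_j(1|\x)$, as wanted.

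The denominator identity is the crux and is where the no-monopsony hypothesis is essential. Rewriting the optimum after the shift, the same welfare relation gives $\opt(\x+\e_1)=\max_{1\le k''\le t}\wel(k''|\x)-v_1(1|\x)$, where the restricted maximum ranges only over allocations giving buyer~$1$ at least one item. This restricted maximum equals $\opt(\x)$ precisely when some optimal allocation assigns buyer~$1$ at least one item, that is, when $\max\argopt(\x)=t-f_2(\x)\ge 1$, which holds because $f_2(\x)<t$. Symmetrically, $\opt(\x+\e_2)=\opt(\x)-v_2(1|\x)$ requires some optimal allocation to give buyer~$1$ at most $t-1$ items, i.e. $\min\argopt(\x)=f_1(\x)\le t-1$, guaranteed by $f_1(\x)<t$. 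Here I invoke the interval form $\argopt(\x)=[f_1(\x),\,t-f_2(\x)]$ of Claim~\ref{cl:alloc-character}. This is the main obstacle: without both $f_i(\x)<t$ the optimal allocation could be forced to place all (or none) of the items on one buyer, so deleting a single item might cost strictly more than $v_j(1|\x)$ and the clean identity $\opt(\x+\e_j)=\opt(\x)-v_j(1|\x)$ would fail.

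With both identities in hand, set $W=\wel(k|\x)$, $O=\opt(\x)$, and $a=v_j(1|\x)\ge 0$, noting $W\le O$ by definition of $\opt$. The map $a\mapsto\frac{W-a}{O-a}$ has derivative $\frac{W-O}{(O-a)^2}\le 0$ on $[0,O)$, hence is non-increasing, so evaluating at $a=0$ gives
\[
\Gamma(P^{t-1})=\frac{W-a}{O-a}\ \le\ \frac{W}{O}=\Gamma(P),
\]
which is the claim. The only remaining situation is the degenerate one where $\opt(\x+\e_j)=O-a=0$; there the subpath welfare also vanishes and the inequality holds under the natural convention that an efficiency of $0/0$ is at most $1$.
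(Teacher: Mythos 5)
Your proposal is correct and follows essentially the same route as the paper: both decompose the numerator via the valuation shift $\x\to\x+\e_j$ and establish $\opt(\x+\e_j)=\opt(\x)-v_j(1|\x)$ using the interval characterization $\argopt(\x)=[f_1(\x),\,t-f_2(\x)]$ together with the hypothesis $f_i(\x)<t$, then conclude by the elementary monotonicity of $a\mapsto (W-a)/(O-a)$ with $W\le O$. The only cosmetic differences are that the paper fixes the winner WLOG to be buyer~$1$ and phrases the last step as adding $v_1(1|\x)$ to the subpath's numerator and denominator rather than subtracting it from the original's.
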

\begin{proof}
Without loss of generality, let $x^1_1 - x^0_1 = \e_1$. So buyer~$1$ is allocated the item at decision node $\x$ along the path $P$. 
Observe that:
\begin{align*}
\argopt(\x+\e_1) & = [f_1(\x+\e_1),t-1-f_2(\x+\e_1)] \\
& = [f_1(\x+\e_1),t-1-f_2(\x)]
\end{align*}
Here the first equality follows from Claim~\ref{cl:alloc-character} and the second equality follows from 
Claim~\ref{cl:sdf<t}. In particular, $t-1-f_2(\x)$ is in the interval $\argopt(\x+\e_1)$. 
But, again by Claim \ref{cl:alloc-character}, $t-f_2(\x)$ is in  the interval $\argopt(\x)$. Hence:
\begin{align}\label{eq:optxT}
\opt(\x) & = \sum_{i=1}^{t-f_2(\x)} v_1(i|\x) + \sum_{i=1}^{f_2(\x,t)} v_2(i|\x) \nonumber \\
& = v_1(1 |\x) + \sum_{i=2}^{t-f_2(\x)} v_1(i|\x) + \sum_{i=1}^{f_2(\x,t)} v_2(i|\x)\nonumber  \\
& = v_1(1 |\x) + \sum_{i=1}^{t-1-f_2(\x)} v_1(i|\x+\e_1) + \sum_{i=1}^{f_2(\x,t)} v_2(i|\x+\e_1)\nonumber  \\
& = v_1(1 |\x) + \opt(\x+\e_1)
\end{align}
Where the third equality follows from a change of variables in the first summation, and by considering the shifted 
valuations for $\x \rightarrow \x+\e_1$. We may then bound the efficiency along $P$:
\begin{align*}
\Gamma(P) & = \frac{\wel(x^0_1 - x^t_1|\x)}{\opt(\x)} \\
& = \frac{\sum_{i=1}^{x^0_1 - x^t_1} v_1(i|\x) + \sum_{i=1}^{x^0_2 - x^t_2} v_2(i|\x)}{\opt(\x)}\\
& = \frac{\sum_{i=2}^{x^0_1 - x^t_1} v_1(i|\x) + \sum_{i=1}^{x^0_2 - x^t_2} v_2(i|\x) + v_1(1|\x)}{\opt(\x)} \\
& = \frac{\sum_{i=1}^{x^0_1 - x^{t-1}_1} v_1(i|\x+\e_1) + \sum_{i=1}^{x^0_2 - x^{t-1}_2} v_2(i|\x+\e_1) + v_1(1|\x)}{\opt(\x)} \\
& = \frac{\wel(x^0_1 - x^{t-1}_1|\x+\e_1)+v_1(1|\x)}{\opt(\x+\e_1)+v_1(1|\x)} \\
& \geq \frac{\wel(x^0_1 - x^{t-1}_1|\x+\e_1)}{\opt(\x+\e_1)} \\
&= \Gamma(P^{t-1})
\end{align*}
Where the fourth equality follows since $x^{t-1}_1 = x_1^t+1$ and $x^{t-1}_2 = x^t_2$. The fifth equality holds by~(\ref{eq:optxT}). 
The inequality arises because $v_1(1|\x) \geq 0$.
\end{proof}

We may now provide a proof that a path $P$ that does not attain full efficiency has a subpath $P'$ whose initial node has a strict monopsonist, such that $\Gamma(P) \geq \Gamma(P')$.

\begin{lemma}\label{lem:path-efficiency}
Let $P$ be a path from~$\x$. If $f_j(\x^s) < t$ for each buyer $j\in\{1,2\}$ and for all $s \in [t]$ then $\Gamma(P) = 1$. 
Otherwise, if $f_j(\x^s) = s$ for some buyer $j$ and some $s \in [t]$ then 
$\Gamma(P) \geq \Gamma(P^{\bar{t}})$, where $\bar{t} = \max \{s \in [t]: \exists j \in\{1,2\}, f_j(\x^s) = s\}$.
\end{lemma}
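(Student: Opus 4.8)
The statement has two cases, and the natural strategy is to handle them separately while leaning on Lemma~\ref{lem:consec-bound} as the workhorse. First I would dispose of the easy case: if $f_j(\x^s) < t$ for every buyer $j$ and every $s \in [t]$, then no buyer ever has total duopsony power along the entire path $P$, so Lemma~\ref{lem:consec-bound} applies at every node of $P$. Applying it repeatedly, one climbs down the path $\Gamma(P) \geq \Gamma(P^{t-1}) \geq \cdots \geq \Gamma(P^1)$, and since $P^1$ is a single-item subpath I would verify directly that $\Gamma(P^1) = 1$ (the winner at the last node is whoever has the strictly higher incremental value, which by Claim~\ref{cl:alloc-character} is precisely the welfare-optimal choice when $f_1, f_2 < t$, so the terminal allocation is optimal). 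Hence $\Gamma(P) \geq 1$, forcing $\Gamma(P) = 1$.

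\textbf{The main case.} For the second statement, the key object is $\bar{t} = \max\{s \in [t] : \exists j,\, f_j(\x^s) = s\}$, the \emph{latest} time along the path at which some buyer is a strict monopsonist (recall $f_j(\x^s) = s$ means buyer $j$ has total duopsony power at the node $\x^s$, where $s$ items remain). The crucial observation is that by maximality of $\bar{t}$, at every node $\x^s$ with $s > \bar{t}$ (i.e. strictly \emph{earlier} on the path, with more items remaining) neither buyer is a strict monopsonist: $f_j(\x^s) < s = t(\x^s)$ for both $j$. This is exactly the hypothesis needed to invoke Lemma~\ref{lem:consec-bound} at each such node. So I would apply the lemma iteratively from $\x = \x^t$ down to $\x^{\bar{t}}$, telescoping $\Gamma(P) = \Gamma(P^t) \geq \Gamma(P^{t-1}) \geq \cdots \geq \Gamma(P^{\bar{t}})$, which is precisely the desired inequality $\Gamma(P) \geq \Gamma(P^{\bar{t}})$.

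\textbf{The subtle point.} The step requiring care is confirming that Lemma~\ref{lem:consec-bound} is legitimately applicable at every node $\x^s$ for $\bar{t} < s \leq t$. The lemma's hypothesis is $f_i(\x^s) < t(\x^s)$ for both buyers, i.e. $f_i(\x^s) < s$, and this is where maximality of $\bar{t}$ does the work: if some buyer had $f_j(\x^s) = s$ for an $s > \bar{t}$, that would contradict the definition of $\bar{t}$ as the maximum such index. I would also note that we only ever apply the lemma at nodes with $s \geq 2$ (since the transition from $P^s$ to $P^{s-1}$ requires $t(\x^s) = s > 1$), and the chain terminates exactly at $P^{\bar{t}}$, so we never need to invoke it at $\x^{\bar{t}}$ itself—which is good, since a strict monopsonist may exist there and the hypothesis could fail. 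The main obstacle is thus not any hard calculation but rather carefully tracking the indexing convention (that $t(\x^s) = s$) so that the maximality of $\bar{t}$ cleanly furnishes the non-monopsony hypothesis at every node where the lemma is invoked; once that bookkeeping is pinned down, the telescoping is immediate.
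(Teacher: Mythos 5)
Your proposal is correct and follows essentially the same route as the paper: both cases are handled by telescoping Lemma~\ref{lem:consec-bound} down the path, with the first case bottoming out at $\Gamma(P^1)=1$ (both duopsony factors vanish at $\x^1$, so by Claim~\ref{cl:alloc-character} either allocation of the last item is optimal) and the second case stopping at $P^{\bar{t}}$. The only cosmetic difference is that you justify the applicability of Lemma~\ref{lem:consec-bound} at nodes $s>\bar{t}$ directly from the maximality of $\bar{t}$, whereas the paper routes this through Claim~\ref{cl:sdf-t} (strict monopsony is hereditary, so the set of such $s$ is exactly $[\bar{t}]$); both yield the same needed fact.
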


\begin{proof}
First, suppose $f_j(\x^s) < s$ for each buyer $j\in\{1,2\}$ and for all $s \in [t]$. Then, by Lemma~\ref{lem:consec-bound}, 
 $\Gamma(P^s) \geq \Gamma(P^{s-1})$ for any $2\le s \in t$. Therefore, $\Gamma(P) \geq \Gamma(P^1)$. 
But, at $\x^1$, we have a single-item second price auction, which is fully efficient. That is, $\Gamma(P^1) = 1$.
Thus $\Gamma(P) = 1$.

Second, suppose $f_j(\x^s) = s$ for some buyer $j$ and some $s \in [t]$. By Claim~\ref{cl:sdf-t}, the set 
$\{s \in [t] :  \exists j \in \{1,2\}, f_j(\x^s) = s\}$ is 
equal to $[\bar{t}]$. In particular, for any buyer $j$ and any $s \in [t]\setminus[\bar{t}]$ we have $f_j(\x^s) < s$. 
Therefore, by repeated application of Lemma~\ref{lem:consec-bound}, it follows that
$\Gamma(P) \geq \Gamma(P^{\bar{t}})$.
\end{proof}

\begin{corollary}\label{cor:path-efficiency}
Let $P$ be an equilibrium path from $\x$. If $P$ does not have full efficiency, then for some $s \in [t]$, 
$\Gamma(P)$ is bounded below by $\Gamma(P^{s})$, where for some buyer~$j$, $f_j(\x^s) = s$.
\end{corollary}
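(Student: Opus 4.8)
The plan is to read the corollary directly off Lemma~\ref{lem:path-efficiency} by taking its contrapositive. First I would observe that at any node $\x^s$ along the path the number of remaining items is $t(\x^s) = s$, so the duopsony factor always satisfies $f_j(\x^s) \leq s$; consequently the condition $f_j(\x^s) = s$ is precisely the statement that buyer~$j$ is a strict monopsonist at $\x^s$, and its negation is $f_j(\x^s) < s$. This bookkeeping is what ensures the two branches of Lemma~\ref{lem:path-efficiency} are genuinely exhaustive.

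Next I would invoke the first branch of Lemma~\ref{lem:path-efficiency}: if $f_j(\x^s) < s$ holds for every buyer $j \in \{1,2\}$ and every $s \in [t]$, then $\Gamma(P) = 1$. Taking the contrapositive, whenever $P$ fails to attain full efficiency there must exist an index $s \in [t]$ and a buyer $j$ with $f_j(\x^s) = s$.

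Finally I would set $\bar{t} = \max\{s \in [t] : \exists\, j \in \{1,2\},\ f_j(\x^s) = s\}$, which is well-defined by the previous step, and apply the second branch of Lemma~\ref{lem:path-efficiency} to obtain $\Gamma(P) \geq \Gamma(P^{\bar{t}})$, where by construction $f_j(\x^{\bar{t}}) = \bar{t}$ for some buyer $j$. Taking $s = \bar{t}$ then yields the claim. I expect no real obstacle here, since the substantive work already lives in Lemma~\ref{lem:path-efficiency}; the only point deserving care is the identification of $f_j(\x^s) = s$ with strict monopsony, so that the lemma's case split covers all possibilities. Note also that the equilibrium hypothesis on $P$ is not actually needed for the argument — Lemma~\ref{lem:path-efficiency} holds for arbitrary paths — it merely fixes the context in which the corollary will subsequently be applied.
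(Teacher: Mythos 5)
Your proposal is correct and matches the paper's intent exactly: the paper gives no separate proof for this corollary, treating it as an immediate consequence of Lemma~\ref{lem:path-efficiency} via precisely the contrapositive-plus-second-branch reading you describe. Your observation that $f_j(\x^s) \leq s$ always holds (so the two branches of the lemma are exhaustive) is the right bookkeeping, and your remark that the equilibrium hypothesis is not needed for the bound itself is also accurate.
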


So, to bound the efficiency of along an equilibrium path, we may restrict attention to equilibrium paths 
from a node $\x$ where, without loss of generality, buyer~$1$ has monopsony power.

\subsubsection{The Real Extension of the Incremental Valuation Function.}

So it suffices to evaluate the efficiency of a sequential auction where buyer~$1$ is a strict monopsonist at decision node $\0$ for some fixed $T$. 
By Theorem~\ref{thm:bid-threshold-main}, the quantity of items that buyer $1$ wins at equilibrium is at least its greedy 
demand $\kappa_1 := \kappa_1(\0)$. Observe that, for any $k > \kappa_1$, the welfare $V_1(k) + V_2(T-k)$ is greater than $V_1(\kappa_1) + V_2(T-\kappa_1)$. 
So we need to only lower bound the social welfare of the allocation $(\kappa_1,T-\kappa_1)$.

To do so, first consider extending 
incremental valuations to the real line, where $\forall \tau \in [0,t]$:
\begin{align}\label{def:real-extension}
\bar{v}_{1}(\tau|\x)  &= v_{1}(\lceil \tau \rceil \, |\x) \nonumber \\
\bar{v}_{2}(\tau|\x) &= v_{2}(\lfloor \tau+1 \rfloor\, |\x)
\end{align}

Also note that buyers' valuation functions also enjoy a real extension:
\begin{equation*}
\forall \ell \in \mathbb{R}, \bar{V}_i(\ell|\x) = \int_{0}^\ell \bar{v}_i(\tau | \x) \,d\tau
\end{equation*}

Now, buyer~$1$ obtains her greedy utility, $\mu_1(\0)$. We already know that this implies that this is the maximum of its greedy strategies 
for purchasing any integral amount. However, with the real extension, buyer~$1$'s greedy utility must also maximise her payoff amongst greedy 
strategies purchasing a fractional amount of items.

\begin{lemma}
Suppose that buyer~$1$ is a strict monopsonist at decision node $\x$. Then for any $k \in [0,t]$:
$$\mu_1(\x) \geq \bar{V}_1(k|\x) - k \cdot \bar{v}_2(t-k|\x)$$
\end{lemma}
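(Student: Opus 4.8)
The plan is to reduce the statement to the inequality $\bar\mu_1(j|\x) \le \mu_1(\x)$ for integer $j$, which holds for every integer $j \in [t]\cup\{0\}$ directly from the definition~(\ref{def:minmax-utility}) of the greedy utility. Write $k = m + \theta$ with $m = \lfloor k \rfloor$ and $\theta = k - m \in [0,1)$, and first evaluate the two real-extension quantities on this decomposition. Since $\bar v_1(\tau|\x) = v_1(\lceil\tau\rceil|\x)$ is the step function taking the value $v_1(j|\x)$ on $(j-1,j]$, integrating gives
\begin{equation*}
\bar V_1(k|\x) = \sum_{j=1}^{m} v_1(j|\x) + \theta\cdot v_1(m+1|\x).
\end{equation*}
For the price term, note $\bar v_2(t-k|\x) = v_2(\lfloor t-k+1\rfloor|\x)$, and since $t-k+1 = (t-m+1)-\theta$, the floor equals $t-m+1$ when $\theta = 0$ and $t-m$ when $\theta>0$. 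The case $\theta=0$ is immediate: the right-hand side collapses to $\sum_{j=1}^m v_1(j|\x) - m\cdot v_2(t-m+1|\x) = \bar\mu_1(m|\x) \le \mu_1(\x)$.

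For $\theta>0$ I would compare the target quantity $\bar V_1(k|\x) - k\cdot v_2(t-m|\x)$ against the convex combination $(1-\theta)\,\bar\mu_1(m|\x) + \theta\,\bar\mu_1(m+1|\x)$ of two consecutive integer greedy utilities; this is well-defined because $\theta>0$ forces $k<t$, hence $m+1\le t$. Expanding both expressions and subtracting, the $v_1$ contributions cancel exactly and the $v_2(t-m|\x)$ coefficients simplify, so that what remains is precisely $m(1-\theta)\big(v_2(t-m+1|\x) - v_2(t-m|\x)\big)$. Since buyer~$2$'s valuation is concave, $v_2(t-m+1|\x) \le v_2(t-m|\x)$, whence this remainder is $\le 0$ and the target is at most the convex combination. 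Finally, a convex combination of $\bar\mu_1(m|\x)$ and $\bar\mu_1(m+1|\x)$ cannot exceed their maximum, which is at most $\mu_1(\x)$, yielding the claim.

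The main obstacle — really the only subtle point — is the fractional-part bookkeeping, in particular recognising that the floor in the definition of $\bar v_2$ pins buyer~$2$'s marginal price at the higher value $v_2(t-m|\x)$ (rather than $v_2(t-m+1|\x)$) throughout the fractional increment. Once this is settled, concavity of $V_2$ does all the work: geometrically, the real-extension greedy payoff lies below the piecewise-linear interpolation of the integer greedy payoffs, which in turn lies below their maximum $\mu_1(\x)$. I would remark that the strict-monopsonist hypothesis $f_1(\x)=t$ is not actually needed for the inequality itself — it merely guarantees that the greedy strategies targeting every integer amount are feasible — so the argument goes through for a general decision node using only weak concavity of the incremental values.
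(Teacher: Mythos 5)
Your proof is correct, but it takes a genuinely different route from the paper's. The paper argues by contradiction and rounds $k$ \emph{up}: it notes that, because buyer~$1$ is a strict monopsonist, $\bar v_1(\ell|\x) = v_1(\lceil k\rceil|\x) > v_2(t-\lceil k\rceil+1|\x) = \bar v_2(t-k|\x)$ on the fractional increment $(k,\lceil k\rceil]$, so acquiring the remaining $\lceil k\rceil-k$ fraction at the price $v_2(t-\lceil k\rceil+1|\x)$ only increases the payoff; this bounds the real-extension payoff by the single integer value $\bar\mu_1(\lceil k\rceil|\x)\le\mu_1(\x)$. You instead bound it by the convex combination $(1-\theta)\,\bar\mu_1(m|\x)+\theta\,\bar\mu_1(m+1|\x)$; I checked the algebra, and the residual $m(1-\theta)\bigl(v_2(t-m+1|\x)-v_2(t-m|\x)\bigr)$ is indeed what remains and is $\le 0$ by weak monotonicity of buyer~$2$'s incremental values alone, and your identification $\bar v_2(t-k|\x)=v_2(t-m|\x)$ for $\theta>0$ is right. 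Your closing observation is also accurate: the paper's argument genuinely uses the strict-monopsonist hypothesis (through the strict inequality above), whereas yours does not, so you prove the inequality at an arbitrary decision node given only weakly decreasing incremental values and the definition of $\mu_1$ as an unrestricted maximum over integer targets. What each approach buys: the paper's rounding-up step is shorter and stays within the only setting where the lemma is applied; your interpolation isolates the structural reason the bound holds --- the real extension of the greedy payoff lies below the piecewise-linear interpolation of the integer greedy payoffs --- and yields a marginally more general statement.
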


\begin{proof}
Suppose not, i.e. there exists $k \in (0,t)$ such that:
\begin{equation}\label{eqn:boundContradict}
\mu_1(\x) < \bar{V}_1(k|\x) - k \cdot \bar{v}_2(t-k|\x)
\end{equation}
Note that, as buyer~$1$ is a strict monopsonist, for any $\ell \in [0,t]$: 
$$\bar{v}_1(\ell | \x) = v_1(\lceil \ell \rceil | \x) > v_2(t- \lceil \ell \rceil + 1 | \x) = \bar{v}_2(t - \ell |\x)$$
Therefore:
\begin{equation}\label{eqn:boundContradict2}
0 \leq \int_k^{\lceil k \rceil} \bar{v}_1(\ell|\x) \, d\ell - (\lceil k \rceil - k) v_2(t-\lceil k \rceil + 1 | \x)
\end{equation}
Adding together Equation \ref{eqn:boundContradict} and Equation \ref{eqn:boundContradict2}:
$$\mu_1(\x) < V_1(\lceil k \rceil) - \lceil k \rceil v_2(t - \lceil k \rceil + 1 | \x) = \bar\mu_1(\lceil k \rceil | \x)$$
Contradiction to the definition of $\mu_1(\x)$.
\end{proof}

By definition of $\mu_1$, as incremental valuations are non-negative, this yields an immediate bound of $\bar{v}_2$:

\begin{corollary}\label{cor:lb-v2}
Suppose that buyer~$1$ is a strict monopsonist at decision node $\x$. Then for any $k \in [0,t]$:
$$\bar{v}_2(\ell|\x) \geq \frac{1}{t-\ell} \int_{\kappa_1(\x)}^{t - \ell} \bar{v}_1(\tau| \x) \, d\tau$$
\end{corollary}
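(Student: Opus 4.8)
The plan is to read the corollary off the preceding lemma by a single change of variable together with one application of the non-negativity of incremental values. (Note that the free variable in the displayed inequality should be $\ell$, matching the $\bar v_2(\ell|\x)$ on the left, rather than the $k$ printed in the hypothesis.) First I would invoke the preceding lemma at the point $k = t-\ell$, so that $t-k = \ell$; this yields $\mu_1(\x) \ge \bar{V}_1(t-\ell|\x) - (t-\ell)\,\bar{v}_2(\ell|\x)$. Assuming $t-\ell > 0$, I would rearrange to isolate the quantity of interest, obtaining
\[
\bar{v}_2(\ell|\x) \ \geq\ \frac{\bar{V}_1(t-\ell|\x) - \mu_1(\x)}{t-\ell}.
\]

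The second step is to replace $\mu_1(\x)$ by the larger quantity $\bar{V}_1(\kappa_1|\x)$, where $\kappa_1 = \kappa_1(\x)$. Unwinding the definitions~(\ref{def:minmax-utility}) and~(\ref{def:minmax-demand}) of the greedy utility and greedy demand through the real extension~(\ref{def:real-extension}), one has $\mu_1(\x) = \bar\mu_1(\kappa_1|\x) = \bar{V}_1(\kappa_1|\x) - \kappa_1\,\bar{v}_2(t-\kappa_1|\x)$, using that $\int_0^{\kappa_1}\bar{v}_1(\tau|\x)\,d\tau = \sum_{j=1}^{\kappa_1} v_1(j|\x)$ and $\bar{v}_2(t-\kappa_1|\x) = v_2(t-\kappa_1+1|\x)$ at the integer point $\kappa_1$. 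Since incremental valuations are non-negative we have $\bar{v}_2 \ge 0$, so the subtracted term $\kappa_1\,\bar{v}_2(t-\kappa_1|\x)$ is non-negative; this gives precisely $\mu_1(\x) \le \bar{V}_1(\kappa_1|\x)$, which is exactly where the hypothesis ``as incremental valuations are non-negative'' enters.

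Combining the two steps, I would substitute $-\mu_1(\x) \ge -\bar{V}_1(\kappa_1|\x)$ into the rearranged lemma to obtain
\[
\bar{v}_2(\ell|\x) \ \geq\ \frac{\bar{V}_1(t-\ell|\x) - \bar{V}_1(\kappa_1|\x)}{t-\ell} \ =\ \frac{1}{t-\ell}\int_{\kappa_1}^{t-\ell} \bar{v}_1(\tau|\x)\,d\tau,
\]
which is the claimed bound. The only substantive content is the comparison $\mu_1 \le \bar{V}_1(\kappa_1)$; everything else is arithmetic. Accordingly, the step requiring the most care is not one of difficulty but of bookkeeping: verifying that the real extension~(\ref{def:real-extension}) faithfully reproduces the discrete greedy-utility formula~(\ref{def:mu-bar}) at the integer point $\kappa_1$, and keeping the substitution $k = t-\ell$ and the implicit domain restriction $t-\ell > 0$ (equivalently $\ell < t$) straight so that the division is legitimate and the integral carries the stated endpoints.
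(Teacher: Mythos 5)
Your proposal is correct and follows exactly the route the paper intends: the paper states this corollary without a separate proof, remarking only that it follows from the preceding lemma ``by definition of $\mu_1$, as incremental valuations are non-negative,'' which is precisely your substitution $k = t-\ell$ followed by the bound $\mu_1(\x) = \bar{V}_1(\kappa_1|\x) - \kappa_1\,\bar{v}_2(t-\kappa_1|\x) \le \bar{V}_1(\kappa_1|\x)$. Your observations about the $k$/$\ell$ typo in the statement and the implicit restriction $\ell < t$ are also accurate.
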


\subsubsection{The Price of Anarchy in Two-Buyer Sequential Auctions with No-Overbidding.}

Corollary~\ref{cor:lb-v2} implies that if a strict monopsonist buyer $1$ demands a small number of items, buyer $2$'s incremental valuations cannot also be small. Also, by our equilibrium characterisation we know that buyer $1$ also purchases at least $\kappa_1(\0)$ items. Using these facts, we are able to prove an upper bound on the price of anarchy. 

\begin{theorem}\label{thm:PoA-lower}
A two-buyer sequential multiunit auction with concave and non-decreasing valuations has price of anarchy at least $(1-1/e)$,
given no-overbidding. 
\end{theorem}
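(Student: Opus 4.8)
The plan is to combine the welfare accounting already set up in this subsection with the real-extension bound of Corollary~\ref{cor:lb-v2}, and then collapse the resulting functional inequality onto the elementary estimate $x\ln(1/x)\le 1/e$.

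First I would pin down what must be shown. By Corollary~\ref{cor:path-efficiency}, any equilibrium path that is not fully efficient has efficiency bounded below by that of a subpath whose initial node carries a strict monopsonist, so (as this subsection records) it suffices to bound the efficiency of a sub-auction in which, say, buyer~$1$ is a strict monopsonist at $\0$; write $T$ for the items remaining and $\kappa_1=\kappa_1(\0)$. Since $f_1(\0)=T$, Claim~\ref{cl:alloc-character} gives $\opt(\0)=V_1(T)=\bar V_1(T\,|\,\0)$, and the welfare $V_1(k)+V_2(T-k)$ is strictly increasing in $k$ (its increment $v_1(k{+}1)-v_2(T{-}k)$ is positive by strict monopsony). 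By Theorem~\ref{thm:bid-threshold-main} buyer~$1$ wins at least $\kappa_1$ items at equilibrium, so the realised welfare is at least $\bar V_1(\kappa_1\,|\,\0)+\bar V_2(T-\kappa_1\,|\,\0)$. Thus it is enough to prove
\begin{equation*}
R \ :=\ \frac{\bar V_1(\kappa_1\,|\,\0)+\bar V_2(T-\kappa_1\,|\,\0)}{\bar V_1(T\,|\,\0)}\ \ge\ 1-\frac1e .
\end{equation*}

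Next I would feed in the lower bound on $\bar v_2$. Applying Corollary~\ref{cor:lb-v2} pointwise and exchanging the order of integration — for fixed $\tau\in[\kappa_1,T]$ the variable $\ell$ ranges over $[0,T-\tau]$, and $\int_0^{T-\tau}\frac{d\ell}{T-\ell}=\ln(T/\tau)$ — produces the clean logarithmic kernel
\begin{equation*}
\bar V_2(T-\kappa_1\,|\,\0)\ \ge\ \int_{\kappa_1}^{T}\bar v_1(\tau\,|\,\0)\,\ln\frac{T}{\tau}\,d\tau .
\end{equation*}
Writing $g(\tau)=\bar v_1(\tau\,|\,\0)$ (non-negative and non-increasing, as $V_1$ is concave) and $w(\tau)=1$ on $[0,\kappa_1]$, $w(\tau)=\ln(T/\tau)$ on $[\kappa_1,T]$, the two displays reduce the claim to showing that the weighted average $\int_0^T g\,w\big/\int_0^T g$ is at least $1-1/e$; equivalently, that $\int_0^T g(\tau)\,h(\tau)\,d\tau\ge 0$ for $h:=w-(1-\tfrac1e)$.

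Finally I would dispatch this by a sign-change argument. On $[0,\kappa_1]$ one has $h\equiv 1/e>0$, while on $[\kappa_1,T]$ the function $h(\tau)=\ln(T/\tau)-(1-\tfrac1e)$ is strictly decreasing; hence there is a single point $\sigma$ with $h\ge 0$ on $[0,\sigma]$ and $h\le 0$ on $[\sigma,T]$. Since $g$ is non-increasing, $g\ge g(\sigma)$ wherever $h\ge 0$ and $g\le g(\sigma)$ wherever $h\le 0$, so
\begin{equation*}
\int_0^T g(\tau)\,h(\tau)\,d\tau\ \ge\ g(\sigma)\int_0^T h(\tau)\,d\tau .
\end{equation*}
A direct computation gives $\int_0^T h = T/e-\kappa_1\ln(T/\kappa_1)$, which is non-negative exactly because $x\ln(1/x)\le 1/e$ for $x\in[0,1]$ (maximised at $x=1/e$), applied with $x=\kappa_1/T$; as $g(\sigma)\ge 0$ this yields $R\ge 1-1/e$. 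The main obstacle is precisely this passage from the per-item data to the continuous kernel: recognising that Corollary~\ref{cor:lb-v2}, after the integration swap, yields the weight $\ln(T/\tau)$, and then seeing that the monotonicity of $\bar v_1$ collapses an infinite-dimensional worst-case search to a single scalar inequality. The equality case $g\equiv 1$ with $\kappa_1=T/e$ makes Corollary~\ref{cor:lb-v2} tight and reproduces the profile~(\ref{eqn:minimalV2}), so the bound $1-1/e$ is attained.
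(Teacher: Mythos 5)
Your proposal is correct, and it tracks the paper's argument exactly up to the key integral inequality: the same reduction via Corollary~\ref{cor:path-efficiency} to a strict monopsonist at $\0$, the same use of Theorem~\ref{thm:bid-threshold-main} to lower-bound the welfare by that of the allocation $(\kappa_1,T-\kappa_1)$, and the same application of Corollary~\ref{cor:lb-v2} followed by an exchange of integration order to produce the kernel $\ln(T/\tau)$. Where you genuinely diverge is the last mile. The paper relaxes $\kappa_1$ to a free parameter $\gamma$, minimizes $\int_0^\gamma \bar v_1+\int_\gamma^T \ln(T/\lambda)\,\bar v_1(\lambda)\,d\lambda$ over $\gamma$ via a first-order condition (landing on $\gamma=T/e$), and then integrates by parts to reach $\int_{T/e}^T \bar V_1(\lambda)/\lambda\,d\lambda$, which is bounded below using the fact that $\bar V_1(\lambda)/\lambda$ is decreasing by concavity and normalization. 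You instead keep $\kappa_1$ fixed, recast the claim as $\int_0^T g\,h\ge 0$ with $h=w-(1-1/e)$, and exploit the single sign change of $h$ together with the monotonicity of $g=\bar v_1$ to get $\int g\,h\ge g(\sigma)\int h$, reducing everything to the scalar inequality $x\ln(1/x)\le 1/e$ at $x=\kappa_1/T$. Your computation $\int_0^T h=T/e-\kappa_1\ln(T/\kappa_1)$ is correct, and the single-crossing claim holds in both cases (whether the zero of $h$ lies in $(\kappa_1,T)$ or $h$ is already nonpositive just past $\kappa_1$). What your route buys is the avoidance of the variational step and the integration by parts, and in particular of the paper's slightly delicate aside about differentiability at non-integral $\gamma$ and the irrationality of $T/e$; what the paper's route buys is an explicit identification of the worst-case cutoff $T/e$, which connects directly to the tight example of Theorem~\ref{thm:PoA-upper}. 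Both arguments are sound, and your closing observation that equality forces $g\equiv 1$ and $\kappa_1=T/e$ matches the paper's matching lower-bound construction.
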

\begin{proof}
Take a realised equilibrium path: by Corollary \ref{cor:path-efficiency}, restrict attention to the case when buyer~$1$ is a strict monopsonist and when 
the equilibrium path starts from decision node $\0$ for some $T > 0$. Then 
social welfare is decreasing in the final allocation of buyer~$1$. By Theorem \ref{thm:bid-threshold-main}, buyer~$1$ wins at least $\kappa_1(\0)$ items.
Denoting $\kappa_1(\0) := \kappa$ for simplicity, the following is then a lower bound on the social welfare:
$$\int_0^\kappa \bar{v}_1(\tau) \, d\tau + \int_0^{T-\kappa} \bar{v}_2(\tau) \, d\tau$$
Then by Corollary \ref{cor:lb-v2}, we have a lower bound on social welfare:
\begin{align*}
\int_0^\kappa \bar{v}_1(\tau) \, d\tau + \int_0^{T-\kappa} \frac{1}{T-\tau} \int_\kappa^{T-\tau} \bar{v}_1(\lambda) \, d\lambda \, d\tau 
& = \int_0^\kappa \bar{v}_1(\tau) \, d\tau + \int_\kappa^{T} \int_0^{T-\lambda} \frac{1}{T-\tau} \bar{v}_1(\lambda) \, d\tau \, d\lambda \\
& = \int_0^\kappa \bar{v}_1(\tau) \, d\tau + \int_\kappa^{T} \ln \Big( \frac{T}{\lambda} \Big) \bar{v}_1(\lambda) \, d\lambda \\
& \geq \inf_{\gamma \in [0,T]} \int_0^\gamma\bar{v}_1(\tau) \, d\tau + \int_\gamma^{T} \ln \Big( \frac{T}{\lambda} \Big) \bar{v}_1(\lambda) \, d\lambda
\end{align*}
The expression in the infinimum is differentiable almost everywhere, in fact for any non-integral $\gamma$. Then evaluating the first-order condition for $\gamma$:
$$\bar{v}_1(\gamma) \left( 1-\ln \left( \frac{T}{\gamma} \right) \right) = 0 \Rightarrow \gamma = \frac{T}{e}$$
The function is indeed differentiable at $\gamma = T/e$, as $T/e$ is irrational (hence is not an integer). Minimality is ensured by a first derivative test. Plugging back this 
value to find the minimum:
\begin{align*}
\inf_{\gamma \in [0,T]} \int_0^\gamma\bar{v}_1(\tau) \, d\tau + \int_\gamma^{T} \ln \Big( \frac{T}{\lambda} \Big) \bar{v}_1(\lambda) \, d\lambda 
& = \bar{V}_1\Big( \frac{T}{e} \Big) + \int_{T/e}^{T} \ln \Big( \frac{T}{\lambda} \Big) \bar{v}_1(\lambda) \, d\lambda \\
& = \bar{V}_1\Big( \frac{T}{e} \Big) + \ln \Big( \frac{T}{\lambda} \Big) \bar{V}_1(\lambda) \Big|^{T}_{T/e} + \int_{T/e}^{T} \frac{\bar{V}_1(\lambda)}{\lambda} \, d\lambda \\
& = \int_{T/e}^{T} \frac{\bar{V}_1(\lambda)}{\lambda} \, d\lambda \geq \frac{\bar{V}_1(T)}{T} \cdot \Big( T - \frac{T}{e} \Big) = V_1(T) \cdot \Big( 1 - \frac{1}{e} \Big)
\end{align*}

Here, the second equality follows from integration by parts. The inequality then follows since $\bar{V}_1(\lambda)$ is normalised at zero and concave, which implies that $\bar{V}_1(\lambda) / \lambda$ is decreasing in $\lambda$. As $V_1(T)$ is the optimal social welfare, we conclude that resultant social welfare must be at least $(1-1/e)$ times that of an optimal allocation.
\end{proof}

To conclude our discussion, we provide a matching lower bound on the price of anarchy:

\begin{theorem}\label{thm:PoA-upper}
There exist two-buyer sequential multiunit auctions with concave and non-decreasing valuations and $T$ items, 
whose efficiency tends to $(1-1/e)$ as $T$ grows,
given no-overbidding. 
\end{theorem}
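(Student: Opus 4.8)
The plan is to exhibit an explicit family of instances, parametrised by $T$, in which buyer~$1$ is a strict monopsonist yet wins only about a $1/e$ fraction of the items at equilibrium. Concretely, I would set $v_1(k) = 1$ for every $k \in [T]$, and, fixing $\kappa = \lfloor T/e \rceil$ (the nearest integer to $T/e$), take buyer~$2$'s incremental valuations to be exactly the profile of equation~(\ref{eqn:minimalV2}) with this value of the parameter, namely $v_2(j) = 1 - \kappa/(T-j+1)$ for $j \le T - \kappa$ and $v_2(j) = 0$ otherwise. The first routine step is to check this is a legal instance: $v_2(j)$ is non-negative (its smallest positive value is $v_2(T-\kappa) = 1/(\kappa+1) > 0$) and weakly decreasing in $j$ (since $T-j+1$ decreases in $j$), so both $V_1$ and $V_2$ are non-decreasing and concave.

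Next I would compute the relevant greedy quantities at the source~$\0$. Since every $v_2(j) < 1 = v_1(k)$, definition~(\ref{def:duopsony-factor}) gives $f_1(\0) = T$ and $f_2(\0) = 0$, so buyer~$1$ is a (strict) monopsonist and, by Claim~\ref{cl:sdf-0}, remains a monopsonist at every node. A direct evaluation of $\bar{\mu}_1(k|\0) = k - k\cdot v_2(T-k+1)$ then shows $\bar{\mu}_1(k|\0) = k$ for $k \le \kappa$ (where $v_2(T-k+1) = 0$) and $\bar{\mu}_1(k|\0) = \kappa$ for $k \ge \kappa$ (where $v_2(T-k+1) = 1 - \kappa/k$); hence $\kappa_1(\0) = \kappa$ and $\mu_1(\0) = \kappa$.

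I would then use the equilibrium characterisation to pin down the realised allocation. Because buyer~$1$ is a monopsonist, Theorem~\ref{thm:bid-threshold-main} identifies equilibrium and greedy bidding, and Theorem~\ref{thm:greedy-equilibria} describes the outcome. Evaluating threshold prices along the nodes $(0,m)$ yields $p_1((0,m)) = v_1(1|(0,m)) + \mu_1((1,m)) - \mu_1((0,m)+\e_2) = 1 + (\kappa-1) - \kappa = 0$, while $p_2((0,m)) = v_2(m+1) > 0$ for all $m < T-\kappa$; thus buyer~$1$ has the strictly smaller threshold price and buyer~$2$ strictly wins. This is the competition-reduction phase: buyer~$2$ wins each of the first $T-\kappa$ items, with $\kappa_1$ held constant by Lemma~\ref{lem:evolutionMinmaxDemand}, until the node $(0,T-\kappa)$, where $\kappa_1 = t = \kappa$ and the monopsony phase begins (Lemma~\ref{lem:strictly-greater} gives $p_1 > \beta_1 = 0 = b_2$), so buyer~$1$ wins all $\kappa$ remaining items. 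Every comparison is strict, so the equilibrium path is unique and the realised allocation is exactly $(\kappa,\,T-\kappa)$.

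Finally I would compute the efficiency. The optimum is $\opt(\0) = V_1(T) = T$ (all items to buyer~$1$, as each marginal $1 - v_2(\cdot) > 0$), while the realised welfare is $\kappa + \sum_{j=1}^{T-\kappa} v_2(j) = T - \kappa\sum_{i=\kappa+1}^{T} 1/i$ after the substitution $i = T-j+1$. Hence the efficiency equals $1 - \tfrac{\kappa}{T}\big(\sum_{i=1}^{T}\tfrac1i - \sum_{i=1}^{\kappa}\tfrac1i\big)$; letting $T \to \infty$ with $\kappa/T \to 1/e$ and using $\sum_{i=\kappa+1}^T 1/i = \ln(T/\kappa) + o(1) \to \ln e = 1$ gives efficiency $\to 1 - 1/e$. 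The only genuinely delicate points are (i) confirming buyer~$1$ wins \emph{exactly} $\kappa$ items rather than more---which matters because welfare is increasing in buyer~$1$'s allocation, so only this tight case attains the bound, and it is secured by the strictness of the threshold-price comparisons above---and (ii) controlling the integer rounding of $T/e$ inside the harmonic-sum asymptotics, which is routine.
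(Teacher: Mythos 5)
Your construction is the paper's (up to rounding of $T/e$), and your legality check, greedy-quantity computation at the source, and the final harmonic-sum asymptotics are all fine. But there is a genuine error in the step that pins down the allocation. You assert $\mu_1((1,m)) = \kappa-1$ and hence $p_1((0,m)) = 1 + (\kappa-1) - \kappa = 0$; this implicitly treats the inequality of Lemma~\ref{lem:evolutionMinmaxUtility} (namely $\mu_1(\x+\e_1) \geq \mu_1(\x) - v_1(1|\x) + \beta_1(\x) = \kappa - 1$) as an equality, and it is not one here. Computing directly at $\x=(1,m)$ with $t' = T-m-1$: for $k \geq \kappa$ one has $v_2(t'-k+1|\x) = v_2(T-k) = 1 - \kappa/(k+1)$, so $\bar{\mu}_1(k|(1,m)) = k\kappa/(k+1)$, which is \emph{increasing} in $k$; hence $\kappa_1((1,m)) = t'$ (buyer~$1$ demands the entire remaining supply after winning) and $\mu_1((1,m)) = \kappa - \kappa/(T-m) > \kappa - 1$. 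Consequently $p_1((0,m)) = 1 - \kappa/(T-m) = v_2(m+1) = b_2((0,m))$: every round of the first phase is a \emph{tie}, exactly as predicted by the tightness condition in Lemma~\ref{cl:top-marginal}. Your claims that ``buyer~$1$ has the strictly smaller threshold price,'' that ``every comparison is strict,'' and that ``the equilibrium path is unique'' are therefore all false.

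This is not a cosmetic issue: if the ties are broken in favour of buyer~$1$, she moves to $(1,0)$, where she demands the whole supply, and then wins all $T$ items, so the auction is fully efficient. The instance attains efficiency tending to $1-1/e$ only under the tie-breaking rule that favours buyer~$2$, which is precisely what the paper stipulates (``Let all ties be broken in favour of buyer~$2$'') before invoking Theorem~\ref{thm:bid-threshold-main} and Lemma~\ref{cl:top-marginal} to conclude that buyer~$2$ wins $T - \kappa_1(\0)$ items. Since tie-breaking is part of the equilibrium specification in the no-overbidding model (cf.\ Example~3), the theorem is still correct, but your proof needs to replace the strictness argument with an explicit choice of tie-breaking rule and the corresponding appeal to the equilibrium characterisation. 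The remainder of your argument (the welfare of the allocation $(\kappa, T-\kappa)$ and its limit) then goes through as written.
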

\begin{proof}
To see that the limit is tight as $T \rightarrow \infty$, consider the following valuation profile $\forall i \in [T]$:
\begin{align}\label{val-profile}
v_1(i) & = 1 \nonumber \\
v_2(i) & = \max \left\{ \frac{\lfloor T(1-1/e) \rfloor - i + 1}{T - i + 1}, 0 \right\}
\end{align}
Then $f_1(\0) = T$, $\mu_1(\0) = T/e$, and $\kappa_1(\0) \in \{ \lfloor T/e \rfloor, \lceil T/e \rceil \}$. Let all ties be broken in favour 
of buyer~$2$. By Theorem \ref{thm:bid-threshold-main}, buyer~$2$ wins $T-\kappa_1(\0)$ items and buyer~$1$ wins $\kappa_1(\0)$ items -- as 
by Lemma \ref{cl:top-marginal} buyer $2$'s greedy strategies 
will be weakly above those of buyer $1$'s until buyer $1$'s demand binds. 
Therefore, an upper bound on the social welfare at equilibrium is:
\begin{align*}
 \sum_{i = 1}^{\kappa_1(\0)} v_{1}(i) + \sum_{i = 1}^{T-\kappa_1(\0)} v_{2}(i) 
	       & \leq  \sum_{i = 1}^{\lceil T/e \rceil} v_{1}(i|\0) + \sum_{i = 1}^{\lfloor T(1-1/e) \rfloor} v_{2}(i|\0) \\
	       & = \lceil T/e \rceil + \sum_{i = 1}^{\lfloor T(1-1/e) \rfloor} 1 - \frac{T}{e}\cdot \frac{1}{T-i+1}\\
	       & = T - \sum_{i = 1}^{\lfloor T(1-1/e) \rfloor} \frac{T}{e} \cdot \frac{1}{T-i+1}
\end{align*} 
Here the first equality follows from~(\ref{val-profile}), and the fact that $\kappa_1(\0) \leq \lceil T/e \rceil$. 
Because $\opt(\0) = T$, it immediately follows that an upper bound on efficiency is:
$$ 1 - \frac{1}{e}\cdot \sum_{i = 1}^{\lfloor T(1-1/e) \rfloor} \frac{1}{T-i+1}$$
Taking the limit as $T \rightarrow \infty$, we obtain:
\begin{align*}
\lim_{T\rightarrow \infty} 1 - \frac{1}{e}\cdot \sum_{i = 1}^{\lfloor T(1-1/e) \rfloor} \frac{1}{T-i+1}
& = 1 - \lim_{T\rightarrow \infty} \frac{1}{e}\cdot \sum_{i = 1}^{\lfloor T(1-1/e) \rfloor} \frac{1}{T}\cdot  \frac{1}{1-i/T+1/T} \\
& = 1 - \frac{1}{e} \cdot \int_{0}^{1-1/e} \frac{1}{1-x} \, dx \\
& = 1 + \frac{1}{e} \cdot \ln(1-x) \bigg{|}_{x = 0}^{1-1/e} \\
& = 1 - \frac{1}{e}
\end{align*}
Here the second equality follows by interpreting the summation as a Riemann integral.
\end{proof}

\section*{Acknowledgements}
We are very grateful to Rakesh Vohra for discussions on this topic.

\newpage

\end{document}